\newcommand{\ignore}[1]{}
\newcommand{\nop}[1]{}
\newcommand{\eat}[1]{}
\newcommand{\kw}[1]{{\ensuremath{\mathsf{#1}}}\xspace}
\newcommand{\stitle}[1]{\vspace{1ex} \noindent{\bf #1}}
\long\def\comment#1{}
\newtheorem{example}{Example}
\newcommand{\sota}{\kw{SOTA}}
\newcommand{\kc}{\kw{KC}}
\newcommand{\dfs}{\kw{DFS}}
\newcommand{\bfs}{\kw{BFS}}
\newcommand{\bit}{\kw{BIT}}
\newcommand{\push}{\kw{push}}
\newcommand{\dfn}{\kw{DFN}}
\newcommand{\SpanTree}{\kw{SpanTree}}
\newcommand{\LEwalk}{\kw{LEwalk}}
\newcommand{\ForestMC}{\kw{ForestMC}}
\newcommand{\Approx}{\kw{ApproxKemeny}}
\newcommand{\DynamicMC}{\kw{DynamicMC}}
\newcommand{\RefinedMC}{\kw{RefinedMC}}
\newcommand{\StaticAlg}{\kw{TTF}}
\newcommand{\Basic}{\kw{BSM}}
\newcommand{\Improved}{\kw{ISM}}
\newcommand{\Bipush}{\kw{Bipush}}
\newcommand{\LazyForward}{\kw{LazyForward}}
\newcommand{\TrackingPPR}{\kw{TrackingPPR}}
\newcommand{\powergrid}{\kw{PowerGrid}}
\newcommand{\hepth}{\kw{Hep\textrm{-}\xspace th}}
\newcommand{\astroph}{\kw{Astro\textrm{-}\xspace ph}}
\newcommand{\emailenron}{\kw{Email\textrm{-}\xspace enron}}
\newcommand{\amazon}{\kw{Amazon}}
\newcommand{\dblp}{\kw{DBLP}}
\newcommand{\youtube}{\kw{Youtube}}
\newcommand{\roadPA}{\kw{roadNet\textrm{-}\xspace PA}}
\newcommand{\roadCA}{\kw{roadNet\textrm{-}\xspace CA}}
\newcommand{\orkut}{\kw{Orkut}}
\newcommand{\highschool}{\kw{HighSchool}}
\newcommand{\friendster}{\kw{Friendster}}
\newcommand{\twitter}{\kw{Twitter}}
\newif\iffullversion
    \definecolor{RM}{rgb}{0,0,0} 
    \definecolor{R2}{rgb}{0,0,0} 
    \definecolor{R3}{rgb}{0,0,0} 
    \definecolor{R4}{rgb}{0,0,0} 
    \definecolor{RM}{rgb}{0,0,1} 
    \definecolor{R2}{rgb}{0,0,1} 
    \definecolor{R3}{rgb}{0,0,1} 
    \definecolor{R4}{rgb}{0,0,1} 
\begin{document}

\title{Scalable and Provable Kemeny Constant Computation on Static and Dynamic Graphs: A 2-Forest Sampling Approach}

\author{
    {Cheng Li,
    Meihao Liao,
    Rong-Hua Li,
    Guoren Wang}
}
\affiliation{
	Beijing Institute of Technology
	\country{China}
}
\email{
	lichengbit@bit.edu.cn,
    mhliao@bit.edu.cn,
    lironghuabit@126.com,
    wanggrbit@gmail.com
}

\begin{abstract}
Kemeny constant, defined as the expected hitting time of random walks from a source node to a randomly chosen target node, is a fundamental metric in graph data management with numerous real-world applications. However, exactly computing the Kemeny constant on large graphs is highly challenging, as it requires inverting large graph matrices. Existing solutions primarily focus on approximate methods, such as random walk sampling, which still require large sample sizes and lack strong theoretical guarantees. To overcome these limitations, in this paper, we propose a novel approach for approximating the Kemeny constant via 2-forest sampling. We first present an unbiased estimator of the Kemeny constant in terms of spanning trees, by introducing a \textit{path mapping} technique that establishes a direct correspondence between spanning trees and specific sets of 2-forests. Compared to random walk-based estimators, 2-forest-based estimators yield leads to a better theoretical bound. Next, we design efficient algorithms to sample and traverse spanning trees, leveraging advanced data structures such as the Binary Indexed Tree (\bit) for efficiency optimization. Our theoretical analysis shows that our method computes the Kemeny constant with relative error $\epsilon$ in $O\left(\frac{\Delta^2\Bar{d}^2}{\epsilon^2}(\tau+n\min(\log n, \Delta))\right)$ time, where $\tau$ is the time to sample a spanning tree, $\Bar{d}$ is the average degree, and $\Delta$ is the diameter of the graph. This complexity is near-linear in practical scenarios. Furthermore, most existing methods are designed for static graphs and lack mechanisms for dynamic updates. To address this, we propose two sample maintenance strategies that efficiently update samples partially, while preserving estimation accuracy in dynamic graphs. Extensive experiments on 10 large real-world datasets show that our method outperforms state-of-the-art (\sota) approaches in both efficiency and accuracy, for both static and dynamic graphs.
\end{abstract}

\maketitle

\section{Introduction}

Kemeny constant (\kc), introduced by Kemeny and Snell~\cite{kemeny1969}, is a fundamental metric closely related to hitting time, which is a key quantity associated with random walks. Formally, \kc is defined as the expected steps of a random walk that starts from a fixed initial node and stops until reaching a randomly chosen target node according to the stationary distribution. A notable property of \kc is that it remains invariant regardless the choice of the starting node~\cite{kemeny1969}. Intuitively, a smaller \kc value suggests better overall connectivity of the graph, as nodes are easily reachable from one another. As a unique global network invariant, the Kemeny constant has been applied extensively in various domains of network analysis, including serving as an indicator of network robustness~\cite{patel2015robotic}, analyzing the disease transmission~\cite{app_covid19}, and quantifying global performance costs in road networks~\cite{crisostomi2011google}. In recent years, due to its intrinsic connection to hitting times, \kc has also been widely used in graph data management, such as network centrality representation~\cite{app_centrality, app_markovcentrality}, graph clustering~\cite{app_clustering, ht_clustering_chen2008clustering}, and recommendation systems~\cite{levene2002kemeny, vldb_app_ht_reccomandation_yin2012challenging}, among others.

It is well known that the computation of \kc can be reformulated as an eigenvalue problem involving the Laplacian or transition matrix of a graph~\cite{doyle2009kemeny, hunter2014role, levene2002kemeny, kemeny1969}. However, classical methods for computing eigenvalues require a time complexity of $O(n^3)$, making them impractical for large-scale graphs. To overcome this limitation, various sampling-based approaches have been developed to estimate \kc with improved scalability at the cost of some accuracy. Random walk-based methods, such as \DynamicMC~\cite{li2021efficient} and \RefinedMC~\cite{xia2024efficient}, approximate \kc by simulating truncated random walks. Although they avoid the unacceptable computational cost of eigenvalues, these methods still require a large number of samples to achieve acceptable accuracy. More recent methods, including \LEwalk~\cite{liao2023scalable} and \ForestMC~\cite{xia2024efficient}, leverage the connection between \kc and loop-erased random walks (LERWs) to improve sampling efficiency. While these methods may offer some improvements over standard random walks, they still lack strong theoretical guarantees, because the number of steps of a LERWs can be unbounded in worst case, making it difficult to bound sampling size for all kind of graphs. Consequently, LERW-based methods may fail to provide reliable estimates with great efficiency on real-life large graphs.

Recent studies increasingly focus on exploring the connection between spanning trees or spanning forests with the Laplacian matrix, to address classic random walk-related problems, such as the computation of PageRank \cite{ppr_tree1_liao2022efficient, ppr_tree2_liao2023efficient} and effective resistance~\cite{res4_liao2023efficient, er_liao2024efficient}. Since the scale of a tree or forest can be bounded by the graph diameter, which is more stable on various graphs compared to the length of random walks. This advantage offers the potential for tighter theoretical guarantees about the sample size. Motivated by this insight, we propose a novel formula of \kc expressed by the volume of 2-forests. Specifically, a 2-forest is a spanning forest consisting of exactly two disjoint trees, and the volume of a 2-forest is related to the sum of the degrees, in the underlying graph, of all nodes in one of its two trees (see Theorem~\ref{theo:FuvKC}).

Building on the proposed 2-forest formula of \kc, we develop a theoretical framework and an efficient algorithm to estimate \kc via sampling 2-forests. A key technique of our approach is \textit{path mapping} that maps a spanning tree to a set of 2-forests, enabling us to construct an unbiased estimator for \kc in terms of spanning trees. As a result, we can estimate \kc by sampling spanning trees and then deriving 2-forests from them, effectively bypassing the challenges associated with directly sampling 2-forests. To calculate the volume of 2-forests obtained from each spanning tree, we design a depth-first search (\dfs) algorithm and optimize it using Binary Indexed Trees (\bit)~\cite{Fenwick94}. Compared to the naive spanning tree traversal method used in similar tasks~\cite{res4_liao2023efficient, liao2023scalable}, our approach reduces the time complexity from $O(n\Delta)$ to $O(n \cdot \min(\Delta, \log n))$, where $\Delta$ is the diameter. This significantly improves performance on large-scale graphs. Finally, we provide theoretical guarantees for both the correctness and computational complexity of the proposed algorithm. The results show that our method computes the Kemeny constant with relative error $\epsilon$ in $O\left(\frac{\Delta^2\bar{d}^2}{\epsilon^2}(\phi+n\min(\log n, \Delta))\right)$ time, where $\phi$ is the time to sample a spanning tree, $\bar{d}$ is the average degree, and $\Delta$ is the diameter of the graph. This complexity is near-linear in practical scenarios.

Moreover, most existing methods for estimating \kc are designed for static graphs, while real-world networks are often dynamically changing. To address this gap, we develop two sample maintenance strategies that support efficient updates without requiring full recomputation on dynamic graphs. Benefit from the introduction of spanning trees, we can selectively adjust an amount of spanning tree samples according to the change of the entire sample space caused by updates, thereby indirectly preserving the correctness of the induced 2-forests. Specifically, when an edge is inserted, the basic maintenance method replaces a portion of samples by trees that include the new edge. When an edge is deleted, all spanning trees containing the deleted edge are substituted with newly sampled spanning trees from the updated graph. To further enhance efficiency, we introduce an improved sample maintenance method incorporating \textit{link-cut} and \textit{cut-link} operations. These operations enable the transformation of spanning trees that lack a specific edge into ones that contain it, significantly reducing update overhead. Although the resulting spanning trees may not follow a uniform distribution, we derive a correction mechanism that computes the deviation from uniformity and adjusts the weight of the samples accordingly. This ensures accurate \kc estimation while achieving substantial computational efficiency on dynamic graphs. Both two methods are faster than resampling from scratch. The correctness and time complexity of two methods are discussed.

We conduct extensive experiments to evaluate the performance of the proposed methods. The results show that, for static graphs, our new algorithm outperforms \sota methods, with significant improvements on some graphs. For evolving graphs, the two proposed maintenance strategies achieve significant speed-ups compared to static algorithm recalculations. Specifically, the basic maintenance algorithm provides up to an improvement of more than one order of magnitude, while the improved maintenance algorithm achieves a larger speedup, but with a slight sacrifice in accuracy. In summary, the main contributions of this paper are as follows.

\stitle{Novel Approximate Algorithm.} 
We propose a novel forest formula of Kemeny constant and design an unbiased estimator by introducing a technique called \textit{path mapping}, which establishes a direct connection between spanning trees and 2-forests. Based on this, we propose a sampling-based algorithm for approximating Kemeny constant and optimize the data structure with Binary Index Trees, which enables nearly linear time complexity and makes it easy to extend to dynamic graphs. A detailed theoretical analysis of the proposed algorithm is also provided.

\stitle{New Sample Maintenance Methods.} 
We propose two novel methods, \Basic and \Improved, to maintain the correctness of spanning tree samples as the graph updates, substantially reducing computational cost compared to rerunning static algorithms after each update. \Basic adjusts the spanning tree samples by pre-computing changes in the sample space, while \Improved further enhances efficiency by reusing prior computations.  We provide theoretical analysis showing that, even in the worst case, our algorithms are faster than re-sampling methods, with only a slight sacrifice in accuracy.

\stitle{Extensive Experiments.} 
We conduct comprehensive experiments on 10 real-world graphs to evaluate our algorithms. On static graphs, our method achieves up to an order of magnitude speedup over state-of-the-art algorithms while maintaining the same level of estimation accuracy. For example, on the road network \roadPA, our algorithm reaches a relative error of 0.03 in just 26 seconds, whereas two \sota methods \ForestMC and \SpanTree require 297 seconds and 676 seconds, respectively, to achieve similar accuracy. On dynamic graphs, we further demonstrate the efficiency of our sample maintenance strategies. For the large social networks \orkut, which contains 3 million nodes and 11 million edges, \Improved completes each deletion and insertion update in an average of 1.6 seconds and 4.3 seconds, respectively. In contrast, the fast re-sampling algorithm requires 109 seconds, demonstrating an improvement of approximately one order of magnitude speedup. 

\section{Preliminaries}\label{sec:preliminaries}
\subsection{Notations and Concepts} \label{subsec:notations}

\begin{table}[t!]\color{R2}
	\centering
	\caption{\color{R2} Frequently used notations.} \label{tab:notations}
	\begin{tabularx}{\linewidth}{|>{\raggedright\arraybackslash}p{1.5cm}|>{\raggedright\arraybackslash}X|}
		\hline
		\textbf{Notation} & \textbf{Description} \cr
        \hline
        $G=(V,E)$ & An undirected graph $G$ with node set $V$ and edge set $E$ \cr
        \hline
        $n, m$ & The number of nodes and edges in $G$, respectively \cr
        \hline
        $d(v)$ & The degree of node $v$ \cr
        \hline
        $\kappa(G)$ & The Kemeny constant of the graph $G$ \cr
        \hline
        $T_1\cup T_2$ & A 2-forest consisting of two trees, where $T_1$ and $T_2$ denote the node sets of the two trees\cr
        \hline
        $\tau$ & A spanning tree of the graph \cr
        \hline
        $\Gamma, \mathbb{F}$ & The set of all spanning trees and 2-forests of the graph, respectively \cr
        \hline
        $\mathrm{vol}(T_1)$ & The volume of node set $T_1$, defined as $\mathrm{vol}(T_1)=\sum_{v \in T_1} d(v)$\cr
        \hline
        $\mathcal{P}, \mathcal{P}'$ & A simple, directed path $\mathcal{P}$ and its reverse path $\mathcal{P}'$\cr
        \hline
        $Sub(\tau, u)$ & The node set of the subtree rooted at $u$ in tree $\tau$ \cr
        \hline
        $\epsilon$ & The relative error parameter \cr
        \hline
	\end{tabularx}
\end{table}
Let $G=(V,E)$ be a simple, connected, undirected graph with $n=|V|$ nodes and $m=|E|$ edges. The adjacency matrix $A$ of $G$ is a $n\times n$ matrix whose $(i,j)$ entry is $1$ if and only if edge $(i, j)\in E$ and 0, otherwise. The degree matrix $D$ of $G$ is a diagonal matrix where each entry $D_{ii}=d_i=\sum_{j=1}^{n} A_{ij}$ represents the degree of node $i$.

A random walk on graph $G$ is a stochastic process. At each step, the walker moves to a randomly chosen neighbor of the current node. The transition probability is described by the matrix $P=(p_{ij})=D^{-1}A$, where $p_{ij}$ is the probability of moving from node $i$ to node $j$. It is well known that the stationary distribution $\pi$ of a random walk on undirected graphs is proportional to node degrees, i.e., $\pi_i = d_i / 2m$. Table~\ref{tab:notations} lists the notations that are frequently used in this paper.

The hitting time $H(i, j)$ is a fundamental measure in the study of random walks \cite{lovasz1993random, condamin2007first}, defined as the expected number of steps required to visit the node $j$ for the first time, starting from a node $i$. For an arbitrarily fixed node $i$, the expected hitting time from $i$ to any other node $j$, where $j$ is chosen according to the stationary distribution, is a constant regardless of the choice of $i$. This constant is known as Kemeny constant \cite{kemeny1969}, denoted as $\kappa(G)$, and is given by $\kappa(G) = \sum_{j} \pi_{j} H(i, j)$. 

The Kemeny constant can also be expressed in terms of the pseudo-inverse of the normalized Laplacian matrix. The Laplacian matrix $L$ and the normalized Laplacian matrix $\mathcal{L}$ is defined as $L=D-A$ and $\mathcal{L}=D^{-1/2}LD^{-1/2}=I-D^{-1/2}AD^{-1/2}$, respectively. Let $0=\sigma_1 \leq \cdots \leq \sigma_n$ be the eigenvalues of $\mathcal{L}$, with the corresponding eigenvectors $u_1, \cdots, u_n$. The pseudo-inverse of $\mathcal{L}$ is defined as $\mathcal{L}^{\dagger}=\sum_{i=2}^{n}\frac{1}{\sigma_{i}}u_{i}u_{i}^{\top}$. 
According to \cite{lovasz1993random}, the Kemeny constant can be represented as the trace of the pseudo-inverse of the normalized Laplacian matrix, i.e., 
\begin{equation}\label{eq:KC_eigen}
    \kappa(G)=\mathrm{Tr}(\mathcal{L}^{\dagger})=\sum_{i=2}^{n}\frac{1}{\sigma_i}.
\end{equation}

From Eq. (\ref{eq:KC_eigen}), we see that the Kemeny constant can be computed exactly by finding the eigenvectors of the normalized Laplacian matrix, an $O(n^{3})$ operation. While theoretically sound, this approach becomes prohibitively expensive for large-scale graphs. Moreover, real-world networks often evolve dynamically, making the challenge of efficiently tracking the Kemeny constant over time even more acute. To address these limitations, we aim to develop scalable algorithms to approximate the Kemeny constant on both static and evolving graphs. Below, we provide a concise overview of the \sota methods for approximating \kc on large graphs.

\subsection{Existing \sota Methods and Their Defects} \label{subsec:existingmethods}

In this section, we provide an overview of algorithms for computing \kc, which can be broadly categorized into three classes: matrix-related methods, truncated random walk-based methods, and loop-erased random walk-based methods. For each category, we briefly discuss their underlying principles and limitations.

\stitle{Matrix Related Methods.} As shown in Eq.~(\ref{eq:KC_eigen}), \kc can be expressed by eigenvalues of the normalized Laplacian matrix $\mathcal{L}$. A naive solution is solving eigenvalues in $O(n^3)$ time, which is impractical for large-scale graphs. Xu et al.~\cite{xu2020power} proposed \Approx based on Hutchinson's Monte Carlo \cite{hutchinson1989stochastic}. It approximates the trace of $\mathcal{L}^{\dagger}$, and thus \kc as: $\kappa(G)=\mathrm{Tr}(\mathcal{L}^{\dagger})\approx\frac{1}{M}\sum_{i=1}^{M}x_i^{\top}\mathcal{L}^{\dagger}x_i,$ where $x_i$ are Rademacher random vectors, with each entry independently taking a value of $1$ or $-1$ with equal probability. To compute the quadratic forms of $\mathcal{L}^{\dagger}$, \Approx transforms the problem into solving Laplacian linear systems: $x_i^{\top}\mathcal{L}^{\dagger}x_i = \|BL^{\dagger}y_i\|^{2}$, where $y_i=D^{1/2}(I-\frac{1}{2m}D^{1/2}\mathbf{1}\mathbf{1}^{\top}D^{1/2})$. By using a Laplacian Solver to compute $L^{\dagger}y_i$, \kc can be approximated. However, the efficiency and accuracy of \Approx are limited by the performance of the specific Laplacian solver used.

\stitle{Truncated Random Walk Based Methods.} \DynamicMC\cite{li2021efficient} and \RefinedMC \cite{xia2024efficient} utilize truncated random walks to approximate \kc. This kind of approach takes advantage of the relationship between \kc and the transition matrix $P$, that is, 
\begin{equation}
    \kappa(G)=n-1+\sum_{k=1}^{\infty}[\mathrm{Tr}(P^{k})-1].
\end{equation} 
The $i$-th diagonal entry of $P^{k}$ represents the transition probability from $i$ to itself after $k$ steps, which becomes negligible when $k$ is large. Therefore, by choosing a sufficiently large truncation length $k$ and estimating the diagonal entry of $P^{k}$ using truncated random walks initiated from each node, these methods can achieve a small approximation error. To improve performance, \DynamicMC employs GPU acceleration to parallelize walk simulations, while \RefinedMC improves sampling efficiency by optimizing sample size, truncation length, and the number of starting nodes. However, because of the lack of strong theoretical guarantees, the improvement effect of these optimizations is limited. As a result, these methods has been proven less efficient compared to LERW-based methods as shown in~\cite{xia2024efficient}.

\stitle{Loop-Erased Random Walk (LERW) Based Methods.} Recently, the relationship between $\mathcal{L}^{-1}$ with $(I-P_v)^{-1}$ has been independently explored by \cite{xia2024efficient} and \cite{liao2023scalable} from the perspective of resistance distance and the inverse of the Laplacian submatrix $L_{v}^{-1}$, respectively. Specifically, \kc can be expressed as:
\begin{equation} \label{eq:KC_TrI-P}
\kappa(G)=\mathrm{Tr}(\mathcal{L}^{\dagger})=\mathrm{Tr}(I-P_v)^{-1}+\frac{(\mathcal{L}^{\dagger})_{vv}}{\pi_v}.
\end{equation}

Both methods, \ForestMC proposed in \cite{xia2024efficient} and \LEwalk proposed in \cite{liao2023scalable}, employ loop-erased random walks (LERWs) to approximate $\mathrm{Tr}(I-P_v)^{-1}$, but differ in their computation of $(\mathcal{L}^{\dagger})_{vv}/\pi_v$: \ForestMC~\cite{xia2024efficient} uses truncated random walks, while \LEwalk~\cite{liao2023scalable} utilizes $v$-absorbed random walks. The LERW technique itself is well-established, most notably forming the basis of Wilson's algorithm for uniform random spanning trees sampling \cite{wilson1996generating}.

\iffullversion
The Wilson algorithm constructs a spanning tree through an iterative process: it begins with a single-node tree and sequentially adds LERWs trajectory starting from remaining nodes in arbitrary order. Each LERW terminates upon hitting the current tree, and its acyclic path is incorporated into the tree. Crucially, the expected total length of these LERWs is equal to $\mathrm{Tr}(I-P_v)^{-1}$, allowing an efficient approximation of the first term in Eq.~(\ref{eq:KC_TrI-P}). By leveraging LERWs, we can obtain diagonal-related information of the matrix inverse in nearly linear time, significantly improving efficiency compared to performing $n$ independent random walk from each node. As a result, this technique has been widely adopted in recent studies on single-source problems or tasks involving matrix diagonals~\cite{ppr_tree1_liao2022efficient, angriman2020approximation, res4_liao2023efficient}, offering a practical alternative to traditional random walk based methods.
\else
\fi

However, analyzing the variance of LERW-based methods remains challenging, primarily due to the unbounded length of loop-erased random walks. In the worst case, the number of steps can grow arbitrarily large, causing such methods to fail on certain graphs. Although \ForestMC provides a theoretical guarantee by bounding the absolute error with high probability as the time complexity of algorithm is $O(\epsilon^{-2}\Delta^{2}d_{max}^{2\Delta}\log^{3}n\cdot \mathrm{Tr}(I-P_v)^{-1})$, the term $d_{max}^{2\Delta}$ makes it impractical for most real-world graphs. There is still no good theoretical guarantee for this type of method. Moreover, all aforementioned methods are designed for static graphs and cannot handle dynamic updates efficiently. Even minor modifications to the graph require complete recomputation, significantly limiting their applicability in evolving graph scenarios. 

\begin{figure*}[ht!]
    \includegraphics[width=0.8\linewidth]{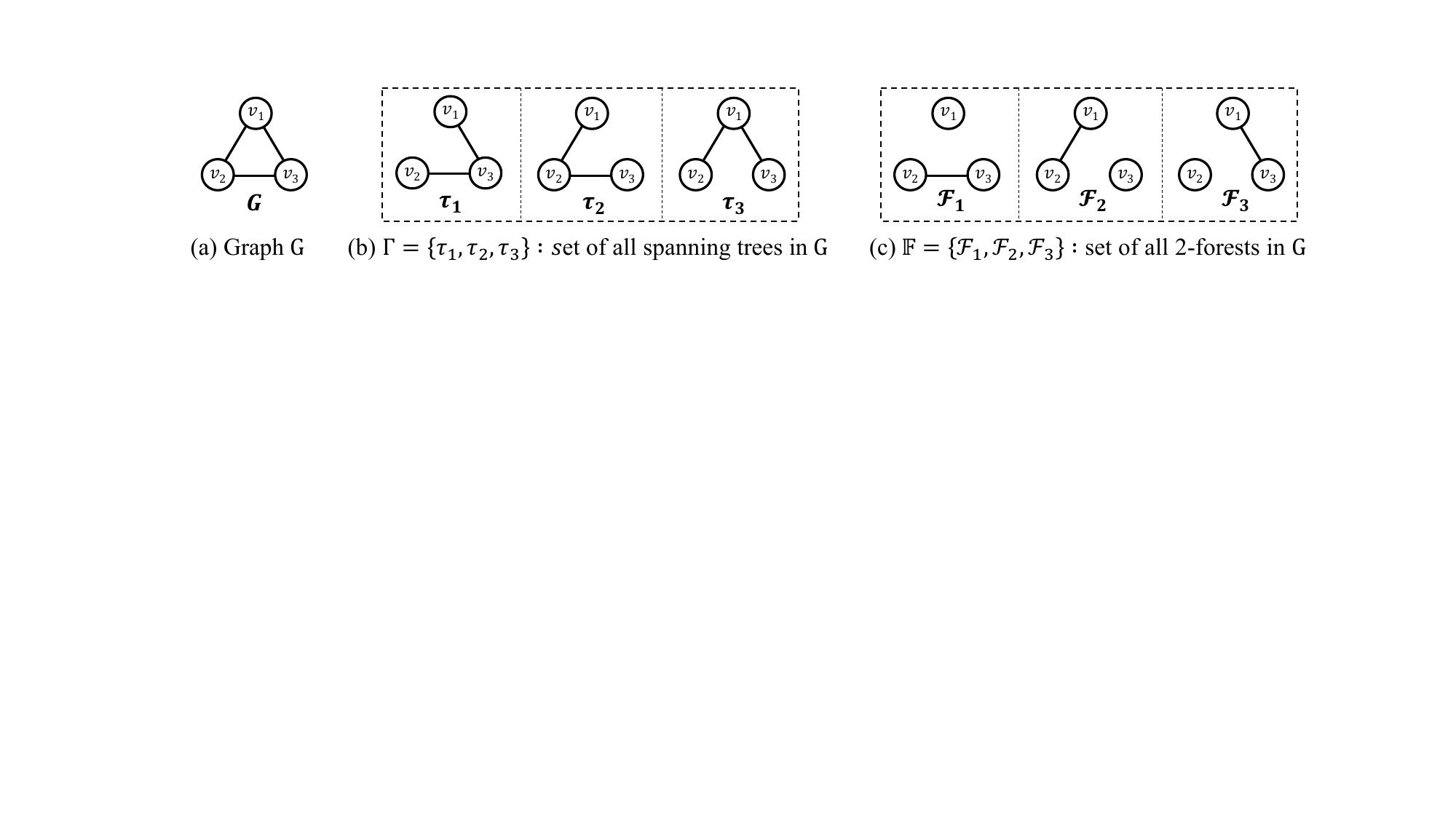}
    \vspace{-0.4cm}
    \caption{An example graph, its spanning trees and 2-forests}
    \Description{An example graph, its spanning trees and 2-forests}
    \label{fig:graph tree forest}
    \vspace{-0.3cm}
\end{figure*}

\section{\kc Estimation Algorithm} \label{sec:static algorithm}

In this section, we propose a novel sampling-based algorithm for approximating the Kemeny constant. We begin by a new formulation of \kc based on 2-forests. Next, we introduce a technique that can maps spanning trees to 2-forests, and we leverage this to design an unbiased estimator for \kc. Based on this framework, we develop a sampling algorithm that estimates \kc via sampling uniform random spanning trees (UST), and further optimize it using a Binary Index Tree and depth-first search. The correctness and complexity analysis for proposed algorithm are also discussed.

\subsection{New Forest Formula of \kc} \label{subsec:NewFormula}

First, we derive a new expression for \kc using 2-forests. Given two fixed nodes $u$ and $v$, we consider the set of 2-forests in which $u$ and $v$ belong to different trees, denoted as $\mathbb{F}_{u|v}$. Utilizing $n-1$ such sets $\mathbb{F}_{u|v}$, we present a new formula for computing \kc.

\begin{theorem}[Forest Formula of \kc]\label{theo:FuvKC}
  \begin{equation} \label{eq:FuvKC}
  \kappa(G) = \frac{1}{2m|\Gamma|}\sum_{u\neq r}d(u)\sum_{T_1\cup T_2 \in \mathbb{F}_{r\mid u} \atop r\in T_1} \mathrm{vol}(T_1),
  \end{equation}
  {\color{RM} where $\Gamma$ denotes the set of all spanning trees of $G$, $r$ is an arbitrarily fixed node, $\mathrm{vol}(T_1) = \sum_{u\in T_1}d(u)$ represents the volume of the tree containing $r$, and $\mathbb{F}_{r\mid u}$ is the set of 2-forests in which $r$ and $u$ belong to different trees.}
\end{theorem}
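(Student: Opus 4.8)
The plan is to route the forest formula through the classical link between hitting times and the reduced Laplacian, and only at the end convert the matrix quantities into forest counts via the matrix-tree theorem. I start from the defining identity $\kappa(G)=\sum_{u}\pi_u H(r,u)$ for the arbitrarily fixed source $r$. Substituting $\pi_u=d(u)/2m$ and discarding the vanishing term $H(r,r)=0$, the whole statement reduces to proving the single-target identity
\begin{equation*}
H(r,u)=\frac{1}{|\Gamma|}\sum_{T_1\cup T_2\in\mathbb{F}_{r\mid u},\; r\in T_1}\mathrm{vol}(T_1)\qquad\text{for every }u\neq r,
\end{equation*}
after which plugging back into $\kappa(G)=\tfrac{1}{2m}\sum_{u\neq r}d(u)H(r,u)$ immediately yields Eq.~(\ref{eq:FuvKC}).

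To obtain this single-target identity I would first derive the linear system governing the hitting times to the fixed target $u$. Writing the first-step equation $H(x,u)=1+\sum_y p_{xy}H(y,u)$ for each $x\neq u$ and multiplying through by $d(x)$ converts it, using $p_{xy}=A_{xy}/d(x)$, into $[L\,h]_x=d(x)$, where $h$ is the vector of hitting times $H(\cdot,u)$. Since $H(u,u)=0$, the $u$-th column contributes nothing and I may delete the $u$-th row and column, obtaining $L_u\,h_{-u}=d_{-u}$, where $L_u$ is the Laplacian with row and column $u$ removed and $d_{-u}$ is the degree vector restricted to $V\setminus\{u\}$. Hence $H(r,u)=[L_u^{-1}d_{-u}]_r=\sum_{w\neq u}(L_u^{-1})_{rw}\,d(w)$. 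The key remaining step is to read off the entries of $L_u^{-1}$ combinatorially: by the matrix-tree theorem $\det(L_u)=|\Gamma|$, and by its all-minors extension (Cramer's rule applied to the double cofactor obtained by deleting rows/columns $u,w$ and $u,r$ from $L$), each entry equals $(L_u^{-1})_{rw}=\tfrac{1}{|\Gamma|}\,\bigl|\{T_1\cup T_2\in\mathbb{F}:\,r,w\in T_1,\ u\in T_2\}\bigr|$, i.e. it counts the spanning 2-forests in which $r$ and $w$ lie together in the tree $T_1$ not containing $u$. Substituting this and exchanging the order of summation, the degree weights regroup over each qualifying forest as $\sum_{w\in T_1}d(w)=\mathrm{vol}(T_1)$, which is exactly the claimed expression for $H(r,u)$.

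I expect the main obstacle to be the combinatorial identification of $(L_u^{-1})_{rw}$. The all-minors matrix-tree theorem involves a double deletion of rows and columns, and its sign factor $(-1)^{r+w}$ must be shown to cancel so that the cofactor becomes a genuine nonnegative count of precisely the 2-forests separating $u$ from $\{r,w\}$ with $r$ and $w$ on the same side. Care is also needed to verify that this family coincides exactly with the subset of $\mathbb{F}_{r\mid u}$ having $r\in T_1$, and that the diagonal case $w=r$ is consistent (there it degenerates to counting all of $\mathbb{F}_{r\mid u}$, recovering $(L_u^{-1})_{rr}=|\mathbb{F}_{r\mid u}|/|\Gamma|$). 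By contrast, the first-step analysis of the hitting-time system and the final exchange of summation are routine and require no further subtlety.
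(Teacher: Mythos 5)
Your proposal is correct, and it takes a genuinely different route from the paper. The paper's proof is a short rearrangement of an imported identity: it invokes Corollary~1.7 of \cite{chung2023forest}, namely $\kappa(G)=\frac{1}{2m|\Gamma|}\sum_{T_1\cup T_2\in\mathbb{F}}\mathrm{vol}(T_1)\mathrm{vol}(T_2)$, then designates $T_1$ as the tree containing the fixed node $r$, expands $\mathrm{vol}(T_2)=\sum_{u\in T_2}d(u)$, and swaps the order of summation to land on Eq.~(\ref{eq:FuvKC}). You instead rebuild the formula from first principles: starting from $\kappa(G)=\sum_{u}\pi_u H(r,u)$, you reduce the theorem to the per-target identity $H(r,u)=\frac{1}{|\Gamma|}\sum_{T_1\cup T_2\in\mathbb{F}_{r\mid u},\,r\in T_1}\mathrm{vol}(T_1)$, derive $H(r,u)=[L_u^{-1}d_{-u}]_r$ from the first-step equations, and identify $(L_u^{-1})_{rw}$ as $\frac{1}{|\Gamma|}$ times the number of 2-forests with $r$ and $w$ in one tree and $u$ in the other. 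The one delicate point you flag, the sign in the all-minors matrix-tree theorem, does go through: for the double minor with rows $\{u,w\}$ and columns $\{u,r\}$ deleted, only the matching that pairs $u$ with itself is realizable (the vertex $u$ cannot lie in two disjoint components), and that term carries positive sign, so the cofactor is a genuine nonnegative count; this identification is classical and is in fact the forest formula for the Green's function that underlies \cite{chung2023forest} in the first place, while your diagonal consistency check $(L_u^{-1})_{rr}=|\mathbb{F}_{r\mid u}|/|\Gamma|$ is precisely the familiar effective-resistance formula. The trade-off between the two arguments: the paper's is two lines but rests entirely on the cited corollary, whereas yours is self-contained modulo Chaiken's theorem and proves a strictly finer intermediate statement, a 2-forest formula for each individual hitting time $H(r,u)$, of which the theorem is the stationary-distribution-weighted aggregate; that per-target identity (easily verified, e.g., on $K_3$ it gives $H(r,u)=2$ and $\kappa=4/3$, matching the spectrum of $\mathcal{L}$) could be of independent use in the paper's framework.
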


\iffullversion
    \begin{proof}
      \begin{align*}
        \kappa(G) &= \frac{1}{2m|\Gamma|}\sum_{T_1 \cup T_2 \in \mathbb{F}} \mathrm{vol}(T_1)\mathrm{vol}(T_2) \\
        &= \frac{1}{2m|\Gamma|}\sum_{T_1 \cup T_2 \in \mathbb{F} \atop r\in T_1} \mathrm{vol}(T_1) \sum_{u\in T_2} d(u) \\
        &= \frac{1}{2m|\Gamma|}\sum_{u\neq r}\sum_{T_1\cup T_2 \in \mathbb{F} \atop r\in T_1, u\in T_2} \mathrm{vol}(T_1)d(u) \\
        &= \frac{1}{2m|\Gamma|}\sum_{u\neq r}d(u)\sum_{T_1\cup T_2 \in \mathbb{F}_{r\mid u} \atop r\in T_1} \mathrm{vol}(T_1).
      \end{align*}
    The first equality follows from Corollary 1.7 in~\cite{chung2023forest}, which expresses \kc in terms of all 2-forests, where $\mathbb{F}$ denotes the set of all 2-forests in $G$. Note that for any given 2-forest $T_1\cup T_2$, the designation of which tree is $T_1$ or $T_2$ does not affect anything. Therefore, we deduce the second equation by simply treating the tree containing $r$ as $T_1$, and utilizing the definition of $\mathrm{vol}(T)$.
    \end{proof}
\else
  \begin{proof}[Proof Sketch]\color{RM}
    The proof is based on Corollary 1.7 in~\cite{chung2023forest}, which expresses \kc in terms of all 2-forests. By fixing a node $r$ and rearranging the summation, we derive the desired formula. For detailed proof, please refer to the full version of this paper~\cite{full}.
  \end{proof}
\fi

\begin{example}\label{ex:forest_formula}
    \color{R3}
    As shown in Fig.~\ref{fig:graph tree forest}, (a) illustrates a graph $G$, (b) displays all its spanning trees, and (c) shows all its 2-forests. Assume node $v_1$ is selected as $r$. Then, $\mathbb{F}_{v_2|v_1} = \{\mathcal{F}_1, \mathcal{F}_3\}$ and $\mathbb{F}_{v_3|v_1} = \{F_1, F_2\}$. 
    
    For any given 2-forest, we denote the node set of the tree containing the root $r$ as $T_1$, and the other as $T_2$. For example, in $\mathcal{F}_1$, we have $T_1 = \{v_1\}$ and $T_2 = \{v_2, v_3\}$. Hence, $\mathrm{vol}(T_1) = d(v_1) = 2$. Note that this degree refers to $v_1$'s degree in the graph $G$, not in the forest.
\end{example}

Theorem~\ref{theo:FuvKC} expresses \kc in terms of $n-1$ specific subsets of $\mathbb{F}$, denoted as $\mathbb{F}_{r \mid u}$. To design an unbiased estimator based on this formulation, it is crucial to establish a relationship between spanning trees and $\mathbb{F}_{r \mid u}$, as the normalization factor in Eq.~(\ref{eq:FuvKC}) involves $\Gamma$. To bridge this gap, we introduce a technique called \textit{path mapping}, which constructs a special correspondence between spanning trees and 2-forests by fixing a path. 

An interesting observation about the 2-forests in $\mathbb{F}_{r|u}$ is that they can be obtained from a spanning tree by deleting any edge along the path between $u$ to $v$, thereby ensuring that $u$ and $v$ lie in different components. Consequently, a single spanning tree yields $|P_{u \to v}|$ valid 2-forests, where $|P_{u \to v}|$ is the number of edges on the $u$–$v$ path.

However, the sets of 2-forests generated from different spanning trees overlap. For example, in Fig.~\ref{fig:graph tree forest}, both $\tau_1$ and $\tau_2$ can produce $\mathcal{F}_1$ by deleting one edge. This redundancy prevents a direct correspondence between $\mathbb{F}_{r|u}$ and $\Gamma$. To resolve this issue, we propose a novel key technique called Path Mapping which can systematically avoids such overlaps. The formal definition of Path Mapping is provided below.

\begin{definition}[Path Mapping]\label{def:path mapping}
  Let $\mathcal{P}$ be a simple path from node $u$ to node $r$ in the graph. We define \emph{path mapping} that associates each spanning tree $\tau \in \Gamma$ with a subset of $\mathbb{F}_{r \mid u}$, i.e., $\mathcal{P} : \Gamma \mapsto 2^{\mathbb{F}_{r \mid u}}$, where $2^{S}$ denotes the power set of $S$. For any spanning tree $\tau$, there exists a unique path between $u$ and $r$, denoted by $\mathcal{P}^{(\tau)}_{u \to r}$. By removing all edges shared by both $\mathcal{P}$ and $\mathcal{P}^{(\tau)}_{u \to r}$, we obtain a set of 2-forests. Depending on the relative direction of the overlapping edges, we define two types of path mappings:
  \begin{itemize}[leftmargin=1em]
      \item \textit{Forward Path Mapping}: if the removed edge has same direction in two path, the obtained 2-forest set is denoted as $\mathbb{F}^{P}(\tau)$, i.e. ,
  $$\qquad \mathbb{F}^{\mathcal{P}}(\tau)=\left\{ \tau\setminus (i,j) \ \big|\ (i,j) \in \mathcal{P} \land (i,j) \in \mathcal{P}^{(\tau)}_{u\to r} \right\},$$
      \item \textit{Reverse Path Mapping}: if the removed edge has opposite direction in two path, the obtained 2-forest set is denoted as  $\mathbb{F}^{P'}(\tau)$, i.e. , 
  $$\qquad \mathbb{F}^{\mathcal{P}'}(\tau)=\left\{ \tau\setminus (i,j) \ \big|\ (i,j)) \in \mathcal{P} \land (j,i) \in \mathcal{P}^{(\tau)}_{u\to r} \right\}.$$
  \end{itemize}
\end{definition}

\begin{example}\label{ex:path_mapping}\color{R3}
    Fig.~\ref{fig:path mapping example} illustrates an example of path mapping. To convert a spanning tree into forests in $\mathbb{F}_{v_2|v_1}$, we first identify a simple path from $v_2$ to $v_1$, such as $\mathcal{P}_{v_2}:v_2 \to v_3 \to v_1$. Then, for a given spanning tree, e.g., $\tau_1$, we locate the path from $v_2$ to $v_1$ within $\tau_1$, that is $\mathcal{P}^{(\tau_1)}_{v_2\to v_1}:v_2 \rightarrow v_3 \rightarrow v_1$. We observe that two edges in $\mathcal{P}^{(\tau_1)}_{v_2\to v_1}$ also appear in $\mathcal{P}_{v_2}$ and have the same direction. We can obtain two 2-forests, $\mathcal{F}_3$ and $\mathcal{F}_1$, by removing $(v_2, v_3)$ and $(v_3, v_1)$ respectively. 
    
    As for $\tau_3$, the path from $v_2$ to $v_1$ is $\mathcal{P}^{(\tau_3)}_{v_2\to v_1}:v_2 \rightarrow v_1$. Although the edge $(v_1, v_3)$ also exists in $\mathcal{P}_{v_2}$ and $\tau_3$, it is not part of $\mathcal{P}^{(\tau_3)}_{v_2\to v_1}$, and thus cannot produce a 2-forest. Therefore, applying path mapping with $\mathcal{P}_{v_2}$ to $\tau_3$ results in an empty set.
\end{example}

With Path Mapping, we restrict the deletable edges of a spanning tree to a specific path to reduce redundancy, yet this remains insufficient. To further address this, we distinguish two cases depending on whether the deleted edge aligns with the path orientation, introducing forward and reverse Path Mapping. This refinement further eliminates redundancy and ultimately enables our goal: establishing a precise correspondence between $\Gamma$ and $\mathbb{F}_{r|u}$ through Path Mapping. 

The following theorem formalizes this connection by showing that, given any simple path from $u$ to $r$, the set of 2-forests obtained from all spanning trees in $\Gamma$ via forward Path Mapping, minus those obtained via reverse Path Mapping, is exactly $\mathbb{F}_{r|u}$.

\begin{theorem} \label{theo:PathMappingEqual}
  Let $\mathcal{P}$ be a simple path from $u$ to $r$, then
  $$\mathbb{F}_{r\mid u} = \sum_{\tau \in \Gamma} \left(\mathbb{F}^{\mathcal{P}}(\tau) - \mathbb{F}^{\mathcal{P}'}(\tau)\right).$$
\end{theorem}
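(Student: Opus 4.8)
The plan is to prove this (signed-multiset) identity by analyzing, for each candidate 2-forest, how many spanning trees produce it under forward versus reverse path mapping, and showing the net count equals exactly one precisely when the forest lies in $\mathbb{F}_{r\mid u}$.

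First I would establish that the right-hand side is supported on $\mathbb{F}_{r\mid u}$. Every forest appearing in $\mathbb{F}^{\mathcal{P}}(\tau)$ or $\mathbb{F}^{\mathcal{P}'}(\tau)$ has the form $\tau\setminus(i,j)$ where $(i,j)$ lies on the tree path $\mathcal{P}^{(\tau)}_{u\to r}$; removing a tree-path edge disconnects $u$ from $r$, so the result is a 2-forest in which $u$ and $r$ lie in different trees, i.e.\ an element of $\mathbb{F}_{r\mid u}$. Hence it suffices to show that each $F\in\mathbb{F}_{r\mid u}$ occurs with net multiplicity exactly $1$, while everything else occurs with multiplicity $0$ (already guaranteed by this support statement).

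Next, I would fix $F=T_1\cup T_2\in\mathbb{F}_{r\mid u}$ with $r\in T_1$ and $u\in T_2$, and characterize the trees that can produce it. If $\tau\setminus e=F$ then $\tau=F\cup\{e\}$, and since $F$ has exactly the two components $T_1,T_2$, the edge $e$ must cross the cut $(T_1,T_2)$; conversely, any such crossing edge reconnects $F$ into a spanning tree. Because $e$ is then the unique $T_1$–$T_2$ bridge in $\tau$, it necessarily lies on the tree path $\mathcal{P}^{(\tau)}_{u\to r}$, which traverses it from its $T_2$-endpoint to its $T_1$-endpoint. Thus the trees producing $F$ under path mapping are exactly $\tau=F\cup\{e\}$ for crossing edges $e$ that additionally lie on the fixed path $\mathcal{P}$, each producing $F$ once by deleting $e$. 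Whether this counts as forward or reverse is decided purely by the orientation of $e$ in $\mathcal{P}$ relative to the tree-path orientation $T_2\to T_1$: it is forward (matching directions, as in Definition~\ref{def:path mapping}) when $\mathcal{P}$ also crosses $e$ from $T_2$ to $T_1$, and reverse when $\mathcal{P}$ crosses $e$ from $T_1$ to $T_2$.

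Finally I would compute the net multiplicity as a telescoping/parity count over the crossings of $\mathcal{P}$. Walking $\mathcal{P}$ from $u\in T_2$ to $r\in T_1$, each crossing edge toggles the current side, contributing $+1$ to the forward count (for a $T_2\to T_1$ crossing) or $+1$ to the reverse count (for a $T_1\to T_2$ crossing). The net contribution, forward minus reverse, equals the total change of side along $\mathcal{P}$, which is exactly one since $\mathcal{P}$ begins in $T_2$ and ends in $T_1$. Therefore every $F\in\mathbb{F}_{r\mid u}$ is counted with net multiplicity $1$, which proves the identity. The main obstacle is the orientation bookkeeping: one must verify that each producing tree is attributed to exactly one of the two mappings through its single bridge edge, that only crossing edges lying on $\mathcal{P}$ contribute, and that the directed-edge conventions of Definition~\ref{def:path mapping} correctly align the tree-path orientation $T_2\to T_1$ with the forward/reverse classification. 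Once this is in place, the count collapses to the elementary fact that a path from $T_2$ to $T_1$ has exactly one more $T_2\to T_1$ crossing than $T_1\to T_2$ crossings.
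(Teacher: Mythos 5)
Your proposal is correct and follows essentially the same route as the paper's proof: like the paper's Lemmas~\ref{lemma:PathMappingCupEqual} and~\ref{lemma:NumberOfForest}, you characterize the trees producing a fixed $F = T_1 \cup T_2 \in \mathbb{F}_{r\mid u}$ as $F \cup \{e\}$ for cut-crossing edges $e \in \mathcal{P}$, classify them as forward or reverse by the crossing direction, and conclude via the fact that a simple path from $u \in T_2$ to $r \in T_1$ has exactly one more $T_2 \to T_1$ crossing than $T_1 \to T_2$ crossing. Your explicit support step (that every forest on the right-hand side already lies in $\mathbb{F}_{r\mid u}$) is a small but welcome addition that the paper leaves implicit.
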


\begin{proof}[Proof Sketch]\color{RM}
  The forward path mapping covers all of $\mathbb{F}_{r|u}$, since there must be an edge in $\mathcal{P}$ connecting two trees in any 2-forests. Moreover, for each 2-forest, forward mapping produces exactly one more instance than reverse mapping, as $\mathcal{P}$ contains precisely one additional edge from the tree of $u$ to that of $r$.
\end{proof}

\iffullversion
  \begin{lemma}\label{lemma:PathMappingCupEqual}
    Given an arbitrary simple path $\mathcal{P}$ from $u$ to $r$, the following equation holds:
    $$\bigcup_{\tau \in \Gamma} \mathbb{F}^{\mathcal{P}}(\tau)=\mathbb{F}_{r\mid u}.$$
  \end{lemma}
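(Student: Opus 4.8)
The plan is to prove the set equality by establishing the two inclusions separately, treating the easy containment first and then arguing the reverse containment constructively.

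First I would dispatch the inclusion $\bigcup_{\tau\in\Gamma}\mathbb{F}^{\mathcal{P}}(\tau)\subseteq\mathbb{F}_{r\mid u}$. Fix any $\tau\in\Gamma$ and any $F\in\mathbb{F}^{\mathcal{P}}(\tau)$; by definition $F=\tau\setminus(i,j)$ for some edge $(i,j)$ lying on the tree path $\mathcal{P}^{(\tau)}_{u\to r}$. Deleting a single edge from a spanning tree always yields a spanning forest with exactly two components, so $F$ is a 2-forest. Since $(i,j)$ sits on the unique $u$–$r$ path of $\tau$, removing it separates $u$ from $r$, placing them in distinct components; hence $F\in\mathbb{F}_{r\mid u}$. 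This direction uses only the tree-path property and the fact that cutting a tree edge splits the tree in two.

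For the reverse inclusion $\mathbb{F}_{r\mid u}\subseteq\bigcup_{\tau\in\Gamma}\mathbb{F}^{\mathcal{P}}(\tau)$, I would argue by explicitly building a witnessing spanning tree. Take $F=T_1\cup T_2\in\mathbb{F}_{r\mid u}$ with $r\in T_1$ and $u\in T_2$. Traversing the fixed path $\mathcal{P}$ from $u$ to $r$, I label each vertex by the component of $F$ it lies in. Because the traversal starts in $T_2$ (at $u$) and ends in $T_1$ (at $r$), it must contain at least one \emph{forward crossing} edge $(i,j)$ with $i\in T_2$, $j\in T_1$, where $i$ precedes $j$ in the orientation of $\mathcal{P}$; such an edge exists since the number of $T_2\to T_1$ transitions along $\mathcal{P}$ exceeds the number of $T_1\to T_2$ transitions by exactly one. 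I then set $\tau:=F\cup\{(i,j)\}$. As $i$ and $j$ lie in the two different components of the 2-forest $F$, adding $(i,j)$ reconnects them into a connected acyclic spanning subgraph, so $\tau\in\Gamma$ and $F=\tau\setminus(i,j)$.

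It remains to verify the forward-mapping condition for this $\tau$, which I would do by examining the unique $u$–$r$ route in $\tau$: it must pass through $T_2$ to $i$, cross the unique bridge $(i,j)$ to $j$, and continue through $T_1$ to $r$. Hence $(i,j)\in\mathcal{P}^{(\tau)}_{u\to r}$, and it is traversed from $i$ to $j$, matching its orientation in $\mathcal{P}$; this is precisely the forward branch of the mapping, so $F\in\mathbb{F}^{\mathcal{P}}(\tau)$. I expect the main obstacle to be this reverse inclusion, and within it the subtle point that one can always select a \emph{forward} crossing edge (rather than merely some crossing edge) and that this same edge lands on the tree path with matching orientation; I would make the existence of a forward crossing rigorous through the net-count (parity) argument on the component labels along $\mathcal{P}$. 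The easy inclusion and the spanning-tree check are routine.
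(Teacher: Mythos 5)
Your proof is correct and follows essentially the same route as the paper: both hinge on taking a 2-forest $T_1\cup T_2\in\mathbb{F}_{r\mid u}$, locating a crossing edge of $\mathcal{P}$ that leaves the component of $u$ and enters the component of $r$, and adding it to produce a witnessing spanning tree. You are in fact slightly more thorough than the paper's argument, which leaves implicit both the easy inclusion $\bigcup_{\tau}\mathbb{F}^{\mathcal{P}}(\tau)\subseteq\mathbb{F}_{r\mid u}$ and the check that the added edge lies on the tree's $u$--$r$ path with the orientation required by the forward mapping.
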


  \begin{proof}
    The proof idea is to prove that for any 2-forest, there must be a spanning tree that can be mapped to it by a feasible path.
    For each 2-forest $T_1 \cup T_2 \in \mathbb{F}_{r\mid u}$, the node set $V$ is divided into two distinct sets, $T_1$ and $T_2$, and we assume that the tree contains $r$ is $T_1$. Since $P$ starts from $u$ and ends at $r$, there must be at least one edge $e$ in $P$ which leaves from the node in $T_2$ and enters to the node in $T_1$. 
    Otherwise, $u$ and $r$ are in the same one connected component, or path $P$ fails to connect $u$ and $r$. Adding the edge accrossing two components to the 2-forest can get a spanning tree $T_1 \cup T_2 \cup e$, which means this 2-forest $T_1 \cup T_2$ is an element of $\mathbb{F}^{\mathcal{P}}(T_1 \cup T_2 \cup e)$.
  \end{proof}
  
  \begin{lemma}\label{lemma:NumberOfForest}
    For any 2-forest $T_1 \cup T_2 \in \mathbb{F}_{r\mid u}$, we have 
    $$\left|\{ \tau \mid T_1 \cup T_2 \in \mathbb{F}^{\mathcal{P}}(\tau)\}\right| - \left|\{ \tau \mid T_1 \cup T_2 \in \mathbb{F}^{\mathcal{P}'}(\tau)\}\right| = 1. $$
  \end{lemma}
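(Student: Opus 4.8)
The plan is to fix the 2-forest $F = T_1 \cup T_2 \in \mathbb{F}_{r\mid u}$ (with $r \in T_1$ and $u \in T_2$) and to identify, among all spanning trees, exactly those that map to $F$ under the forward mapping and those that map to it under the reverse mapping. First I would observe that if $F \in \mathbb{F}^{\mathcal{P}}(\tau) \cup \mathbb{F}^{\mathcal{P}'}(\tau)$ for some $\tau$, then $\tau$ is recovered from $F$ by adding back a single edge $e$; since $F$ has exactly the two components $T_1$ and $T_2$, this edge must join $T_1$ to $T_2$, and it must belong to $\mathcal{P}$ (both mappings only delete edges of $\mathcal{P}$). Conversely, for any edge $e$ of $\mathcal{P}$ joining $T_1$ and $T_2$, the graph $\tau_e := F + e$ is connected with $n-1$ edges, hence a spanning tree, and $e$ necessarily lies on the unique $u$--$r$ path $\mathcal{P}^{(\tau_e)}_{u\to r}$ because it is the only edge of $\tau_e$ crossing the cut $(T_1,T_2)$. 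Thus $e \mapsto \tau_e$ is a bijection between the crossing edges of $\mathcal{P}$ and the spanning trees relevant to the count, reducing the lemma to a statement about the crossing edges of the fixed path $\mathcal{P}$.

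Next I would determine, for each crossing edge $e$, whether $\tau_e$ contributes to the forward count or the reverse count. Traversing $\mathcal{P}^{(\tau_e)}_{u\to r}$ from $u \in T_2$ to $r \in T_1$, the edge $e$ is always crossed in the direction $T_2 \to T_1$. Hence $\tau_e$ satisfies $F \in \mathbb{F}^{\mathcal{P}}(\tau_e)$ exactly when $e$ is also oriented $T_2 \to T_1$ in $\mathcal{P}$ (same direction), and $F \in \mathbb{F}^{\mathcal{P}'}(\tau_e)$ exactly when $e$ is oriented $T_1 \to T_2$ in $\mathcal{P}$ (opposite direction). Consequently the left-hand side of the lemma equals the number of edges of $\mathcal{P}$ that cross the cut from $T_2$ to $T_1$, minus the number that cross from $T_1$ to $T_2$.

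The final step is a short cut-crossing argument. Assign a potential of $0$ to every node of $\mathcal{P}$ lying in $T_2$ and $1$ to every node lying in $T_1$, and read $\mathcal{P}$ as a sequence of sides starting at $u$ (potential $0$) and ending at $r$ (potential $1$). A forward crossing raises the potential by $1$, a reverse crossing lowers it by $1$, and non-crossing edges leave it unchanged, so the signed count of crossings telescopes to the total potential change, namely $1-0=1$. This is precisely the claimed difference.

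I expect the main obstacle to be the orientation bookkeeping in the middle step: one must argue cleanly that the distinguished crossing edge is always traversed $T_2 \to T_1$ inside the tree path, while its orientation inside $\mathcal{P}$ may differ, and then match this exactly against the definitions of $\mathbb{F}^{\mathcal{P}}$ and $\mathbb{F}^{\mathcal{P}'}$. A secondary point to verify is that no tree $\tau_e$ is double-counted, i.e.\ that deleting the crossing edge $e$ (rather than some other edge shared by $\mathcal{P}$ and $\mathcal{P}^{(\tau_e)}_{u\to r}$) is the only way to recover exactly $F$; this holds because removing any other shared edge alters the vertex partition and therefore produces a different 2-forest.
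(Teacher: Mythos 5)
Your proposal is correct and follows essentially the same route as the paper's proof: both identify the spanning trees mapping to $T_1 \cup T_2$ with the edges of $\mathcal{P}$ crossing the cut $(T_1, T_2)$, split them by orientation into forward and reverse contributions, and conclude that a simple path from $u \in T_2$ to $r \in T_1$ has exactly one more $T_2 \to T_1$ crossing than $T_1 \to T_2$ crossing. Your explicit bijection, orientation bookkeeping, and telescoping potential argument merely spell out details the paper leaves implicit.
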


  \begin{proof}
    \begin{figure}
      \centering
      \includegraphics[width=0.4\linewidth]{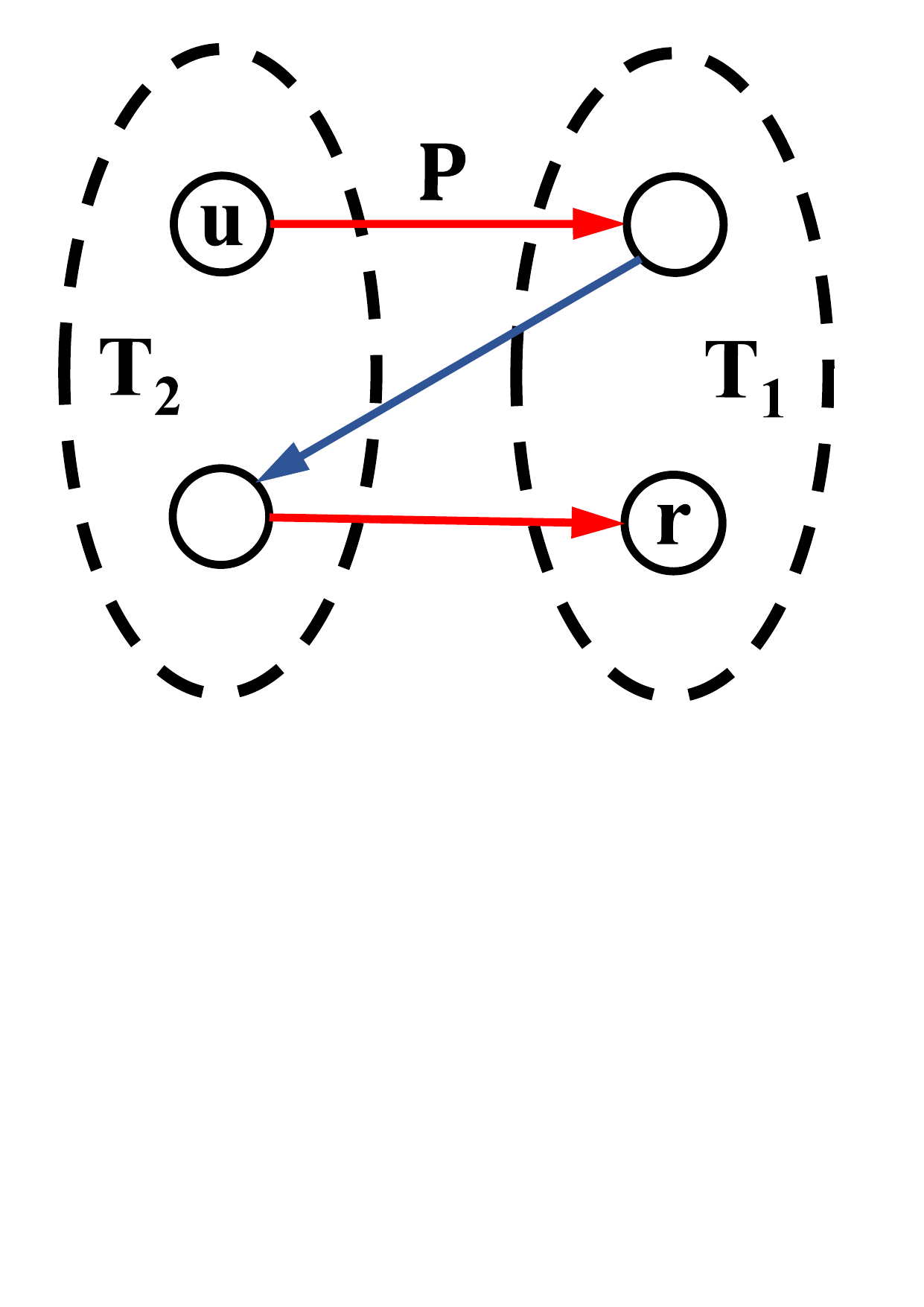}
      \Description{edges in $P$ accrossing through two trees of 2-forest}
      \caption{Proof of Lemma~\ref{lemma:NumberOfForest}}
      \label{fig:path}
    \end{figure}
    for each 2-forest $T_1 \cup T_2 \in \mathbb{F}_{r\mid u}$, we assume $r$ is in $T_1$. As proved in Lemma~\ref{lemma:PathMappingCupEqual}, 
    if a spanning tree $\tau$ is composed of $T_1\cup T_2$ and an edge $e$ in $\mathcal{P}$, which leaves from $T_2$ and enters to $T_1$ (as the red edges in Fig.~\ref{fig:path}). 
    Then $T_1 \cup T_2$ appears in the set obtained by path mapping $\tau$, i.e. , $T_1\cup T_2 \in \mathbb{F}^{\mathcal{P}}(T_1\cup T_2 \cup (e_1, e_2))$, for each $(e_1, e_2)\in \mathcal{P}$, satisfying $e_1 \in T_2, e_2 \in T_1$.
    Similarly $\mathbb{F}^{\mathcal{P}'}(\tau')$ has $T_1\cup T_2$ if $\tau'$ is composed of $T_1\cup T_2$ and the edge that is pointed to $T_2$ from $T_1$ (see the blue edge in Fig.~\ref{fig:path}). Therefore 
    \begin{align*}
      & \left|\{ \tau \mid T_1 \cup T_2 \in \mathbb{F}^{\mathcal{P}}(\tau)\}\right| - \left|\{ \tau \mid T_1 \cup T_2 \in \mathbb{F}^{\mathcal{P}'}(\tau)\}\right| \\
    =& \left|(e_1, e_2) \in \mathcal{P} \mid  e_1 \in T_2 \land e_2 \in T_1 \right| \\ 
      &- \left|(e_1, e_2) \in \mathcal{P} \mid e_1 \in T_1 \land e_2 \in T_2 \right| \\
    =& 1.
    \end{align*}
    The last equality holds because $\mathcal{P}$ is a simple path from $u \in T_2$ to $r \in T_1$, which makes the edges from $T_2$ to $T_1$ exactly one more than the edge from $T_1$ to $T_2$ in $\mathcal{P}$.
  \end{proof}

  \begin{proof}[Proof of Theorem~\ref{theo:PathMappingEqual}]
    According to Lemma~\ref{lemma:PathMappingCupEqual} and Lemma~\ref{lemma:NumberOfForest}, we can know that $\sum_{\tau \in \Gamma}\mathbb{F}^{\mathcal{P}}(\tau)$ includes all elements in $\mathbb{F}_{r|u}$, 
    while $-\sum_{\tau \in \Gamma}\mathbb{F}^{\mathcal{P}'}(\tau)$ eliminates the excess. Therefore, the right term of equation precisely matches $\mathbb{F}_{r\mid u}$.
  \end{proof}
\else 
\fi

\begin{figure}[t!]
    \centering
    \includegraphics[width=\linewidth]{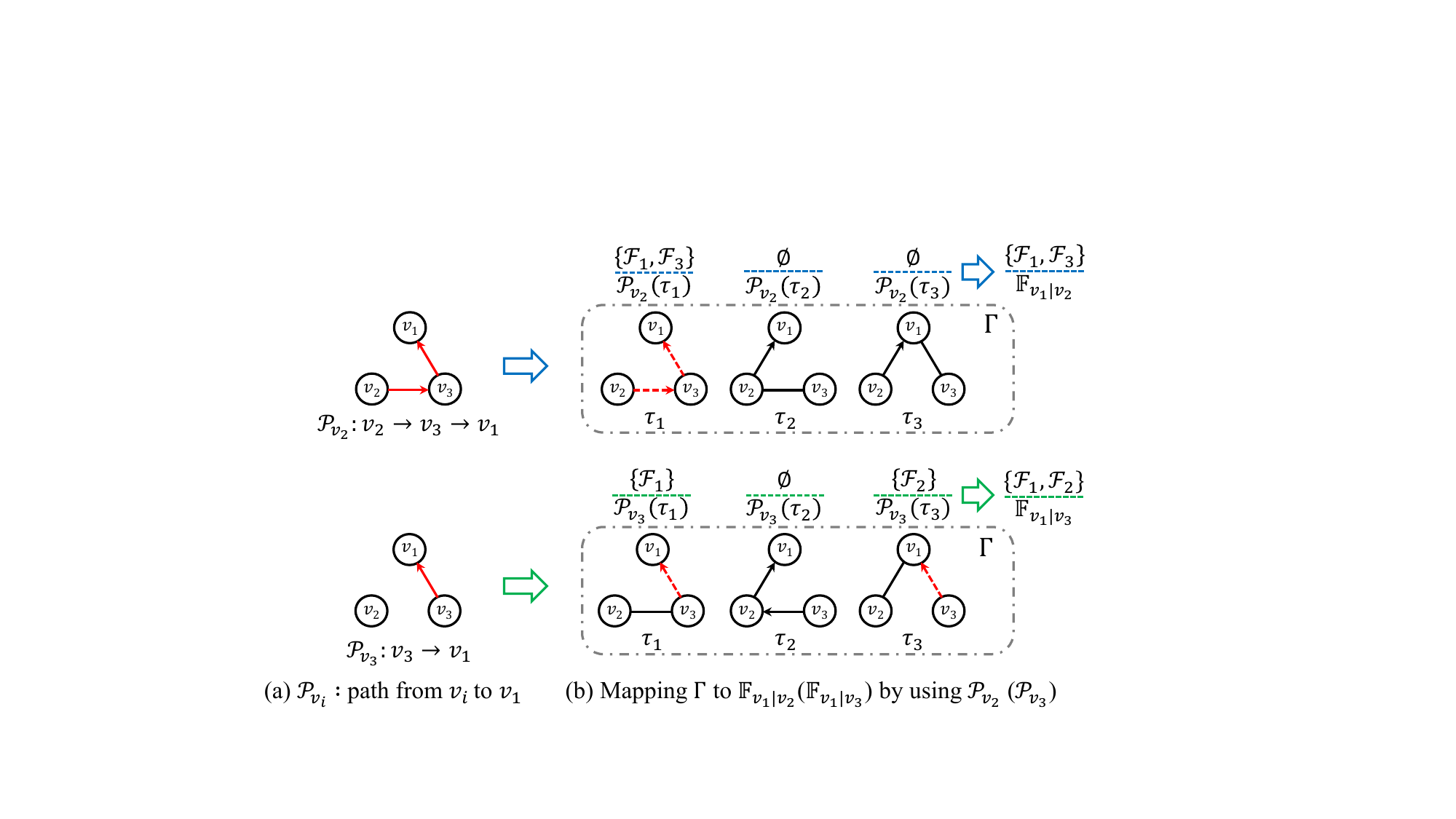}
    \caption{Illustration of Transforming Spanning Trees to 2-Forests Using Path Mapping}
    \label{fig:path mapping example}
    \vspace{-0.1cm}
\end{figure}

\begin{example}\label{ex:path_mapping_equal}
Fig.~\ref{fig:path mapping example} illustrates the mapping from the spanning tree set $\Gamma$ to the 2-forest set $\mathbb{F}_{v|r}$. By selecting a root node and fixing a path from another node to the root, each spanning tree in $\Gamma$ can be mapped to one or more 2-forests in $\mathbb{F}_{v|r}$ by removing edges along the path. For instance, in $\tau_1$, removing edges along the fixed path generates $\mathcal{F}_1$ and $\mathcal{F}_3$, while $\tau_2$ and $\tau_3$ yield no mappings. Combining the mappings from all trees exactly recovers $\mathbb{F}_{v_1\mid v_2}$, ensuring that sampling a tree uniformly from $\Gamma$ induces a uniform distribution over the corresponding 2-forests.
\end{example}

\stitle{Discussion.}  
In this way, we successfully establish a connection between $\Gamma$ and $\mathbb{F}_{r \mid u}$ through path mapping, which lays the foundation for designing a feasible sampling algorithm. Previously, it was difficult to directly sample 2-forests. Now, we can first sample uniform random spanning trees (USTs) and then apply path mapping to transform each spanning tree sample into multiple (or possibly zero) 2-forests. As a result, each sampled 2-forest is obtained with probability $1/|\Gamma|$. In the next section, we will present the sampling algorithm in detail and provide the corresponding theoretical analysis.

\subsection{The Tree-To-Forest Algorithm} \label{subsec:static}

In this section, we introduce \StaticAlg (Tree-To-Forest), a new sampling-based algorithm for estimating the Kemeny constant. The key idea is to first sample USTs and then transform UST samples into 2-forests by path mapping, which allow us to estimate \kc according to Eq.~(\ref{eq:FuvKC}). We begin by outlining the algorithmic intuition and establish its correctness, and then present the pseudo-code, implementation details, and complexity analysis.

\stitle{High-Level Idea of \StaticAlg.} 
\textbf{(1) Identify the root and paths.} We select an arbitrary node $r$ as the root and construct a breadth-first search (BFS) tree rooted at $r$. This BFS tree provides a set of simple paths $\mathcal{P}_u$ from each node $u$ to $r$. 
\textbf{(2) Sample USTs.} We efficiently sample USTs using Wilson's algorithm~\cite{wilson1996generating}, which form the basis of our estimator.
\textbf{(3) From tree to forest.} Each sampled UST is then converted into 2-forests using forward and reverse path mapping with $\{\mathcal{P}_u \mid u\neq r\}$. 
\textbf{(4) Estimate \kc.} Finally, we estimate \kc using Eq.~(\ref{eq:FuvKC}) based on the obtained 2-forests.

The core of our algorithm lies in steps (3) and (4). Importantly, we do not need to explicitly transform sampled USTs into 2-forests. Instead, it suffices to compute the contributions that these 2-forests would make to \kc. Therefore, we merge the two steps and directly estimate \kc from the sampled USTs. We propose two approaches for this task: a naive method and an optimized method.

\stitle{Traverse Each Path (Naive Method). }
For each sampled UST $\tau$, we traverse the path from each node $u$ to the root $r$ within the sampled tree, and compare its edges with the pre-determined path $\mathcal{P}_u$ from step (1). Each matching edge with same orientation indicates a valid 2-forest in $\mathbb{F}^{\mathcal{P}_u}(\tau)$ (or $\mathbb{F}^{\mathcal{P}_u'}(\tau)$ when edge has opposite orientation). For each valid 2-forest, we accumulate the volume of the component containing $r$, i.e. $\mathrm{vol}(T_1)$, and multiply it by $d(u)$ to contribute to the KC estimate. Repeating this process for all nodes $u \neq r$ yields the contribution of $\tau$, denoted by $f(\tau)$

\begin{theorem}[Correctness of Naive Method]\label{theo:naive_correctness}
    Given $\omega$ UST samples and a fixed root $r$ with paths $\{\mathcal{P}_u \mid u\neq r\}$. Let $\hat{\tau}_i$ denote the $i$-th sampled UST. Then, $\tilde{\kappa}=\frac{1}{2m\omega}\sum_{i=1}^{\omega}f(\hat{\tau}_i)$ is an unbiased estimator of $\kappa(G)$, where $$f(\tau) = \sum_{u \neq r} d(u) \left( \sum_{T_1 \cup T_2 \in \mathbb{F}^{\mathcal{P}_u}(\tau)} \mathrm{vol}(T_1) 
    - \sum_{T_1 \cup T_2 \in \mathbb{F}^{\mathcal{P}_u'}(\tau)} \mathrm{vol}(T_1) \right).$$
\end{theorem}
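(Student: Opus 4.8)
The plan is to reduce the unbiasedness claim to a deterministic summation identity over $\Gamma$ and then invoke Theorem~\ref{theo:PathMappingEqual} in a weighted form. First I would use that the $\omega$ samples $\hat{\tau}_i$ are i.i.d.\ uniform over $\Gamma$, so that by linearity of expectation
$$\mathbb{E}[\tilde{\kappa}] = \frac{1}{2m\omega}\sum_{i=1}^{\omega}\mathbb{E}[f(\hat{\tau}_i)] = \frac{1}{2m}\cdot\frac{1}{|\Gamma|}\sum_{\tau\in\Gamma} f(\tau).$$
Comparing this against the forest formula of Theorem~\ref{theo:FuvKC}, it then suffices to establish the purely combinatorial identity
$$\sum_{\tau\in\Gamma} f(\tau) = \sum_{u\neq r} d(u) \sum_{T_1\cup T_2\in\mathbb{F}_{r\mid u},\, r\in T_1}\mathrm{vol}(T_1).$$

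Next I would expand $f(\tau)$ by its definition, interchange the two finite sums over $\tau\in\Gamma$ and over $u\neq r$, and pull the factor $d(u)$ outside. This leaves, for each fixed $u$, the inner quantity $\sum_{\tau\in\Gamma}\big(\sum_{F\in\mathbb{F}^{\mathcal{P}_u}(\tau)}\mathrm{vol}(T_1) - \sum_{F\in\mathbb{F}^{\mathcal{P}_u'}(\tau)}\mathrm{vol}(T_1)\big)$. The crucial observation is that the weight $\mathrm{vol}(T_1)$ depends only on the produced 2-forest $F$ (namely, on its tree containing $r$), and not on which spanning tree $\tau$ generated it. Consequently this inner quantity is exactly the $\mathrm{vol}(T_1)$-weighted evaluation of the signed multiset $\sum_{\tau\in\Gamma}\big(\mathbb{F}^{\mathcal{P}_u}(\tau) - \mathbb{F}^{\mathcal{P}_u'}(\tau)\big)$.

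Then I would invoke Theorem~\ref{theo:PathMappingEqual}, which asserts that this signed multiset equals $\mathbb{F}_{r\mid u}$ exactly, i.e.\ every 2-forest of $\mathbb{F}_{r\mid u}$ survives with net multiplicity one and nothing else survives. Since applying any weight function to both sides of a multiset identity preserves equality, the weighted sum collapses to $\sum_{F\in\mathbb{F}_{r\mid u}}\mathrm{vol}(T_1)$, taken over the tree of $F$ containing $r$. Substituting this back and restoring the outer sum over $u$ reproduces exactly the right-hand side above, completing the reduction and hence the proof.

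The only delicate point — and the main obstacle — is justifying the passage from the set-level statement of Theorem~\ref{theo:PathMappingEqual} to the weighted identity needed here. A given 2-forest $F$ may arise from several distinct spanning trees, possibly with mixed forward and reverse signs, so one must be careful to attach the weight $\mathrm{vol}(T_1)$ to $F$ rather than to $\tau$. Because $\mathrm{vol}(T_1)$ is intrinsically a property of the component of $F$ containing $r$, the net-multiplicity-one guarantee of Theorem~\ref{theo:PathMappingEqual} transfers verbatim to the weighted count; everything else is routine bookkeeping with i.i.d.\ sampling, linearity of expectation, and the interchange of finite sums.
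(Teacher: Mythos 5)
Your proposal is correct and follows essentially the same route as the paper's proof: reduce unbiasedness to the combinatorial identity via linearity of expectation, substitute the signed path-mapping representation of $\mathbb{F}_{r\mid u}$ from Theorem~\ref{theo:PathMappingEqual} into the forest formula of Theorem~\ref{theo:FuvKC}, and interchange the finite sums over $u$ and $\tau$. The one point you flag as delicate---that the set-level statement of Theorem~\ref{theo:PathMappingEqual} must be upgraded to a weighted (net-multiplicity-one) identity because $\mathrm{vol}(T_1)$ is a function of the produced 2-forest alone---is exactly the step the paper uses implicitly in its second equality, and your justification of it is sound since both mappings only ever produce forests in $\mathbb{F}_{r\mid u}$ and Lemma~\ref{lemma:NumberOfForest} gives forward-minus-reverse multiplicity exactly one for each such forest.
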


\iffullversion
    \begin{proof}
    Based on Theorem~\ref{theo:FuvKC} and Theorem~\ref{theo:PathMappingEqual}, we can represent the set $\mathbb{F}_{r \mid u}$ by applying the path mapping technique to the spanning tree set $\Gamma$. With this representation, \kc can be expressed as:
        \begin{align*}
        \kappa(G) 
        &= \frac{1}{2m|\Gamma|}\sum_{u\neq r} d(u) \sum_{T_1 \cup T_2 \in \mathbb{F}_{r \mid u},\, r\in T_1} \mathrm{vol}(T_1) \\
        &= \frac{1}{2m|\Gamma|}\sum_{u\neq r} d(u) \sum_{\tau \in \Gamma} 
           \left( \sum_{T_1 \cup T_2 \in \mathbb{F}^{\mathcal{P}}(\tau) \atop r\in T_1} \mathrm{vol}(T_1) 
                - \sum_{T_1 \cup T_2 \in \mathbb{F}^{\mathcal{P}'}(\tau) \atop r\in T_1} \mathrm{vol}(T_1) \right) \\
        &= \frac{1}{2m|\Gamma|}\sum_{\tau \in \Gamma} \sum_{u\neq r} d(u) 
           \left( \sum_{T_1 \cup T_2 \in \mathbb{F}^{\mathcal{P}}(\tau) \atop r\in T_1} \mathrm{vol}(T_1) 
                - \sum_{T_1 \cup T_2 \in \mathbb{F}^{\mathcal{P}'}(\tau) \atop r\in T_1} \mathrm{vol}(T_1) \right) \\
        &= \frac{1}{2m|\Gamma|}\sum_{\tau \in \Gamma} f(\tau).
        \end{align*}
    \end{proof}
\else 
    \begin{proof}[Proof Sketch]\color{RM}
    The proof follows directly by replacing $\mathbb{F}_{r\mid u}$ in Eq.~(\ref{eq:FuvKC}) with the set obtained by path mapping according to Theorem~\ref{theo:PathMappingEqual}. For detailed proof, please refer to the full version of this paper~\cite{full}.
    \end{proof}
\fi

\stitle{Traverse by Depth-First Search (Optimized Method).}  
We further enhance traversal efficiency by maintaining an auxiliary variable \texttt{vol\_sum} for each node and performing a DFS on the sampled UST $\tau$. When visiting a node $v$, we update \texttt{vol\_sum} by adding or subtracting $2m-\mathrm{vol}(\textit{Sub}(\tau, v))$ if the edge connecting $v$ to its parent $p(v)$ in $\tau$ also appears in the BFS tree obtained in step(1), where $\textit{Sub}(\tau, v)$ denotes the subtree rooted at $v$ in $\tau$. After the update, we query the \texttt{vol\_sum} at $v$, multiply it by $d(v)$ to contribute to the KC estimate, recursively continue the DFS on its children, and finally restore the affected \texttt{vol\_sum} values by undoing the update.

\begin{theorem}[Correctness of Optimized Method]\label{theo:optimized_correctness}
    Given an arbitrary spanning tree $\tau$ and a fixed spanning tree $\tau_0$, let $r$ be the root of both trees. Let $\mathcal{P}_u$ denote the path from node $u$ to $r$ in $\tau_0$. Then, the following equality holds:
  \begin{align}
    \sum_{u\neq r}\sum_{T_1\cup T_2 \in \mathbb{F}^{\mathcal{P}_{u}}(\tau)} \mathrm{vol}(T_1) = \sum_{(v, p(v)) \in \tau_0 \atop u \in S_1} (2m-\mathrm{vol}(\textit{Sub}(\tau, v))), 
    \\
    \sum_{u\neq r}\sum_{T_1\cup T_2 \in \mathbb{F}^{\mathcal{P}_{u}'}(\tau)} \mathrm{vol}(T_1) = \sum_{(p(v), v) \in \tau_0 \atop u \in S_2} (2m-\mathrm{vol}(\textit{Sub}(\tau, v))),
  \end{align}
  where $\textit{Sub}(\tau, v)$ is the node set in the subtree rooted at $v$ within $\tau$, and $S_1=\textit{Sub}(\tau, v) \cap \textit{Sub}(\tau_0, v), S_2=\textit{Sub}(\tau, v) \cap \textit{Sub}(\tau_0, p(v))$. $p(v)$ is the parent node of $v$ in $\tau$ and therefore each edge $(v, p(v))$ is directed towards the root $r$ in $\tau$.
\end{theorem}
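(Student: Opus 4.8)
The plan is to prove both identities by reorganizing the double sum on the left-hand side so that it is indexed by the deleted edge rather than by the node $u$. The first observation is that every 2-forest in $\mathbb{F}^{\mathcal{P}_u}(\tau)$ arises from $\tau$ by deleting a single edge lying on the path from $u$ to $r$ in $\tau$. Writing such an edge as $(v,p(v))$, where $p(v)$ is the parent of $v$ in $\tau$ and the edge is oriented toward the root, removing it splits $\tau$ into the subtree $\textit{Sub}(\tau,v)$ (which contains $v$, and hence $u$) and its complement (which contains $r$). Since the component containing $r$ is exactly $T_1$ and $\mathrm{vol}(V)=\sum_{w\in V}d(w)=2m$, we obtain $\mathrm{vol}(T_1)=2m-\mathrm{vol}(\textit{Sub}(\tau,v))$, which is precisely the weight appearing on the right-hand side. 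It then remains only to count, for each edge $(v,p(v))$, the number of nodes $u$ that produce it.

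Next I would swap the order of summation and determine, for a fixed edge $(v,p(v))$ of $\tau$, the set of $u\neq r$ for which $\tau\setminus(v,p(v))\in\mathbb{F}^{\mathcal{P}_u}(\tau)$. By the definition of forward path mapping this requires two conditions: (i) the edge lies on the $u$-to-$r$ path in $\tau$ with its orientation toward the root, which holds exactly when $v$ is an ancestor of $u$ in $\tau$, i.e. $u\in\textit{Sub}(\tau,v)$; and (ii) the same directed edge lies on the $u$-to-$r$ path $\mathcal{P}_u$ in $\tau_0$. Because a $u$-to-$r$ path in a rooted tree always runs from child to parent, condition (ii) forces $(v,p(v))$ to be an edge of $\tau_0$ with the same parent-child orientation, i.e. $(v,p(v))\in\tau_0$, and then the edge lies on $\mathcal{P}_u$ exactly when $v$ is an ancestor of $u$ in $\tau_0$, i.e. $u\in\textit{Sub}(\tau_0,v)$. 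Combining (i) and (ii), the admissible set of $u$ is $\textit{Sub}(\tau,v)\cap\textit{Sub}(\tau_0,v)=S_1$ (note $r\notin S_1$, so the constraint $u\neq r$ is automatic). Summing the weight $2m-\mathrm{vol}(\textit{Sub}(\tau,v))$ over the edges $(v,p(v))\in\tau_0$ and over $u\in S_1$ then yields exactly the right-hand side of the first equation.

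The reverse identity is established in the same way, with the orientation reversed. For $\mathbb{F}^{\mathcal{P}_u'}(\tau)$ the shared edge must appear with opposite orientations in the two paths, which means its parent-child relation in $\tau_0$ is flipped relative to $\tau$: the edge $(v,p(v))$ of $\tau$ must occur in $\tau_0$ as $(p(v),v)$, i.e. with $v$ the parent of $p(v)$. Condition (i) is unchanged ($u\in\textit{Sub}(\tau,v)$), while condition (ii) now places $p(v)$ as the ancestor of $u$ in $\tau_0$, giving $u\in\textit{Sub}(\tau_0,p(v))$; intersecting yields $S_2$, and the identity follows. I expect the main obstacle to be the careful bookkeeping of edge orientations, namely distinguishing the forward from the reverse case through whether the parent-child relationship of the shared edge is preserved or reversed between $\tau$ and $\tau_0$, together with translating the geometric statement \emph{the edge lies on the $u$-to-$r$ path} into the subtree-membership conditions that define $S_1$ and $S_2$. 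Once these are pinned down, the volume identity $\mathrm{vol}(T_1)=2m-\mathrm{vol}(\textit{Sub}(\tau,v))$ and the summation swap are routine.
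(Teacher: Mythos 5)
Your proposal is correct and follows essentially the same route as the paper's own proof: interpret each forward (resp.\ reverse) mapped 2-forest as the deletion of an edge lying on both the $\tau$-path and the $\tau_0$-path from $u$ to $r$ with matching (resp.\ opposite) orientation, translate this into the subtree-membership conditions $u\in\textit{Sub}(\tau,v)\cap\textit{Sub}(\tau_0,v)$ (resp.\ $u\in\textit{Sub}(\tau,v)\cap\textit{Sub}(\tau_0,p(v))$), and use $\mathrm{vol}(T_1)=2m-\mathrm{vol}(\textit{Sub}(\tau,v))$ before swapping the order of summation. Your write-up is, if anything, slightly more explicit than the paper's about the orientation bookkeeping and the exchange of sums.
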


\iffullversion
    \begin{proof}
        For each node $u$, we apply the forward path mapping $\mathcal{P}_u$, which gives
        \begin{align*}
            \sum_{T_1\cup T_2 \in \mathbb{F}^{\mathcal{P}_u}(\tau)} \mathrm{vol}(T_1)= \sum_{(v, p(v))\in \tau_0 \cap \tau \atop u\in Sub(\tau, v)\cap Sub(\tau_0, v)}\big(2m-\mathrm{vol}(Sub(\tau, v))\big).
        \end{align*}
        By the definition of path mapping, to obtain a valid 2-forest in $\mathbb{F}^{\mathcal{P}_u}(\tau)$, the deleted edge $(v, p(v))$ in $\tau$ must appear on both paths from $u$ to $r$ in the two spanning trees $\tau$ and $\tau_0$. This implies that $u$ must belong to the intersection of the subtrees of $v$ in $\tau$ and $\tau_0$, denoted as $S_1=Sub(\tau_,v)\cap Sub(\tau_0,v)$. Therefore, summing over all $u \neq r$, we obtain
        \begin{align*}
             \sum_{u\neq r}\sum_{T_1\cup T_2 \in \mathbb{F}^{\mathcal{P}_{u}}(\tau)} \mathrm{vol}(T_1) = \sum_{(v, p(v)) \in \tau\cap \tau_0 \atop u \in S_1} (2m-\mathrm{vol}(\textit{Sub}(\tau, v))).
        \end{align*}
        Similar equation holds for the reverse path mapping, $$\sum_{u\neq r}\sum_{T_1\cup T_2 \in \mathbb{F}^{\mathcal{P}_{u}'}(\tau)} \mathrm{vol}(T_1) = \sum_{(p(v), v) \in \tau_0 \atop u \in S_2} (2m-\mathrm{vol}(\textit{Sub}(\tau, v))).$$
    \end{proof}
\else 
    \begin{proof}[Proof Sketch]\color{RM}
      We observe that removing an edge $(v, p(v))$ appeared in both $\tau$ and $\tau_0$ splits $\tau$ into two subtrees, yielding a valid 2-forest of $\mathbb{F}^{\mathcal{P}_u}(\tau)$ if $u$ lies in the both subtree of $v$. The volume $T_1$ is $2m - \mathrm{vol}(\textit{Sub}(\tau, v))$, and a similar argument applies for the reverse path mapping. Full details are in~\cite{full}.
    \end{proof}
\fi

\begin{algorithm}[t!]
  \caption{\StaticAlg}
  \small
  \label{alg:static}
  \LinesNumbered
  \KwIn{A graph $G=(V,E)$, the sample size $\omega$, the root node $r$}
  \KwOut{$\tilde{\kappa}$ as the estimation of Kemeny Constant}
  $\tilde{\kappa} \gets 0$\;
  Generate $\tau_0$ using BFS with root $r$\;
  \texttt{DFS\textsubscript{in}}, \texttt{DFS\textsubscript{out}} $\gets \tau_0$\; 
  \texttt{vol\_sum} $\gets \texttt{BIT}(|V|)$\;
  \For{$i\gets 1$ \KwTo $\omega$}{
    $\tau_i \gets$ \Wilson$(G,r)$\;
    \For{each node $u \in V$}{
      \texttt{vol[$u$]} $\gets$ the volume of $Sub(\tau_i,u)$\;
    }
    $\tilde{\kappa} \gets \tilde{\kappa}$ + \DFS($r$, $\tau_i$, $\tau_0$)\;
  }
  \Return $\tilde{\kappa} / (2m\cdot \omega)$;
  
  \Fn{\DFS($v$, $\tau$, $\tau_0$)}{
    Let $p(v)$ be the parent node of $v$ in $\tau$\;
    \If{$(v, p(v)) \in \tau_0$}{
        \texttt{vol\_sum}.$Add$(\texttt{DFS\textsubscript{in}}[$v$], \texttt{DFS\textsubscript{out}}[$v$], $2m-$\texttt{vol[$v$]}) \;
    }
    \ElseIf{$(p(v), v) \in \tau_0$}{
        \texttt{vol\_sum}.$Add$(\texttt{DFS\textsubscript{in}}[$p(v)$], \texttt{DFS\textsubscript{out}}[$p(v)$], \texttt{vol[$v$]}$-2m$) \;
    }
    $\tilde{\kappa} \gets d(v)\cdot$ \texttt{vol\_sum}.$Query(\texttt{DFS\textsubscript{in}}[v])$\;
    \For{each node $i \in Child_{\tau}(v)$}{
        $\tilde{\kappa} \gets \tilde{\kappa}+$ \DFS$(i, \tau, \tau_0)$\;
    }
    \If{$(v, p(v)) \in \tau_0$}{
        \texttt{vol\_sum}.$Add$(\texttt{DFS\textsubscript{in}}[$v$], \texttt{DFS\textsubscript{out}}[$v$], \texttt{vol[$v$]}$-2m$) \;
    }
    \ElseIf{$(p(v), v) \in \tau_0$}{
        \texttt{vol\_sum}.$Add$(\texttt{DFS\textsubscript{in}}[$p(v)$], \texttt{DFS\textsubscript{out}}[$p(v)$], \texttt{vol[$v$]}$-2m$) \;
    }
    \Return{$\tilde{\kappa}$}\;
  }
\end{algorithm}

The pseudo-code of \StaticAlg is presented in Algorithm~\ref{alg:static}. Given a graph $G$, a root node $r$, and a sample size $\omega$, the algorithm begins by constructing a fixed spanning tree $\tau_0$ using breadth-first search (\bfs), followed by computing its \dfn (Lines 2--3). An auxiliary data structure \texttt{vol\_sum} is initialized as a \bit of length $|V|$ (Line 4). It then samples $\omega$ USTs using Wilson algorithm~\cite{wilson1996generating} (Line 6). For each sampled UST $\tau_i$, the algorithm first computes the volume for all subtrees (Lines 7--8), and then invokes the \DFS function to accumulate contributions to the \kc estimator (Line 9). Finally, the algorithm returns $\tilde{\kappa} / (2m\cdot \omega)$ as the approximation of \kc (Line 10).

The core procedure \DFS operates as follows. Let $p(v)$ denote the parent node of the visited node $v$ in $\tau$. The algorithm checks whether the edge $(v, p(v))$ exists in $\tau_0$. If so, it increments \texttt{vol\_sum} for all nodes in the of $\tau_0$ rooted at $v$ by $2m - \texttt{vol}[v]$ (Lines 13--14). Conversely, if the reverse edge $(p(v), v)$ exists in $\tau_0$, it decrements \texttt{vol\_sum} for all nodes in $Sub(\tau_0, p(v))$ by $2m - \texttt{vol}[v]$ (Lines 15--16). After these updates, the algorithm queries \texttt{vol\_sum[$v$]} and incorporates its value into the \kc estimation (Line 17). Then, it recursively invokes \DFS on each child node of $v$ (Lines 18--19). Before returning from \DFS, related changes made to \texttt{vol\_sum} must be revert to maintain correctness for subsequent computations (lines 20-23).  

\begin{figure*}[t!]
  \centering
  \includegraphics[width=0.9\textwidth]{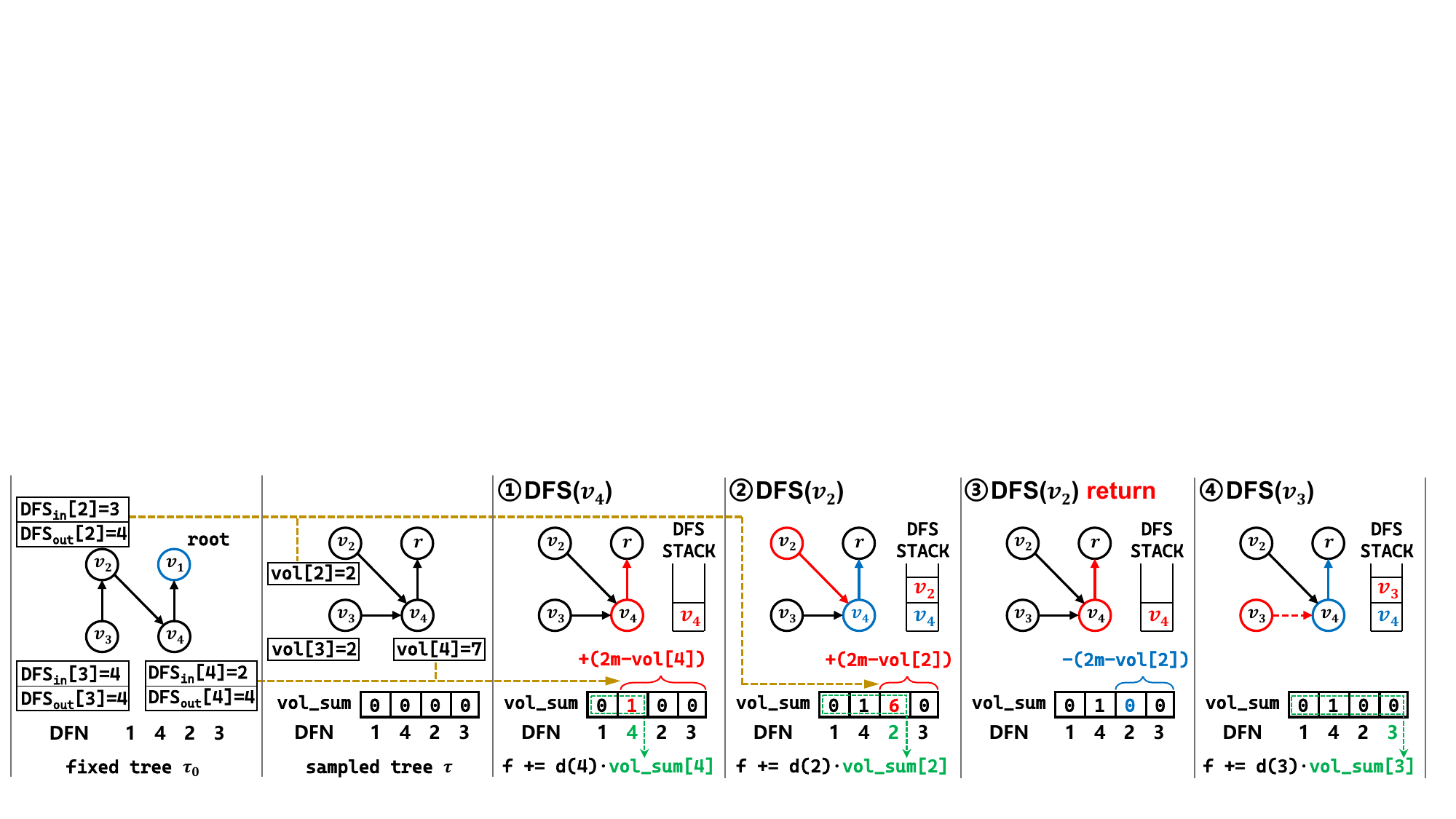}
  \vspace{-0.4cm}
  \caption{Illustration of the DFS process}
  \Description{Illustration of the DFS process}
  \vspace{-0.2cm}
  \label{fig:dfs}
\end{figure*}

\stitle{Implementation Details.} 
We describe the key implementation details of our approach, including the traversal strategy and efficient management of the auxiliary structure \texttt{vol\_sum}.

\textbf{1. Why Use DFS.} Based on Theorem~\ref{theo:optimized_correctness}, in principle, any traversal method can be used to visit all nodes. For each node $v$, if the edge $(v, p(v))$ connecting $v$ to its parent also exists in the fixed spanning tree $\tau_0$, we need to identify the intersection of the subtrees rooted at $v$ in $\tau$ and $\tau_0$, and update the \texttt{vol\_sum} for these nodes by $2m - \mathrm{vol}(\textit{Sub}(\tau, v))$, or decrease it if the edge $(v, p(v))$ has opposite directions in the two trees. Explicitly comparing subtrees to find their intersection is costly, with potential complexity $O(n)$, which motivates the choice of DFS traversal. By using DFS, we can update the \texttt{vol\_sum} values for all nodes in the subtree of $v$ in $\tau_0$ without explicitly identifying intersection nodes. While this may temporarily update some nodes incorrectly, these nodes are not visited during the DFS call for $v$ in $\tau$, and restoring \texttt{vol\_sum} after DFS ensures correctness when each node is visited.

\textbf{2. Implementation of Auxiliary Structure \texttt{vol\_sum}.} To efficiently implement these updates, we combine Depth-First Numbering (DFN) with Binary Index Tree (BIT, or Fenwick Tree~\cite{Fenwick94}).

\textit{2.1 Depth-First Numbering.} DFN records the entry and exit times of each node during DFS traversal, denoted as \texttt{DFS\textsubscript{in}} and \texttt{DFS\textsubscript{out}}. With this notation, the set of nodes in a subtree rooted at $u$ can be precisely characterized by the interval $[\texttt{DFS\textsubscript{in}}[u], \texttt{DFS\textsubscript{out}}[u]]$. Consequently, updating all nodes within a subtree is equivalent to updating the nodes corresponding to a contiguous interval of DFN. For instance, as illustrated in Fig.~\ref{fig:dfs}, the first column shows the entry and exit times of nodes $v_2, v_3,$ and $v_4$. The DFN of the tree $\tau_0$ is $v_1 \rightarrow v_4 \rightarrow v_2 \rightarrow v_3$. The subtree rooted at $v_2$ corresponds to the interval $[3,4]$, which exactly covers the nodes $v_2$ and $v_3$.

\textit{2.2 Binary Index Tree.} For each visited node, two updates (lines 13--16 and lines 20--23) and one query (line 17) are required. Standard arrays allow $O(1)$ queries but $O(n)$ updates, while difference arrays allow $O(1)$ interval updates but $O(n)$ queries. The BIT makes a balance so that when storing a difference array using BIT, it supports interval updates and point queries in $O(\log n)$ time:
\begin{itemize}[leftmargin=2em]
\item \textit{Add$(l, r, v)$}: add value $v$ to all elements in the range $[l, r]$.  
\item \textit{Query$(i)$}: query the $i$-th element value.  
\end{itemize}
This combination of DFS traversal, DFN ordering, and BIT enables efficient and correct maintenance of the \texttt{vol\_sum} auxiliary structure throughout the traversal process.

\begin{example}
    Fig.~\ref{fig:dfs} illustrates how \DFS and \bit efficiently maintain \texttt{vol\_sum} and compute $f(\tau)$ for sampled USTs. The first column shows the fixed tree $\tau_0$ with its \dfn ($1-4-2-3$), the second column displays the sampled tree $\tau$ along with its subtree volumes.

    Beginning the DFS from $v_4$, we examine its outgoing edges $(v_4, v_1)$ and find it also appears in $\tau_0$. Using the \bit, we efficiently update \texttt{vol\_sum} by adding $2m-$\texttt{vol[4]} to all nodes in subtree of $v_4$ in $\tau_0$, using $O(\log n)$ time. After updating \texttt{vol\_sum}, we query \texttt{vol\_sum[$v_4$]}, which corresponds to the second element in \dfn. Since \texttt{vol\_sum} is maintained as a differential array, the actual value is computed as a prefix sum using \bit, also in $O(\log n)$ time.

    The same procedure applies to $v_2$ and $v_3$. For $v_2$, since both edges $(v_2, v_4)$ and $(v_4, v_1)$ appear in both trees and lie on the path from $v_2$ to the root. Thus, \texttt{vol\_sum[2]} is influenced by both the update for $(v_2, v_4)$ and the earlier update for $(v_4, v_1)$. In contrast, the path from $v_3$ to the root in $\tau$ does not include $(v_2, v_4)$. Therefore, before DFS backtracks from $v_2$, we must remove its contribution (third step), to ensure the correctness of query \texttt{vol\_sum[3]} in the fourth step.
\end{example}

\begin{theorem}[Time Complexity of Algorithm~\ref{alg:static}]\label{theo:time_complexity_static}
    The time complexity of Algorithm~\ref{alg:static} is $$O\left(\omega\cdot \left(\mathrm{Tr}(I-P_r)^{-1}+n \min(\Delta, \log n)\right) \right).$$ 
\end{theorem}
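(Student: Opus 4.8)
The plan is to bound the cost of \StaticAlg{} by separating the one-time preprocessing (Lines 2--4) from the main loop (Lines 5--9), and then to decompose the work inside a single iteration into three pieces: drawing one UST via Wilson's algorithm (Line 6), computing all subtree volumes of that UST (Lines 7--8), and running the \bit-based \dfs{} traversal (Line 9). Summing these per-iteration costs and multiplying by the number of samples $\omega$ will give the stated bound. Throughout, the Wilson's-algorithm term is an \emph{expected} cost, so the final statement is understood as an expected time complexity.

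I would first dispatch the easy parts. The preprocessing builds the \bfs{} tree $\tau_0$ in $O(n+m)$, computes its \dfn{} ordering in one pass in $O(n)$, and initializes the \bit{} in $O(n)$; since this is executed once and every iteration already costs $\Omega(n)$, it is dominated by the loop and does not appear in the bound. Within an iteration, computing $\mathrm{vol}(Sub(\tau_i,u))$ for all $u$ (Lines 7--8) is a single post-order pass over $\tau_i$—each subtree volume equals the node's degree plus the sum of its children's subtree volumes—so this costs $O(n)$ per sample.

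The main obstacle is bounding the expected running time of one call to Wilson's algorithm rooted at $r$. Its running time is proportional to the total number of random-walk steps taken before every node is absorbed into the growing tree (loop-erasure is $O(1)$ bookkeeping per step). By the cycle-popping analysis of Wilson, the set of popped cycles—and hence the expected number of arrows consumed—is independent of the processing order, and the expected number of arrows consumed at each non-root node $u$ equals the diagonal Green's-function entry $\left((I-P_r)^{-1}\right)_{uu}$. Summing over $u\neq r$, the expected total number of steps is exactly $\mathrm{Tr}(I-P_r)^{-1}$, the quantity already identified in Eq.~(\ref{eq:KC_TrI-P}); thus each sample costs $O(\mathrm{Tr}(I-P_r)^{-1})$ in expectation. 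I expect this to be the delicate point: one must argue via the exact cycle-popping identity, rather than a crude worst-case path-length bound, that erased loops do not inflate the expected step count.

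Finally I would analyze the traversal of Line 9. Theorem~\ref{theo:optimized_correctness} guarantees correctness for any visiting order, so using \dfs{} on $\tau_i$ visits each node once and performs a constant number of \bit{} operations per node; since the nodes of any subtree of $\tau_0$ occupy a contiguous \dfn{} interval, each range-update and point-query costs $O(\log n)$, for a total of $O(n\log n)$ per sample. The alternative naive method instead walks each of the $n-1$ root paths of $\tau_0$, whose lengths are bounded by the depth of the \bfs{} tree and hence by the diameter $\Delta$, giving $O(n\Delta)$ per sample; running whichever is cheaper yields $O(n\min(\Delta,\log n))$. Combining the three pieces, the per-iteration cost is $O\!\left(\mathrm{Tr}(I-P_r)^{-1}+n+n\min(\Delta,\log n)\right)=O\!\left(\mathrm{Tr}(I-P_r)^{-1}+n\min(\Delta,\log n)\right)$, and multiplying by the $\omega$ iterations of the outer loop gives the claimed $O\!\left(\omega\cdot\left(\mathrm{Tr}(I-P_r)^{-1}+n\min(\Delta,\log n)\right)\right)$.
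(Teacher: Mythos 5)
Your proposal is correct and follows essentially the same decomposition as the paper's proof: per-sample cost split into Wilson's algorithm at $O\bigl(\mathrm{Tr}((I-P_r)^{-1})\bigr)$ expected time, an $O(n)$ subtree-volume pass, and a traversal costing $O(n\log n)$ via \bit{} or $O(n\Delta)$ via the $\Delta$-bounded naive alternative, yielding $O(n\min(\Delta,\log n))$, all multiplied by $\omega$. Your additions—the explicit cycle-popping justification that the expected Wilson step count equals $\mathrm{Tr}(I-P_r)^{-1}$, and the observation that the bound is an expected-time guarantee—are refinements the paper leaves implicit rather than a different route.
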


\iffullversion
    \begin{proof}
    The time complexity of Wilson Algorithm for generating a UST is $O\left(\mathrm{Tr}\left((I - P_{r})^{-1}\right)\right)$, where $P_r$ is the transition probability matrix with its $r$-th column and row removed. Calculating the volume of each subtree requires $O(n)$ time. 
    For the \texttt{DFS} function, traversing all nodes takes $O(n)$ time. For each node, updating and querying the \texttt{vol\_sum} incurs an additional $O(\log n)$ cost, due to the properties of the Binary Indexed Tree. Consequently, the time complexity of each iteration is $O(n\log n)$. 
    
    However, if the graph has a small diameter $\Delta$ such that $\Delta < \log n$, the computation of \texttt{vol\_sum} can be limited to within $\Delta$ steps from each node to the root, effectively reducing the complexity to $O(n \Delta)$. Therefore, we use $O(n \cdot \min(\Delta, \log n))$ to capture both scenarios.
    Multiplying by the number of samples $T$, the overall time complexity of the algorithm is as stated in the theorem.
    \end{proof}
\else
    \begin{proof}[Proof Sketch]\color{RM}
      The Wilson algorithm generates a UST in $O\big(\mathrm{Tr}((I - P_{r})^{-1})\big)$ time. Computing subtree volumes takes $O(n)$, and \kc estimation costs $O(n\Delta)$ for the naive method versus $O(n \log n)$ for the optimized one. Full details are in~\cite{full}.
    \end{proof}
\fi

\begin{theorem}[Error Bound of Algorithm~\ref{alg:static}] \label{theo:error_bound}
  If the sample size satisfies $\omega \geq \frac{8m^{2}\Delta^{2}_{G}\log (2/p_f)}{n^2\varepsilon^{2}}$, then Algorithm~\ref{alg:static} outputs an estimate $\tilde{\kappa}$ such that $|\tilde{\kappa} - \kappa| \leq \varepsilon \kappa$ with probability at least $1 - p_f$.
\end{theorem}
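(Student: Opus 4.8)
The plan is to recognize $\tilde\kappa$ as the empirical mean of $\omega$ independent, identically distributed, bounded random variables and then apply a Hoeffding-type concentration inequality. Writing $X_i = \tfrac{1}{2m}f(\hat\tau_i)$, we have $\tilde\kappa = \tfrac{1}{\omega}\sum_{i=1}^{\omega}X_i$. Since the USTs $\hat\tau_i$ are sampled independently and uniformly, the $X_i$ are i.i.d., and Theorem~\ref{theo:naive_correctness} gives $\mathbb{E}[X_i]=\kappa$. Hoeffding's inequality then yields
\begin{equation*}
\Pr\big[\,|\tilde\kappa-\kappa|\ge \varepsilon\kappa\,\big]\le 2\exp\!\left(-\frac{2\omega\varepsilon^{2}\kappa^{2}}{R^{2}}\right),
\end{equation*}
where $R$ is a uniform bound on the range of each $X_i$. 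Setting the right-hand side equal to $p_f$ and solving for $\omega$ gives the threshold $\omega\ge \frac{R^{2}\log(2/p_f)}{2\varepsilon^{2}\kappa^{2}}$, so everything reduces to two subtasks: (i) bounding $R$, and (ii) removing the data-dependent factor $\kappa^{2}$ from the denominator.

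For subtask (i) I would bound the range of $X_i$ directly from the definition of $f(\tau)$. Each term of $f(\tau)$ is indexed by $u\neq r$ and weighted by $d(u)$, while the inner forward and reverse path-mapping sums range over 2-forests obtained by deleting edges lying on the BFS path $\mathcal{P}_u$. Because $\tau_0$ is a BFS tree, its depth — and hence the length of every $\mathcal{P}_u$ — is at most the diameter $\Delta$, so each inner sum has at most $\Delta$ contributing edges. Using $\mathrm{vol}(T_1)\le 2m$ for every split and $\sum_{u\neq r}d(u)\le 2m$, the net contribution of a single sample is $O(m^{2}\Delta)$, so $R = O(m\Delta)$ for $X_i=f(\hat\tau_i)/(2m)$. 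To hit the stated constant one needs the sharp form $X_i\in[0,2m\Delta]$, i.e. $R\le 2m\Delta$.

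For subtask (ii), the key is that $\kappa$ admits an absolute lower bound independent of the samples. From Eq.~\eqref{eq:KC_eigen}, $\kappa=\sum_{i=2}^{n}\sigma_i^{-1}$, and since the normalized-Laplacian eigenvalues satisfy $\sigma_i\le 2$, we get $\sigma_i^{-1}\ge \tfrac12$ and hence $\kappa\ge \tfrac{n-1}{2}$, i.e. $\kappa^{2}=\Omega(n^{2})$. Substituting $R^{2}=O(m^{2}\Delta^{2})$ and $\kappa^{2}=\Omega(n^{2})$ into the threshold $\frac{R^{2}\log(2/p_f)}{2\varepsilon^{2}\kappa^{2}}$ replaces $\kappa$ by $n$ and produces precisely the claimed bound $\omega\ge \frac{8m^{2}\Delta_G^{2}\log(2/p_f)}{n^{2}\varepsilon^{2}}$, after which the failure probability is at most $p_f$.

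I expect the range estimate of subtask (i) to be the main obstacle, since it is exactly where the constant is won or lost. A naive bound that treats the forward and reverse path-mapping sums separately yields an interval of length $\approx 4m\Delta$ and overshoots the constant by a factor of four; obtaining the sharp range $R\le 2m\Delta$ requires exploiting the nested, telescoping structure of the subtree volumes along each path (as in Theorem~\ref{theo:optimized_correctness}) so that the signed forward-minus-reverse contribution collapses into a single nonnegative quantity rather than doubling across the two maps. Once $R\le 2m\Delta$ is established, the remaining algebra in (ii) and the concentration step are routine.
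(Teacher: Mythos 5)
Your overall architecture---unbiasedness from Theorem~\ref{theo:naive_correctness}, a deterministic bound on $f(\tau)$ via the BFS-path length $\le \Delta$ and $\mathrm{vol}(T_1)\le 2m$, Hoeffding's inequality, and a lower bound on $\kappa$ to convert relative error into absolute error---is exactly the paper's. But there is a genuine gap precisely at what you call the main obstacle. You claim the stated constant forces the one-sided range $X_i\in[0,2m\Delta]$, to be obtained by a ``telescoping'' argument; nothing in your sketch (or in the paper) supports this. The quantity $f(\tau)$ is signed: for an individual sampled tree the reverse path-mapping sum can exceed the forward one (an interior edge of $\mathcal{P}_u$ can be traversed backwards by the tree path from $u$ to $r$ with no forward-shared edge present), so per-sample nonnegativity of $f(\tau)$ is not guaranteed---only its expectation $2m\kappa$ is positive---and Theorem~\ref{theo:optimized_correctness} merely rewrites the two sums; it does not collapse them into a single nonnegative quantity. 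As written, your route to the constant $8$ is therefore incomplete (what you can actually prove with $\kappa\ge (n-1)/2$ and the two-sided range is the weaker threshold with constant $32$).

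The gap is, however, an artifact of a misdiagnosis in your subtask (ii). Lower-bounding $\kappa$ by $(n-1)/2$ from $\sigma_i\le 2$ is what costs the factor of four, not the range. The paper instead replaces the deviation threshold $\varepsilon\kappa$ by $\varepsilon n$, i.e.\ it uses $\kappa\ge n$ (up to negligible slack: since $\sum_{i=2}^{n}\sigma_i=\mathrm{Tr}(\mathcal{L})=n$, Cauchy--Schwarz/AM--HM gives $\kappa=\sum_{i\ge 2}\sigma_i^{-1}\ge (n-1)^2/n$, which is the content of the first inequality in the paper's proof). With that substitution, the crude two-sided range $|X_i|\le 2m\Delta$, i.e.\ width $R=4m\Delta$---which follows from your own counting: at most $\Delta$ shared edges per node across both mappings, each contributing at most $2m$ in absolute value, and $\sum_{u\ne r}d(u)\le 2m$---already yields $\omega \ge \frac{(4m\Delta)^2\log(2/p_f)}{2\varepsilon^{2}n^{2}}=\frac{8m^{2}\Delta^{2}\log(2/p_f)}{n^{2}\varepsilon^{2}}$, exactly the theorem. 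So drop the unproven (and likely false) sharpening $R\le 2m\Delta$, strengthen the eigenvalue lower bound via the trace identity, and your argument closes and coincides with the paper's proof.
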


\iffullversion
    \begin{proof}
      For each sampled tree $\tau$, the number of 2-forests mapped by $\tau_0$ for a node $u$ is bounded by the distance from $u$ to the root $r$. If we construct $\tau_0$ such that its depth is minimized, this distance is at most $\Delta$, the diameter of the graph. Additionally, for each forest, the volume $\mathrm{vol}(T_1)$ is bounded by $2m$. Consequently, $f(\tau)$ can be bounded as $f(\tau) \leq \sum_{u} d(u) \cdot 2m \Delta \leq 4m^{2} \Delta.$
      
      Applying Hoeffding's inequality \cite{hoeffding1994probability}, we derive the following bound on the probability of deviation:
      \begin{align*}
        \mathrm{Pr}\left(\left| \frac{1}{2mT} \sum_{i=1}^{T} f(\tau_i) - \kappa \right| \geq \varepsilon \kappa\right) & \leq \mathrm{Pr}\left(\left| \frac{1}{2mT} \sum_{i=1}^{T} f(\tau_i) - \kappa \right| \geq \varepsilon\cdot n\right) \\
        &\leq 2 \exp\left(-\frac{2\varepsilon^{2}n^2T}{(4m\Delta)^{2}}\right) \\
        &\leq 2 \exp\left(-\frac{2\varepsilon^{2}n^2 \cdot 8m^{2}\Delta^{2}\log(\frac{2}{p_f}) }{16m^{2}\Delta^{2} \cdot \varepsilon^{2}n^{2}}\right) \\ 
        &\leq p_f.
      \end{align*}
      This completes the proof.
    \end{proof}
\else
    \begin{proof} [Proof Sketch]\color{RM}
      The proof relies on bounding $f(\tau)$ and applying Hoeffding's inequality. Using the naive method, we can show that $f(\tau) \leq 4 m^{2} \Delta$, by bounding the length of each fixed path $P_u$ by $\Delta$, where $\Delta$ denotes the diameter of the graph. For a detailed proof, please refer to the full version of this paper~\cite{full}.
    \end{proof}
\fi

By combining Theorem~\ref{theo:time_complexity_static} and Theorem~\ref{theo:error_bound}, we conclude that \StaticAlg can compute \kc in $O\left(\frac{\Delta^2}{\epsilon^2}(\phi+n\min(\log n, \Delta))\right)$ time to achieve a relative error $\epsilon$, where $\phi = \mathrm{Tr}(I-P_r)^{-1}$ is the time required for UST sampling. Our experiments indicates that time complexity is near-linear in practice. Algorithm~\ref{alg:static} details the optimized estimation in $O(n \log n)$ time. For low-diameter graphs, a naive $O(n\Delta)$ implementation can also be used, which matches the computational cost of \SpanTree~\cite{liao2023scalable}. As shown in Section~\ref{sec:experiment}, our method consistently outperforms \SpanTree across all tested graphs, validating the effectiveness of the $O(n \log n)$ design over the $O(n\Delta)$ baseline in real-world datasets.

\stitle{Discussion. } Compared to prior \sota methods, our approach offers stronger theoretical guarantees, particularly over LERW-based methods. Although ForestMC~\cite{xia2024efficient} provides an error bound, it requires $O(\epsilon^{-2}\Delta^{2}d_{max}^{2\Delta}\phi\log^{3}n )$ samples to achieve a relative error $\epsilon$, where the factor $d_{max}^{2\Delta}$ is impractical for most real-world datasets. For instance, the diameter of \dblp used in our experiments is $21$, and its maximum degree is $343$, which results in an astronomical sample size of $O(d_{max}^{2\Delta})=O(343^{42})\approx 10^{106}$, rendering the bound meaningless. The fundamental reason LERW-based methods lack strong theoretical guarantees lies in the high variance of LERWs. Consider a line graph: if a random walk starts at one endpoint and terminates upon reaching the other, the number of steps ranges from at least $n-1$ to potentially infinity. This makes it challenging to establish tight bounds for arbitrary graphs, and performance becomes highly dependent on the graph structure. In contrast, our bound depends on the graph diameter, which is typically small in real-world networks, as demonstrated in our experiments.

We identify two closely related works~\cite{chung2023forest, liao2023scalable}. 
Chung and Zeng~\cite{chung2023forest} represent entries of the Laplacian pseudoinverse and normalized Laplacian pseudoinverse using rooted 2-forests, which further leads to a series of forest-based formulas, including \kc. Our Theorem~\ref{theo:FuvKC} is derived as a direct extension of their Corollary~1.7. Liao et al.~\cite{liao2023scalable} propose the \SpanTree method, which to the best of our knowledge is the only existing approach that also estimates the Kemeny constant via UST sampling, and its computational framework is similar to our \StaticAlg. We next highlight the distinctions and innovations of our approach compared with these two works.
 
\stitle{Comparison to \cite{chung2023forest}.} 
Corollary~1.7 in \cite{chung2023forest} provides a clean and elegant forest formula for \kc, offering a combinatorial interpretation in terms of enumerating 2-forests. However, its practical utility is limited, since uniformly sampling 2-forests is computationally intractable on large-scale graphs. Unlike spanning trees, a 2-forest requires an exact partition of the node set into two disjoint connected components, and a naive strategy that first partitions the nodes and then generates a spanning tree in each part incurs exponential overhead with $O(2^{|V|})$ possible partitions. To the best of our knowledge, no efficient method for direct sampling of 2-forests currently exists, which prevents this formula from being used in practice. This limitation actually motivates our work: by introducing a new formula for \kc (Eq.~(\ref{eq:FuvKC})) together with the path mapping technique, we make the theoretical result of~\cite{chung2023forest} practically computable and enable efficient estimation through sampling.

\stitle{Comparison to \SpanTree \cite{liao2023scalable}.} 
While our method also leverages USTs, similar to \SpanTree proposed in \cite{liao2023scalable}, the foundational principles differ significantly. \SpanTree estimates \kc based on Eq.~(\ref{eq:KC_TrI-P}) via an electrical interpretation, aggregating current flow across sampled USTs. In contrast, our approach is built upon the forest formulas of \kc (Eq.~(\ref{eq:FuvKC})) and introduces a novel path mapping technique to convert trees into 2-forests, which further allows us to incorporate the optimization of \bit to enhance computational efficiency. Experimental results confirm that our method dominates \SpanTree across all tested scenarios.

\section{\kc Computation on Dynamic Graphs} \label{sec:dynamic algorithm}

In this section, we focus on the problem of calculating \kc on dynamic graphs. Our algorithms are improved based on the static algorithm presented in Section~\ref{sec:static algorithm}, where we store needed information for those sampled spanning trees and maintain the correctness of samples to efficiently update \kc. In Section 4.1, we first present a basic maintenance method, and in Section 4.2, we introduce an improved method that more efficiently maintains samples. Theoretical analysis of both algorithms is proved respectively.

\subsection{Basic Samples Maintenance (BSM)}
After a new edge is added or removed from the graph, the spanning trees of the updated graph $G'$ largely overlap with those of the original graph, with the only difference being the additional (or missing) spanning trees that involve the updated edge $e$, assuming both graphs remain connected. Hence, a natural idea is to adapt the sampled USTs by supplementing or removing the spanning trees that differ between $G$ and $G'$. The resulting set of “valid” spanning trees can then be directly used to estimate the \kc.

\stitle{Insertion Case.}  
We first consider the scenario where a new edge $e = (u, v)$ is inserted into the graph. Let $\Gamma$ denote the spanning tree set of the original graph $G$, and let $\Gamma'$ represent the corresponding set of the updated graph $G'$ after inserting the edge $e$. The set $\Gamma'$ comprises all spanning trees in $\Gamma$, along with additional spanning trees that include the newly inserted edge $e$. We define the subset of all these new spanning trees in $\Gamma'$ that contain $e$ as $\Gamma_e'$.  

\begin{figure}[t!]
    \centering
    \includegraphics[width=0.95\linewidth]{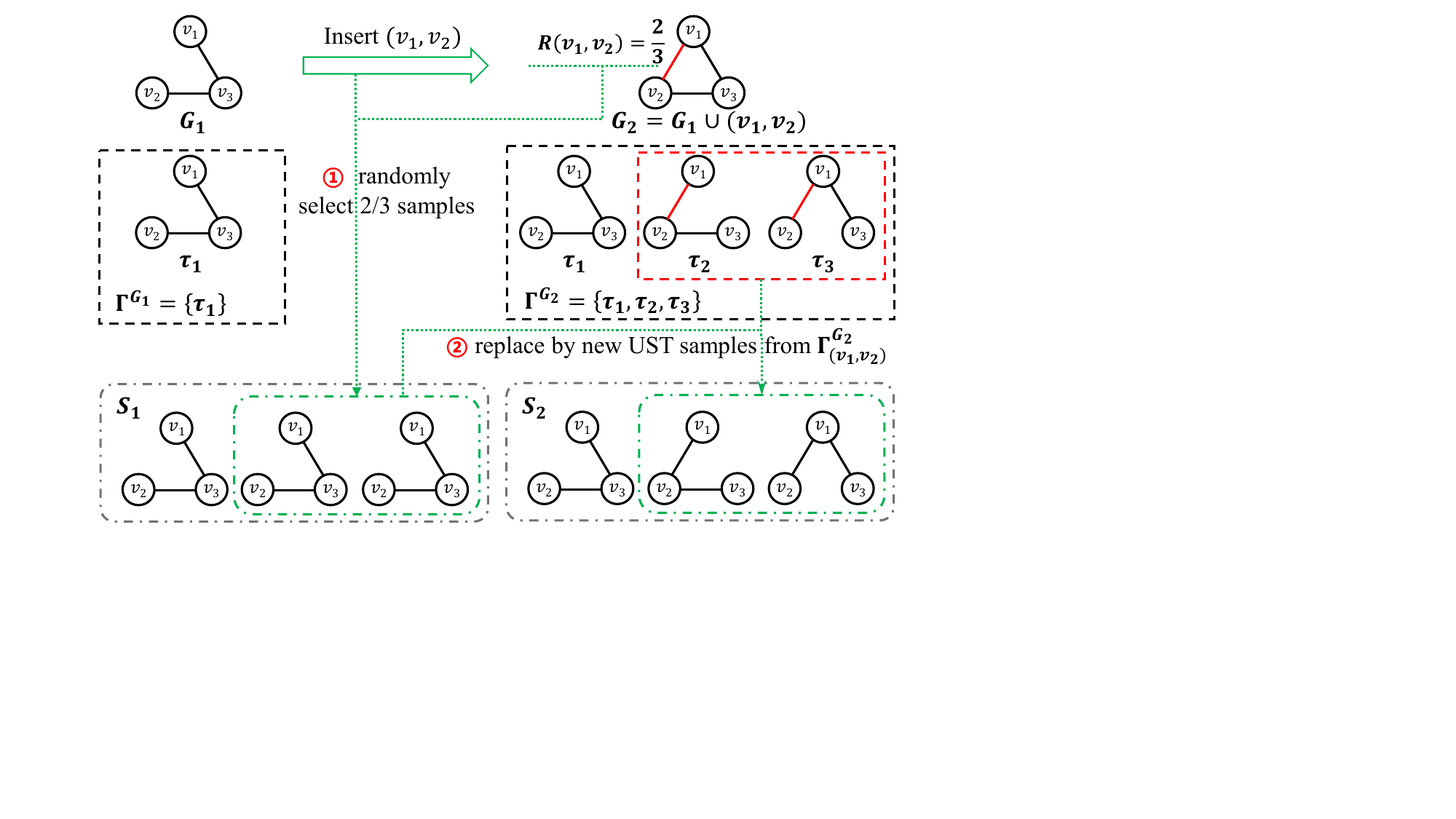}
    \vspace{-0.3cm}
    \caption{Illustration of Basic Samples Maintenance for Edge Insertion}
    \label{fig:edge-ins}
    \vspace{-0.3cm}
\end{figure}

A well-established fact states that the proportion of spanning trees containing a given edge is equal to the effective resistance of that edge, i.e., $|\Gamma_e'|/|\Gamma'| = R(e)$~\cite{sp_tree_hayashi2016efficient, lovasz1993random}. Here, $R(e)$ denotes the effective resistance between nodes $u$ and $v$ for edge $e = (u, v)$ in $G'$. Before the update, we already have some sampled spanning trees that are uniformly distributed in $\Gamma$. Our goal is to incorporate the additional spanning trees in $\Gamma_e'$ that appear after the update. Using the relationship described above, we can determine the expected proportion of these new spanning trees in the samples for $G'$. By replacing part of the old samples with these added USTs, we obtain an unbiased estimation on the updated graph.

\begin{example}
    Fig.~\ref{fig:edge-ins} illustrates the process of maintaining the UST-samples when an edge is inserted. When the edge $(v_1, v_2)$ is added to $G_1$, we first compute its effective resistance in the updated graph $G_2$, which is $R(v_1, v_2) = \frac{2}{3}$. According to the properties of effective resistance, in the new graph, the edge $(v_1, v_2)$ should appear in approximately $\frac{2}{3}$ of the USTs. To update the sample set accordingly, we randomly select $\frac{2}{3}$ of the USTs from the original samples $S_1$ and replace them with the same number of new USTs that include $(v_1, v_2)$. This process is performed using a modified version of Wilson's algorithm. Finally, we utilize the updated samples $S_2$ to estimate the \kc of $G_2$.
\end{example}

\begin{theorem}[Correctness of \Basic for Insertion Case] \label{theo:unbias estimation insert update}
  \begin{align} \label{eq:unbias_add}
    \mathbb{E}_{\tau \sim U(\Gamma')}[f(\tau)] = &(1 - R(e)) \mathbb{E}_{\tau \sim U(\Gamma)} [f(\tau)] \notag \\
    &+ R(e) \mathbb{E}_{\tau \sim U(\Gamma_e')} [f(\tau)],  
  \end{align}
  where $U(\Gamma)$ represents the uniform distribution over the set $\Gamma$, i.e., $\tau$ is sampled with probability $\frac{1}{|\Gamma|}$ for all $\tau \in \Gamma$.   The function $f(\tau)$ corresponds to the calculation performed on the given tree $\tau$. In this paper, it refers to the definition in Theorem~\ref{theo:naive_correctness}.
\end{theorem}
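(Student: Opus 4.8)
The plan is to prove the identity by exhibiting a clean disjoint partition of the updated spanning tree set and then applying the law of total expectation. The starting observation is that every spanning tree of the updated graph $G'$ either uses the newly inserted edge $e$ or it does not. The trees that do not use $e$ are built entirely from edges of the original graph $G$, and conversely every spanning tree of $G$ remains a spanning tree of $G'$ that happens to avoid $e$; hence the set of $e$-avoiding trees in $\Gamma'$ is exactly $\Gamma$. By definition $\Gamma_e'$ collects precisely the trees of $G'$ that contain $e$. This gives the disjoint decomposition $\Gamma' = \Gamma \,\sqcup\, \Gamma_e'$, so that $|\Gamma'| = |\Gamma| + |\Gamma_e'|$. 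I would state and briefly justify this partition first, since everything downstream rests on it.

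With the partition in hand, I would expand the left-hand side directly from the definition of the uniform expectation and split the sum along the partition:
\begin{align*}
  \mathbb{E}_{\tau \sim U(\Gamma')}[f(\tau)]
  &= \frac{1}{|\Gamma'|}\sum_{\tau \in \Gamma'} f(\tau)
   = \frac{1}{|\Gamma'|}\left(\sum_{\tau \in \Gamma} f(\tau) + \sum_{\tau \in \Gamma_e'} f(\tau)\right) \\
  &= \frac{|\Gamma|}{|\Gamma'|}\,\mathbb{E}_{\tau \sim U(\Gamma)}[f(\tau)]
   + \frac{|\Gamma_e'|}{|\Gamma'|}\,\mathbb{E}_{\tau \sim U(\Gamma_e')}[f(\tau)].
\end{align*}
The remaining step is to identify the two mixing weights. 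Here I would invoke the well-established fact quoted just before the theorem, namely $|\Gamma_e'|/|\Gamma'| = R(e)$, which immediately yields $|\Gamma|/|\Gamma'| = 1 - R(e)$ from the partition. Substituting these two weights turns the displayed equation into exactly the claimed formula, completing the argument.

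The computation itself is routine, so the only place that needs genuine care is the partition identity $\Gamma' \setminus \Gamma_e' = \Gamma$. The subtlety is that this relies on both $G$ and $G'$ being connected (so that $\Gamma$ and $\Gamma'$ are nonempty and the spanning tree notion is well defined throughout), and on the fact that inserting a single edge does not alter which edge-subsets of $G$ form spanning trees. I expect this to be the main obstacle only in the sense of being the step that must be stated precisely; once it is granted, the effective resistance identity does all the remaining work and the law of total expectation closes the proof.
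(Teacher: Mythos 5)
Your proposal is correct and follows essentially the same route as the paper's own proof: partition $\Gamma'$ into the disjoint sets $\Gamma$ and $\Gamma_e'$, split the sum defining the uniform expectation, and identify the mixing weights via $|\Gamma_e'|/|\Gamma'| = R(e)$. Your explicit justification of the partition identity $\Gamma' \setminus \Gamma_e' = \Gamma$ is a welcome bit of extra care that the paper states only in passing, but it does not change the argument.
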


\iffullversion
  \begin{proof}
    \begin{align*}
      \mathop{\mathbb{E}}\limits_{\tau \sim U(\Gamma')}[f(\tau)] =& \frac{1}{|\Gamma'|}\sum_{\tau \in \Gamma'} f(\tau) \\
      =& \frac{1}{|\Gamma'|} \left(\sum_{\tau \in \Gamma}f(\tau) + \sum_{\tau \in \Gamma_{e}'} f(\tau) \right) \\
      =& \frac{1}{|\Gamma'|} \sum_{\tau \in \Gamma}f(\tau) + \frac{1}{|\Gamma'|} \sum_{\tau \in \Gamma_{e}'} f(\tau) \\
      =& \frac{(1-R(e))}{|\Gamma|} \sum_{\tau \in \Gamma}f(\tau) + \frac{R(e)}{|\Gamma_{e}'|}  \sum_{\tau \in \Gamma_{e}'} f(\tau) \\
      =& (1-R(e)) \mathop{\mathbb{E}}\limits_{\tau \sim U(\Gamma)} [f(\tau)] 
      \\ &+ R(e) \mathop{\mathbb{E}}\limits_{\tau \sim U(\Gamma_e)} [f(\tau)].
    \end{align*}
    The second line holds because of $\Gamma \cup \Gamma_{e}' = \Gamma'$ and $\Gamma \cap \Gamma_{e}' = \emptyset$. And the forth line can be deduced by $|\Gamma_e'| / |\Gamma'| = R(e)$.
  \end{proof}
\else
  \begin{proof} [Proof Sketch]\color{RM}
    The proof is obtained by partitioning $\Gamma'$ into $\Gamma$ and $\Gamma_{e}'$, and applying the property of: $R(e) = |\Gamma_{e}'| / |\Gamma'|$. For a detailed proof, please refer to the full version of this paper~\cite{full}.  
  \end{proof}
\fi

\stitle{Remarks.}
Note that the updated samples set are not yet followed the uniform distribution $U(\Gamma')$. Nevertheless, Theorem~\ref{theo:unbias estimation insert update} guarantees that the estimator remains unbiased when we uniformly sample from both $\Gamma'$ and $\Gamma_{e}'$ and apply appropriate reweighting. Although the error bound may change, we assume that the sample size does not need adjustment under small updates in real-world datasets. Our experiments in Section~\ref{sec:experiment} confirm that this method effectively preserves estimation accuracy in practice.

\stitle{Deletion Case.} As for deleting an edge $e$, it is more easier to find the way to maintain samples. Only those sampled trees which contains $e$ should not exist in new samples. So we just need to remove all these tree and replace them by regular USTs for new graph. To avoid confusion, we still use $\Gamma$ to represent the case without edge $e$, and $\Gamma'$ to represent the spanning trees of the graph that contains the edge $e$. However, this time we aim to use samples from $U(\Gamma')$ to approximate the results under $U(\Gamma)$.

\begin{figure}[t!]
    \centering
    \includegraphics[width=\linewidth]{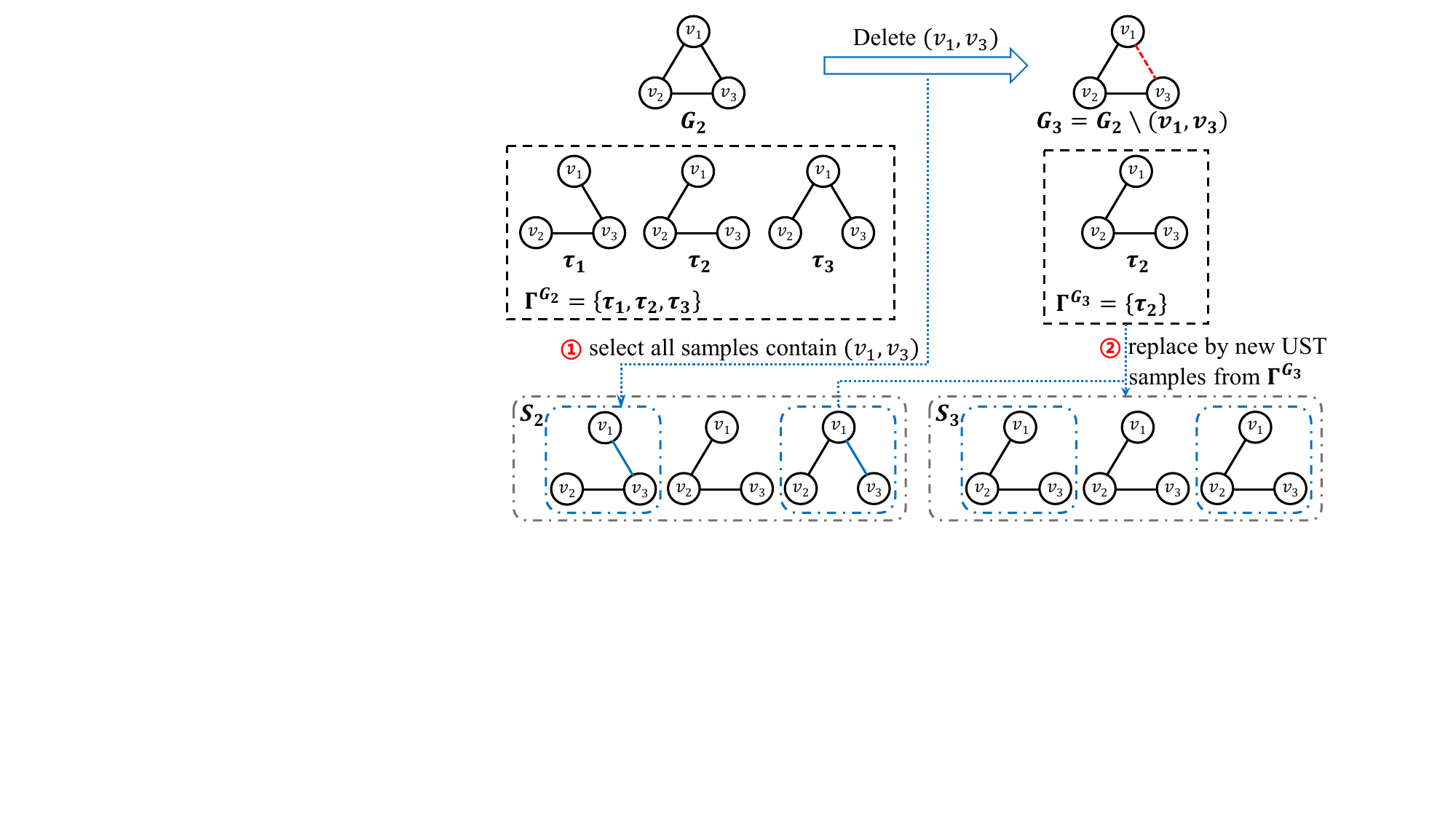}
    \caption{Illustration of BSM for Edge Deletion}
    \label{fig:edge-del}
    \vspace{-0.4cm}
\end{figure}

\begin{example}
    Fig.~\ref{fig:edge-del} illustrates the process of updating the UST-samples when an edge is deleted. When the edge $(v_1, v_3)$ is removed from $G_2$, all spanning trees that contain $(v_1, v_3)$ must be excluded from the updated sample set.  To achieve this, we first identify all USTs in the original sample set $S_2$ that include $(v_1, v_3)$. These trees are then replaced with the same number of newly generated USTs from the updated graph $G_3$, resulting in the new sample set $S_3$. Finally, updated samples in $S_3$ are used to approximate the \kc of $G_3$.
\end{example}

\begin{theorem}[Correctness of \Basic for Deletion Case]\label{theo:unbias estimation delete update}
\begin{equation*}\label{eq:unbias_del}
    \mathbb{E}_{\tau \sim U(\Gamma)}[f(\tau)] = \mathbb{E}_{\tau \sim U(\Gamma')}[f(\tau) \mid e\notin \tau].
\end{equation*}
\end{theorem}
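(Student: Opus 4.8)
The plan is to prove the identity by directly expanding the conditional expectation on the right-hand side and showing it collapses to the unconditional expectation over $U(\Gamma)$. The key observation is that the deletion case is the reverse of the insertion case: here $\Gamma'$ denotes the spanning trees of the graph \emph{containing} the edge $e$, and $\Gamma$ denotes the spanning trees of the graph \emph{without} $e$. Since every spanning tree of the graph without $e$ is also a spanning tree of the graph with $e$ (adding an edge never destroys a spanning tree, it only creates new ones that use $e$), we have the disjoint decomposition $\Gamma' = \Gamma \sqcup \Gamma_e'$, where $\Gamma_e'$ is the set of spanning trees in $\Gamma'$ that contain $e$. Equivalently, the trees in $\Gamma'$ that do \emph{not} contain $e$ are precisely the trees in $\Gamma$.

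First I would write out the definition of conditional expectation:
\begin{align*}
  \mathbb{E}_{\tau \sim U(\Gamma')}[f(\tau) \mid e \notin \tau]
  = \frac{\sum_{\tau \in \Gamma',\, e \notin \tau} f(\tau)}{\left|\{\tau \in \Gamma' \mid e \notin \tau\}\right|}.
\end{align*}
Next I would invoke the structural fact above to replace the event $\{\tau \in \Gamma' : e \notin \tau\}$ with exactly the set $\Gamma$. This turns both the numerator and the denominator into sums and counts over $\Gamma$:
\begin{align*}
  \mathbb{E}_{\tau \sim U(\Gamma')}[f(\tau) \mid e \notin \tau]
  = \frac{\sum_{\tau \in \Gamma} f(\tau)}{|\Gamma|}
  = \mathbb{E}_{\tau \sim U(\Gamma)}[f(\tau)],
\end{align*}
which is exactly the left-hand side. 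This completes the argument.

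The step I expect to require the most care is justifying the set-level identity $\{\tau \in \Gamma' : e \notin \tau\} = \Gamma$ rather than merely a cardinality match. One direction is immediate: any $\tau \in \Gamma$ uses only edges of the graph without $e$, so it certainly avoids $e$ while spanning the graph with $e$, hence lies in $\Gamma'$ and omits $e$. For the reverse inclusion, any spanning tree of the augmented graph that does not use $e$ is a spanning subgraph using only the original edges, and thus is a genuine spanning tree of the original graph, placing it in $\Gamma$. This bijection (in fact equality) of sets is what makes the conditional expectation under the larger uniform distribution coincide exactly with the unconditional expectation under the smaller one, and it is the only place where the graph-theoretic content of the claim enters.
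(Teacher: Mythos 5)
Your proposal is correct and takes essentially the same approach as the paper's proof: both expand the conditional expectation from its definition and use the fact that the trees of $\Gamma'$ avoiding $e$ are exactly the trees of $\Gamma$ (the paper phrases this via $\mathrm{Pr}[\tau \sim U(\Gamma') \mid e \notin \tau] = \frac{1}{|\Gamma'|} \big/ \frac{|\Gamma|}{|\Gamma'|} = \frac{1}{|\Gamma|}$, which is the same computation as your ratio of sums). Your explicit two-way verification of the set identity $\{\tau \in \Gamma' : e \notin \tau\} = \Gamma$ simply makes rigorous a step the paper treats as immediate.
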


\iffullversion
\begin{proof}
  \begin{align*}
    \mathbb{E}_{\tau \sim U(\Gamma')}[f(\tau) \mid e\notin \tau] &= \sum_{\tau \in \Gamma'}f(\tau) \mathrm{Pr}[\tau \sim U(\Gamma') \mid e \notin \tau] \\
    &= \sum_{\tau \in \Gamma}f(\tau) \frac{\mathrm{Pr}[\tau \sim U(\Gamma') \land \tau \in \Gamma]}{\mathrm{Pr}[e \notin \tau]} \\
    &= \sum_{\tau \in \Gamma}f(\tau) \frac{1}{|\Gamma'|} / \frac{|\Gamma|}{|\Gamma'|} \\
    &= \frac{1}{|\Gamma|} \sum_{\tau \in \Gamma}f(\tau) \\
    &= \mathbb{E}_{\tau \sim U(\Gamma)}[f(\tau)]
  \end{align*}
\end{proof}
\else
\begin{proof} [Proof Sketch]\color{RM}
  The result follows directly from the definition of conditional probability. For a detailed proof, please refer to the full version of this paper~\cite{full}.
\end{proof}
\fi

\iffullversion
    \begin{algorithm}[t!] 
      \small
      \caption{\Basic}
      \label{alg:update_simple_resample}
      \LinesNumbered
      \KwIn{A graph $G=(V,E)$, root $r$, an updated edge (u, v), original sample set $S$}
      \KwOut{$\tilde{\kappa}'$ as an updated estimation, updated sample set $S'$}
      $\tilde{\kappa}' \gets \tilde{\kappa}(G)$, $S' \gets S$\;
      \If{Update is \Add$(u, v)$}{
        Calculate $R(u, v)$ in $G \cup (u, v)$\;
        Randomly select $\lceil R(u, v) \cdot |S| \rceil$ trees from $S$\;
        \For{each selected tree $\tau$} {
          $\tilde{\kappa}' \gets \tilde{\kappa}' - f(\tau) / |S|,\quad S' \gets S' \setminus \tau$\;
          $\tau' \gets$ \Wilson$(G', (u, v))$\;
          Redirect edges in $\tau'$ to point towards $r$\;
          $f(\tau’)\gets$\DFS $(r, \tau', \tau_0)$\;
          $\tilde{\kappa}' \gets \tilde{\kappa}' + f(\tau') / |S|,\quad S' \gets S' \cup \tau'$\;
        }
      }
      \If{Update is \Del$(u,v)$}{
        \For{each tree $\tau \in S$ that contains edge $(u, v)$}{
          $\tilde{\kappa}' \gets \tilde{\kappa}' - f(\tau) / |S|,\quad S' \gets S' \setminus \tau$\;
          $\tau' \gets$ \Wilson$(G', r)$\;
          $f(\tau’)\gets$\DFS $(r, \tau', \tau_0)$\;
          $\tilde{\kappa}' \gets \tilde{\kappa}' + f(\tau') / |S|,\quad S' \gets S' \cup \tau'$\;
        }
      }
      \Return $\tilde{\kappa}', S'$;
    \end{algorithm}
    
    The pseudo-code for the basic sample maintenance (\Basic) algorithm, covering both edge insertion and deletion cases, is presented in Algorithm~\ref{alg:update_simple_resample}. For each sampled tree, both $f(\tau)$ and its corresponding edges need to be stored. In the case of edge insertion, we first compute the effective resistance $R(u, v)$ in the updated graph to determine the number of indices that need to be replaced (Lines 3--4). For the computation of single-pair effective resistance, we employ the state-of-the-art method \Bipush~\cite{er_liao2024efficient}. For the selected trees that are to be discarded, their contribution to the estimator $\tilde{K}$ should be subtracted (Line 6). To sample a uniform spanning tree that includes the newly inserted edge $(u, v)$, A modified Wilson algorithm can be applied by setting $u$ and $v$ as the root \cite{wilson2_avena2018random, wilson3_predari2023greedy} and then redirecting all tree edges toward the root node $r$ (Lines 7--8). Finally, we compute $f(\tau')$ for these newly generated trees following the approach in Algorithm~\ref{alg:static} and store them as part of the updated samples (lines 9--10). In deletion case, we just check all samples whether contains $(u, v)$. For those trees with $(u, v)$, we replace them with uniformly spanning trees without $(u, v)$ by using the regular Wilson algorithm on the graph $G'$, and recalculate their contribution for $\tilde{K}$ (Lines 12-16). 
\else
    Having discussed the insertion and deletion cases separately, we now summarize the general idea of \Basic. The key idea is to efficiently update the set of sampled spanning trees after an edge modification. When an edge is inserted or deleted, a portion of the samples is replaced with new USTs drawn from $U(\Gamma_e')$ (for insertion) or $U(\Gamma)$ (for deletion), in proportions determined by the effective resistance $R(e)$. Their contributions to the estimator are then updated accordingly. The complete pseudo-code covering both insertion and deletion cases is provided in the full version~\cite{full}.
\fi

\begin{theorem}[Correctness and Time Complexity of \Basic] \label{theo:correct complexity for simple update}
    \Basic returns an estimate $\tilde{\kappa}'$ that satisfies the relative-error guarantee. The overall time complexity of \Basic is $$O\left(R(e)\omega\cdot \left(\mathrm{Tr}(I - P_e)^{-1} + n \min(\Delta, \log n)\right) \right),$$ 
where $e=(u,v)$ is the inserted (or deleted) edge, $R(e)$ denotes its effective resistance in the graph that includes $e$, and $P_e$ is the transition matrix with the $u$-th and $v$-th rows and columns removed.
\end{theorem}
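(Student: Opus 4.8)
The plan is to split the claim into its two halves — the relative-error guarantee and the running-time bound — and reduce each to machinery already in hand. The correctness half rests entirely on \emph{unbiasedness}, which Theorems~\ref{theo:unbias estimation insert update} and~\ref{theo:unbias estimation delete update} already supply: for insertion, replacing an $R(e)$-fraction of the samples produces a mixture whose per-sample expectation equals $\mathbb{E}_{\tau\sim U(\Gamma')}[f(\tau)] = 2m'\kappa'$, and for deletion the retained trees (those avoiding $e$) together with the freshly drawn ones realize $\mathbb{E}_{\tau\sim U(\Gamma)}[f(\tau)]$, which by Theorem~\ref{theo:unbias estimation delete update} is exactly the target. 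Thus in both cases $\tilde\kappa' = \frac{1}{2m'\omega}\sum_i f(\hat\tau_i)$ is unbiased for $\kappa'$, where $m'$ and $\Delta'$ denote the edge count and diameter of the updated graph.

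First I would make the mixture structure explicit and then transplant the concentration argument of Theorem~\ref{theo:error_bound} almost verbatim. Each maintained sample $\hat\tau_i$ is drawn independently — either retained from $U(\Gamma)$ or freshly drawn from $U(\Gamma_e')$ (resp.\ $U(\Gamma)$ for deletion) — and in every case $f(\hat\tau_i)$ satisfies the same deterministic bound $f(\tau)\le 4m'^2\Delta'$ used in the static analysis, since that bound depends only on path lengths to the root and on the total volume $2m'$. Because Hoeffding's inequality requires only independent, bounded summands rather than identically distributed ones, I can apply it to $\sum_i f(\hat\tau_i)$ exactly as before and conclude that $\omega \ge \frac{8m'^2\Delta'^2\log(2/p_f)}{n^2\varepsilon^2}$ guarantees $|\tilde\kappa'-\kappa'|\le\varepsilon\kappa'$ with probability at least $1-p_f$. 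Under a small update one has $m'\approx m$ and $\Delta'\approx\Delta$, so the pre-update sample size is reused without adjustment, which is precisely the claim that \Basic\ still satisfies the relative-error guarantee.

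For the running time I would count the replaced trees and the per-tree cost. For insertion the algorithm deterministically replaces $\lceil R(e)\,\omega\rceil = O(R(e)\omega)$ trees; for deletion the number of stored trees containing $e$ is in expectation $R(e)\omega$, since a uniform tree of $\Gamma'$ contains $e$ with probability $|\Gamma_e'|/|\Gamma'|=R(e)$. Each replacement incurs (i) one uniform-spanning-tree draw via the (modified) Wilson procedure, whose expected cost is captured by the trace term $O(\mathrm{Tr}(I-P_e)^{-1})$ appearing in the bound, and (ii) one recomputation of $f(\tau')$ through the BIT-accelerated DFS of Algorithm~\ref{alg:static}, costing $O(n\min(\Delta,\log n))$, after which updating $\tilde\kappa'$ is $O(1)$. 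Multiplying $O(R(e)\omega)$ replacements by this per-tree cost, and absorbing the one-off single-pair effective-resistance computation of $R(e)$ with \Bipush~\cite{er_liao2024efficient}, yields the stated bound $O\!\left(R(e)\omega(\mathrm{Tr}(I-P_e)^{-1}+n\min(\Delta,\log n))\right)$.

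I expect the correctness half, not the timing, to be the main obstacle, for two reasons. First, the replacement count is the rounded quantity $\lceil R(e)\omega\rceil$ rather than the exact fraction $R(e)\omega$, which perturbs the mixture weights by $O(1/\omega)$ and so introduces a negligible bias that must either be argued away or folded into $\varepsilon$. Second, the summands are independent but \emph{not} identically distributed, so I must justify that the Hoeffding range $4m'^2\Delta'$ is uniform across both the $U(\Gamma)$ and the $U(\Gamma_e')$ components. The remaining delicate point — that $\Delta'$ and $m'$, and therefore the required $\omega$, are essentially unchanged by a single-edge update — is empirical rather than worst-case provable, so I would state it explicitly as the regime in which reusing the pre-update sample size is valid.
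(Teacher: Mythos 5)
Your time-complexity accounting is essentially the paper's argument: $O(R(e)\omega)$ replaced trees ($\lceil R(e)\omega\rceil$ deterministically for insertion, $R(e)\omega$ in expectation for deletion, since a uniform tree of $\Gamma'$ contains $e$ with probability $R(e)$), each paying one (modified) Wilson draw of cost $O(\mathrm{Tr}(I-P_e)^{-1})$ plus one BIT-accelerated DFS of cost $O(n\min(\Delta,\log n))$, with the one-off \Bipush computation of $R(e)$ absorbed. That half is fine.

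The correctness half has a genuine gap, and it is precisely the point on which the paper's proof spends its effort. Your Hoeffding argument applies to $\tilde{\kappa}'=\frac{1}{2m'\omega}\sum_i f(\hat{\tau}_i)$ with every $f(\hat{\tau}_i)$ evaluated on the \emph{updated} graph. But \Basic never touches the $(1-R(e))\omega$ retained samples --- if it re-evaluated $f$ on them, your own complexity bound would fail, since recomputation across all samples alone costs $O(\omega\, n\min(\Delta,\log n))$, not $O(R(e)\omega\cdot(\cdots))$. The algorithm therefore reuses \emph{stale} stored values of $f(\tau)$ computed with the pre-update graph. These stale values differ from the true post-update ones: inserting or deleting $e=(u,v)$ changes $d(u)$, $d(v)$ and the total volume $2m$, all of which enter $f$ through the degree weights $d(u)$ and the terms $2m-\mathrm{vol}(\textit{Sub}(\tau,v))$. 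Consequently the estimator actually computed is \emph{not} the exact unbiased mixture supplied by Theorems~\ref{theo:unbias estimation insert update} and~\ref{theo:unbias estimation delete update}; it carries a systematic perturbation that your argument never mentions. The paper's proof handles exactly this: it bounds the resulting deviation in $\tilde{\kappa}'$ by roughly $\Delta$ and argues this is negligible for the \emph{relative}-error guarantee because $\kappa \geq n-1 \gg \Delta$. To close your proof you must either recompute $f$ for all samples (abandoning the stated running time) or add this stale-value error bound. Your two flagged concerns --- the $\lceil R(e)\omega\rceil$ rounding and the non-identically-distributed summands in Hoeffding --- are legitimate but minor; the paper is itself looser there, simply assuming the sample size needs no adjustment under small updates, which matches your closing remark about that step being a regime assumption rather than a worst-case proof.
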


\iffullversion
  \begin{proof}
      To establish the correctness of \Basic, we leverage Theorem~\ref{theo:unbias estimation insert update} and Theorem~\ref{theo:unbias estimation delete update} to construct an unbiased estimator for the KC of the updated graph. However, to accelerate the computation, we avoid recalculating $f(\tau)$ for unchanged USTs, even though its value undergoes slight variations due to changes in the degrees of the updated edge’s endpoints. Fortunately, the resulting error in the estimator $\tilde{\kappa}'$ can be bounded by $\Delta$, which is negligible when considering the relative error, as $\kappa$ is larger than $n$ and thus significantly greater than $\Delta$. Experimental results further confirm that the estimation maintains a high level of accuracy.

      Regarding time complexity, only $R(e) \cdot T$ USTs are updated. The time complexity for constructing and querying these USTs remains similar to the static case. The only exception arises in the insertion case, where the modified Wilson algorithm incurs a complexity of $O(\mathrm{Tr}(I - P_{e})^{-1})$ instead of $O(\mathrm{Tr}(I - P_{r})^{-1})$. However, the difference between these two complexity is minimal and remains within the same level. Consequently, the overall time complexity for updating aligns with the statement of the theorem.
  \end{proof}
\else
  \begin{proof} [Proof Sketch]\color{RM}
    Correctness follows from Theorems~\ref{theo:unbias estimation insert update} and~\ref{theo:unbias estimation delete update}. Only $R(e)\cdot T$ USTs are updated, with sampling and $f(\tau)$ computation costs similar to the static case. The modified Wilson algorithm contributes $O(\mathrm{Tr}(I - P_{e})^{-1})$ complexity. Full details are in~\cite{full}.
  \end{proof}
\fi

\stitle{Discussion.} Compared to Algorithm~\ref{alg:static}, \Basic selectively updates only an $R(e)$ fraction of the samples, and by appropriately reweighting the old and new contributions, the estimator remains unbiased. This strategy substantially reduces the number of samples to be generated and the associated computations. As shown in Table~\ref{tab:datasets}, the effective resistance is typically small (can be below 0.1 in social networks), further lowering the computational cost.

\subsection{Improved Sample Maintenance (ISM)} \label{subsec:improved_sample_maintenance}

Although the basic maintenance method significantly reduces computation compared to static algorithms, replacing a portion of the old samples still results in some loss of information. Can we maintain an unbiased estimation while maximizing the utility of the computations contributed by these replaced trees? To address this question, we proposed Improved Sample Maintenance (\Improved) methods, discussing the insertion and then deletion scenarios in order.

\stitle{Insertion Case.} When considering to transform a tree that does not contain edge $e$ into one that does, what happens if we simply add edge $e$ to the tree? This operation creates a cycle, and by removing any edge other than $e$ from this cycle, a new tree that includes $e$ can be obtained. We refer to this process as a \textit{link-cut} operation. Specifically, \textit{link-cut}($\tau, e$) is defined as linking the edge $e$ and cutting one of the other edges in the cycle created by $e$ and path in tree.

\iffullversion
  \begin{figure}[t!]
      \centering
      \begin{subfigure}{0.44\linewidth}
          \centering
          \includegraphics[width=\textwidth]{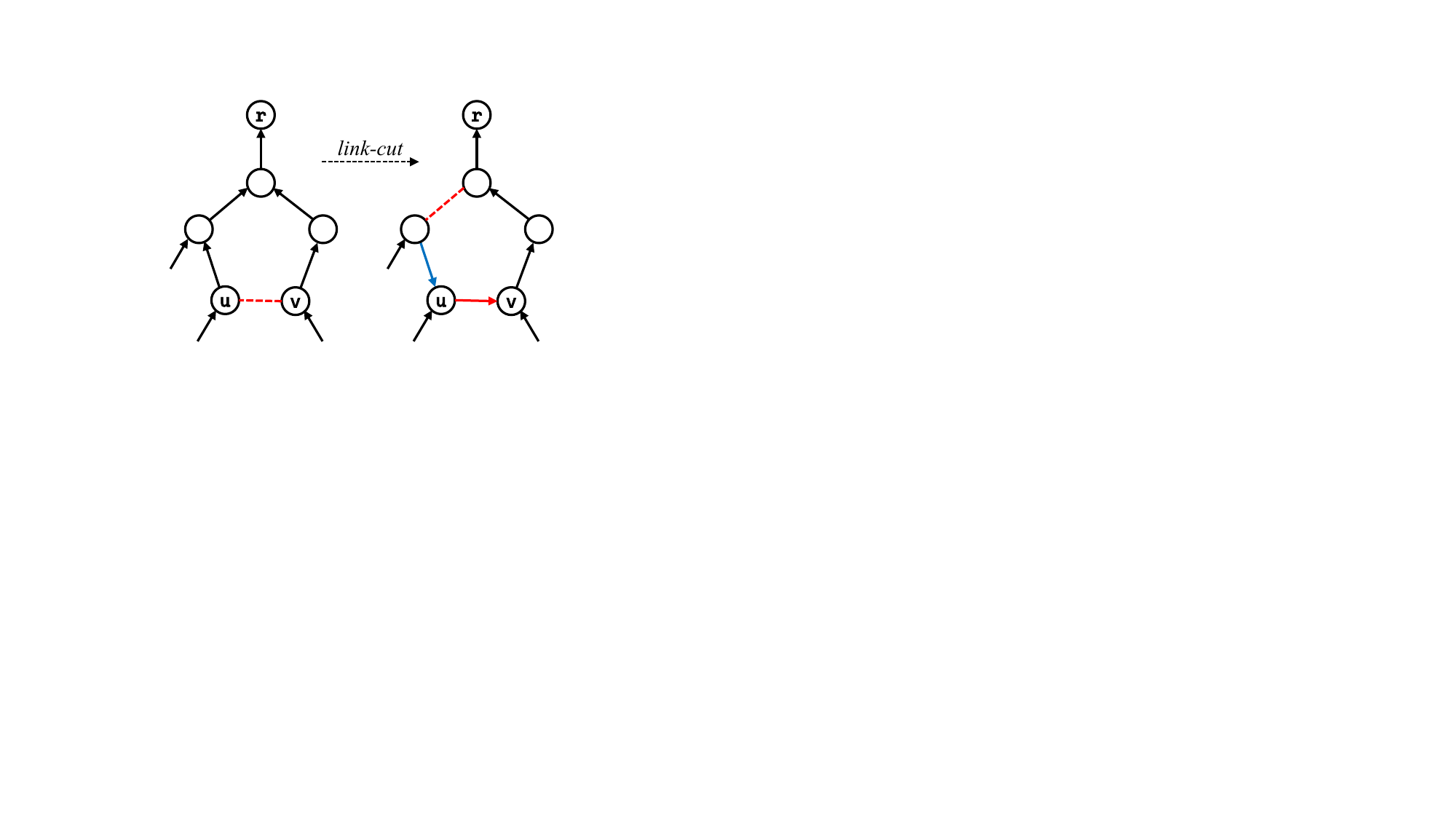}
          \caption{link-cut operation}
          \label{fig:link-cut}
      \end{subfigure}
      \hfill
      \begin{subfigure}{0.44\linewidth}
          \centering
          \includegraphics[width=\textwidth]{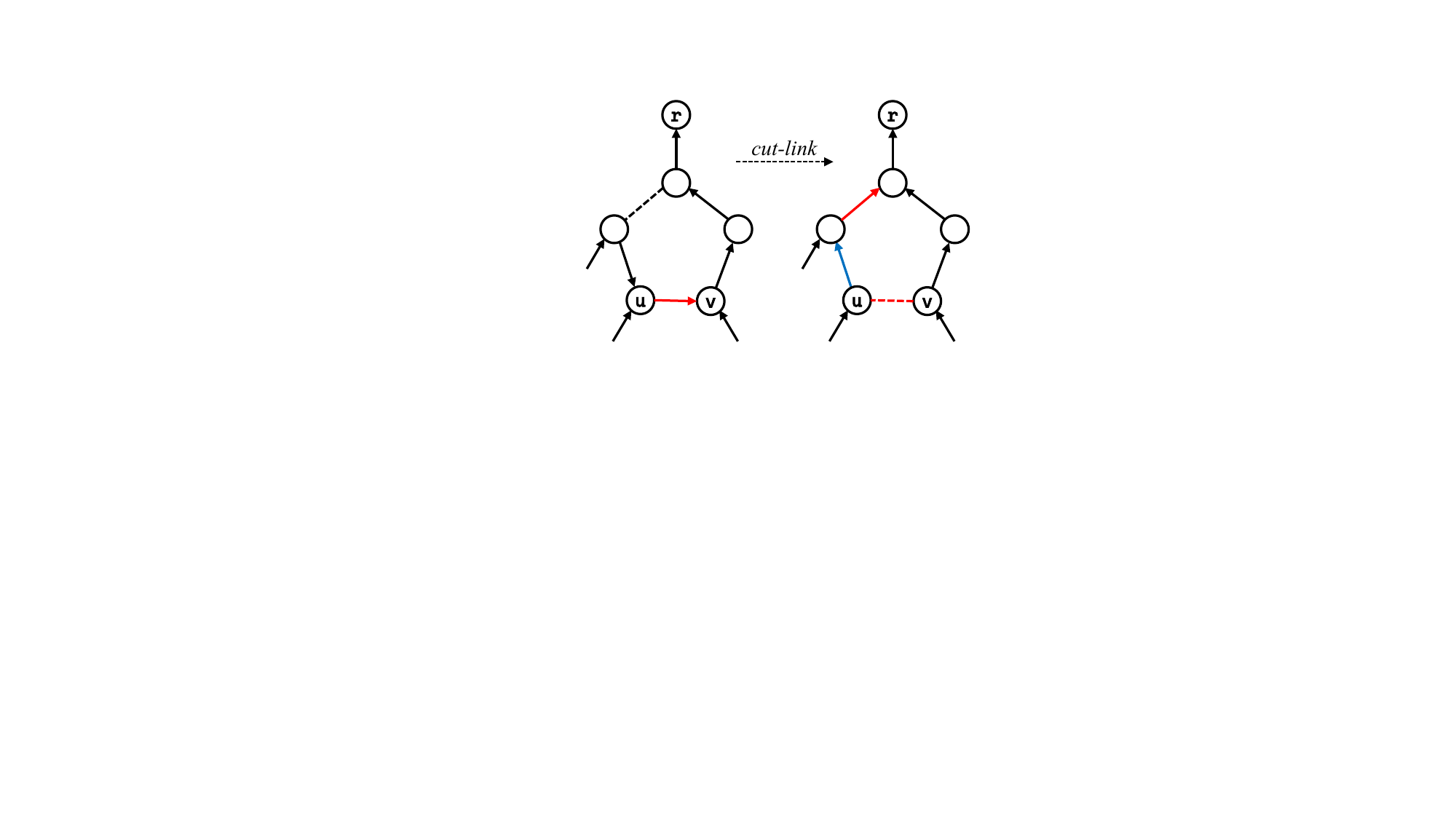}
          \caption{cut-link operation}
          \label{fig:cut-link}
      \end{subfigure}
      \caption{Illustration of link-cut and cut-link operation}
      \vspace{-0.2cm}
      \label{fig:improved sample maintenance}
    \Description{xx}
  \end{figure}
  Using this method to generate spanning trees containing edge $e$ is not only significantly faster than the Wilson algorithm but also preserves parts of the tree's structure, allowing us to recompute only the affected parts. As illustrated in Fig.~\ref{fig:link-cut}, after linking the edge $(u, v)$ and cutting a randomly selected edge in the cycle, all nodes whose paths to the root are altered lie within the subtree of $u$. This enables us to leverage the previous results for unaffected nodes, further reducing the computational cost.
\else
\fi

\begin{figure}[t!]
  \centering
  \includegraphics[width=0.88\linewidth]{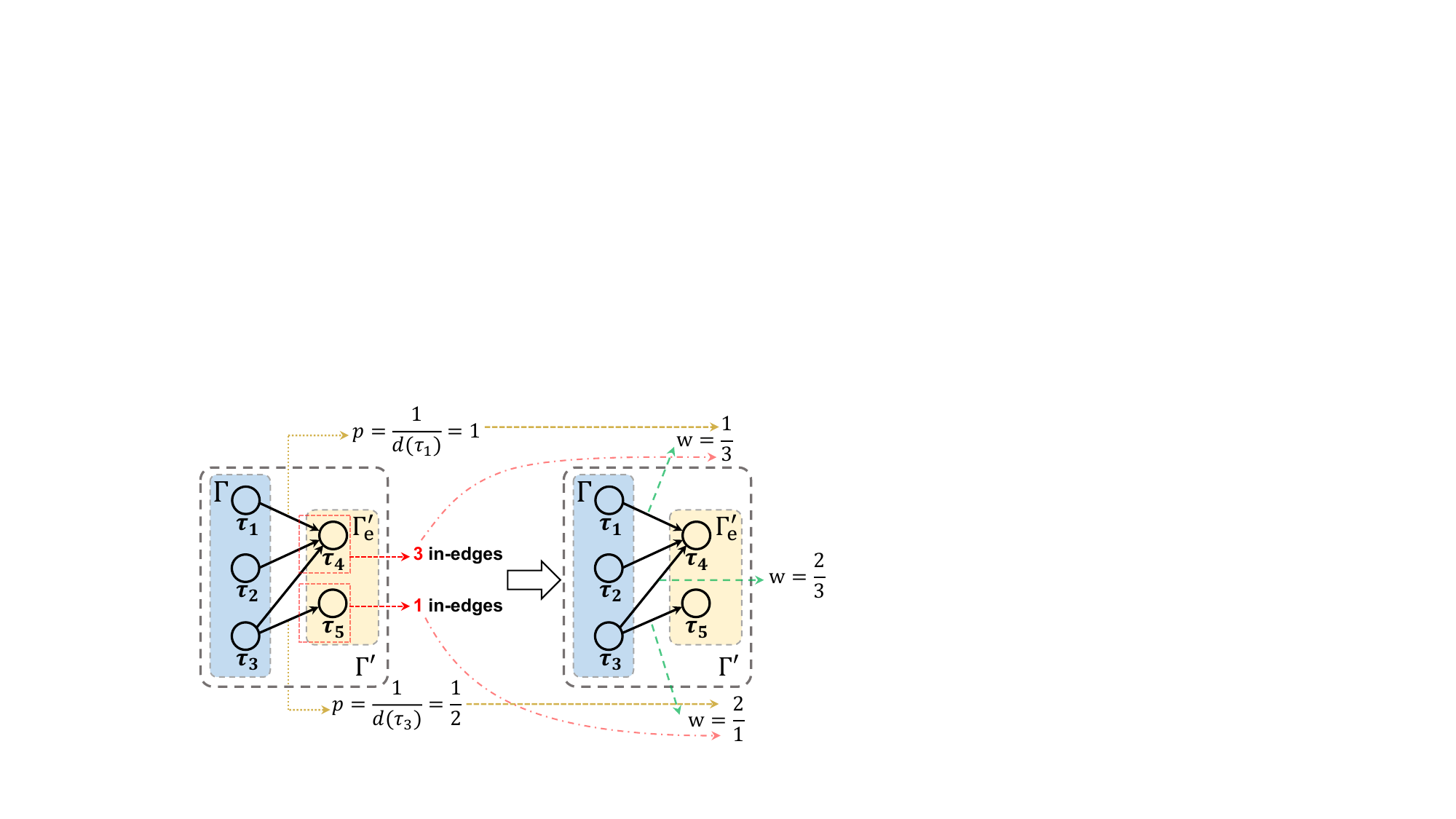}
  \vspace{-0.4cm}
  \caption{Link-Cut Bipartite Graph $\mathcal{B}$}
  \vspace{-0.4cm}
  \label{fig:link_cut_bipartite_graph}
  \Description{xx}
\end{figure}

However, the spanning trees sampled by \textit{link-cut} are not uniformly distributed. To better understand this, consider an abstract graph where each spanning tree is treated as a node, and an edge exists between two nodes if one tree can be transformed into the other through a link-cut operation. This forms a bipartite graph between $\Gamma$ and $\Gamma_{e}'$, denoted as $\mathcal{B}$. Let $\mathcal{N}_{\mathcal{B}}(\tau)$ represents the set of neighbors of a tree $\tau$, corresponding to the spanning trees that can be obtained by performing a link-cut operation on $\tau$. The degree of each tree $\tau \in \Gamma$, denoted as $d(\tau)$, is determined by the length of the cycle formed when edge $e$ is added to $\tau$.

Clearly, if we randomly transform a tree in $\Gamma$ into one of its neighbors with equal probability $1/d(\tau)$ using the link-cut operation, some trees in $\Gamma_{e}'$ will have higher selection probabilities than others. Fortunately, this bias can be corrected by appropriately weighting the tree. As illustrated in Figure~\ref{fig:link_cut_bipartite_graph}, the non-uniformity of link-cut sampling arises primarily due to two factors:  
\begin{itemize}[leftmargin=2em]
\item The out-degree of nodes $\tau \in \Gamma$ are different.
\item The in-degree of nodes $\tau_e \in \Gamma_{e}'$ are different.
\end{itemize}
By appropriately reweighting based on these two factors, we can ensure that the estimator remains unbiased, even though the transformation probabilities are not uniform.

\begin{example}\label{ex:link-cut}
Figure~\ref{fig:link_cut_bipartite_graph} illustrates a link-cut bipartite graph for an evolving graph. Here $\tau_1$ to $\tau_5$ represent all spanning trees in the updated graph. Among them, $\tau_1$ to $\tau_3$ belong to the original tree set $\Gamma$, while $\tau_4$ and $\tau_5$ are new trees introduced by the update. Each edge in the bipartite graph corresponds to a possible transformation between trees via the \textit{link-cut} operation.  

Due to differences in outdegree, selecting an outgoing edge from $\tau_1$ is more likely than from $\tau_3$. Moreover, $\tau_4$ can be reached from $\tau_1, \tau_2, \tau_3$, whereas $\tau_5$ can only be reached from $\tau_3$. If we uniformly select a tree from $\Gamma$ and apply the link-cut operation, the probabilities of obtaining the new trees are $p(\tau_4)=\frac{5}{6}$ and $p(\tau_5)=\frac{1}{6}$, which are biased.

By applying appropriate reweighting, we can correct for this bias to preserve the expected contributions of each tree: $
\mathbb{E}[\tau_4]=\frac{1}{3}\times\frac{1}{3}+\frac{1}{3}\times\frac{1}{3}+\frac{1}{3}\times\frac{1}{2}\times\frac{2}{3}=\frac{1}{3},
\mathbb{E}[\tau_5]=\frac{1}{3}\times\frac{1}{2}\times2=\frac{1}{3}.$
This ensures that the estimator remains unbiased, even though the transformation probabilities are not uniform.
\end{example}

\begin{theorem} \label{theo:improved insertion estimation}
  \begin{equation} \label{eq:link-cut_est}
    \small \mathop{\mathbb{E}}_{\tau \sim U(\Gamma_{e}')}[f(\tau)] = \frac{1-R(e)}{R(e)} \mathop{\mathbb{E}}_{\tau \sim U(\Gamma)}\left[\mathop{\mathbb{E}}_{\tau_e \sim U(\mathcal{N}_{B}(\tau))}\left[f(\tau_e) \frac{d(\tau)}{d(\tau_e)}\right]\right].
  \end{equation}
\end{theorem}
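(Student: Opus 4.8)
The plan is to prove the identity by a direct double-counting computation on the bipartite graph $\mathcal{B}$: expand both nested expectations on the right-hand side into explicit finite sums, reorganize the resulting double sum by swapping the order of summation over the edges of $\mathcal{B}$, and finally match the leading constant against $\frac{1-R(e)}{R(e)}$ using the effective-resistance identity $R(e)=|\Gamma_e'|/|\Gamma'|$. The whole point is to verify that the weight $d(\tau)/d(\tau_e)$ is exactly the reweighting needed to turn a link-cut sample into an unbiased draw from $U(\Gamma_e')$.

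First I would write out the right-hand side explicitly. Since $\mathcal{N}_{\mathcal{B}}(\tau)$ has exactly $d(\tau)$ elements, the inner uniform expectation is $\frac{1}{d(\tau)}\sum_{\tau_e \in \mathcal{N}_{\mathcal{B}}(\tau)}(\cdot)$ and the outer one is $\frac{1}{|\Gamma|}\sum_{\tau\in\Gamma}(\cdot)$. Substituting the summand $f(\tau_e)\,d(\tau)/d(\tau_e)$ cancels the factor $d(\tau)$ immediately, so the nested expectation reduces to $\frac{1}{|\Gamma|}\sum_{\tau\in\Gamma}\sum_{\tau_e\in\mathcal{N}_{\mathcal{B}}(\tau)} f(\tau_e)/d(\tau_e)$. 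The key step is then to observe that this double sum ranges precisely over the edges $(\tau,\tau_e)$ of $\mathcal{B}$, so I can re-index it as a sum over $\tau_e\in\Gamma_e'$ followed by a sum over its neighbors $\tau\in\mathcal{N}_{\mathcal{B}}(\tau_e)$. Because the summand $f(\tau_e)/d(\tau_e)$ does not depend on $\tau$ and $\tau_e$ has exactly $d(\tau_e)$ neighbors, the inner sum contributes a factor $d(\tau_e)$ that cancels the denominator, leaving $\frac{1}{|\Gamma|}\sum_{\tau_e\in\Gamma_e'} f(\tau_e)=\frac{|\Gamma_e'|}{|\Gamma|}\,\mathbb{E}_{\tau_e\sim U(\Gamma_e')}[f(\tau_e)]$.

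For the final normalization step, I would use that $\Gamma'=\Gamma\cup\Gamma_e'$ is a disjoint union, so $|\Gamma'|=|\Gamma|+|\Gamma_e'|$; combined with $R(e)=|\Gamma_e'|/|\Gamma'|$ this gives $1-R(e)=|\Gamma|/|\Gamma'|$ and hence $\frac{1-R(e)}{R(e)}=\frac{|\Gamma|}{|\Gamma_e'|}$. Multiplying this by the factor $\frac{|\Gamma_e'|}{|\Gamma|}$ obtained above yields exactly $1$, recovering $\mathbb{E}_{\tau\sim U(\Gamma_e')}[f(\tau)]$ and closing the identity.

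The hard part is not the algebra but pinning down the combinatorics of $\mathcal{B}$ that makes the summation swap legitimate. Specifically, I must confirm that $d(\tau)$ equals the number of link-cut neighbors of $\tau$ (the length of the $u$–$v$ path in $\tau$, i.e. the cycle length minus one) and that $d(\tau_e)$ equals the number of edges other than $e$ crossing the cut induced by removing $e$ from $\tau_e$ — together with the fact that each $\mathcal{B}$-edge $(\tau,\tau_e)$ corresponds to a unique pair $\{e,g\}$ of swapped edges, so the adjacency is symmetric and the two expressions for the edge set agree. I would also need to check well-definedness: $d(\tau)\ge 1$ always (since adding $e$ to a spanning tree creates a genuine cycle), and $d(\tau_e)\ge 1$ precisely when $e$ is not a bridge of $G'$, i.e. when $\Gamma\neq\emptyset$, which is guaranteed because deletions keep the graph connected. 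Establishing these structural facts is the real crux; once they are in place the reweighting and summation swap are routine.
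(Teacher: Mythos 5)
Your proposal is correct and follows essentially the same route as the paper's proof: expand both expectations into sums so that $d(\tau)$ cancels against the inner normalizer, swap the order of summation over the edges of the bipartite graph $\mathcal{B}$ so that the factor $d(\tau_e)$ cancels, and convert $|\Gamma_e'|/|\Gamma|$ into $\frac{R(e)}{1-R(e)}$ via $R(e)=|\Gamma_e'|/|\Gamma'|$ and the disjoint partition $\Gamma'=\Gamma\cup\Gamma_e'$. Your extra attention to the structural facts justifying the swap (the degree characterizations, symmetry of adjacency in $\mathcal{B}$, and $d(\tau),d(\tau_e)\ge 1$) is sound and makes explicit what the paper only asserts informally in the surrounding text.
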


\iffullversion
  \begin{proof}
    \begin{align*}
      & \mathop{\mathbb{E}}_{\tau \sim U(\Gamma)}\left[\mathop{\mathbb{E}}_{\tau_e \sim U(\mathcal{N}_{\mathcal{B}}(\tau))}\left[f(\tau_e) \frac{d(\tau)}{d(\tau_e)}\right]\right] \\
    = & \frac{1}{|\Gamma|} \sum_{\tau\in \Gamma} \mathop{\mathbb{E}}_{\tau_e \sim U(\mathcal{N}_{\mathcal{B}}(\tau))}\left[f(\tau_e) \frac{d(\tau)}{d(\tau_e)}\right] \\
    = & \frac{1}{|\Gamma|} \sum_{\tau\in\Gamma} \left(\frac{1}{d(\tau_e)} \sum_{\tau_e \in \mathcal{N}_{\mathcal{B}}(\tau)} f(\tau_e) \frac{d(\tau)}{d(\tau_e)} \right) \\
    = & \frac{1}{|\Gamma|} \sum_{\tau\in\Gamma} \sum_{\tau_e \in \mathcal{N}_{\mathcal{B}}(\tau)} f(\tau_e) \frac{1}{d(\tau_e)} \\
    = & \frac{1}{|\Gamma|} \sum_{\tau_e \in \Gamma_{e}'} d(\tau_e) (f(\tau_e) \frac{1}{d(\tau_e)}) \\
    = & \frac{R(e)}{1-R(e)} \frac{1}{|\Gamma_{e}'|} \sum_{\tau_e \in \Gamma_{e}'} f(\tau_e) \\
    = & \frac{R(e)}{1-R(e)} \mathop{\mathbb{E}}_{\tau \sim U(\Gamma_{e}')}[f(\tau)]
    \end{align*}
  \end{proof}
\else
  \begin{proof} [Proof Sketch]\color{RM}
    The proof follows directly from the definition of expectation. The intuition is illustrated in Figure~\ref{fig:link_cut_bipartite_graph} and Example~\ref{ex:link-cut}. Detailed proof is provided in full version of this paper~\cite{full}.
  \end{proof}
\fi

Assigning the weight of trees obtained via the \textit{link-cut} operation as $d(\tau)/d(\tau_e)$ provides an estimate for $f(\tau)$ under the uniform distribution in $\Gamma_{e}'$. By substituting Eq.~(\ref{eq:link-cut_est}) into the second term on the right-hand side of Eq.~(\ref{eq:unbias_add}), we obtain an unbiased estimator of \kc for the updated graph $G'$.

\stitle{Deletion Case.} Similarly, we can transform a tree containing edge $e$ into one without it through the \textit{cut-link} operation. \textit{cut-link}($\tau_e, e$) involves cutting $e$ from a spanning tree, resulting in a 2-forest, and then linking another edge from graph $G$ to reconnect the 2-forest. This operation changes the paths of all nodes within the subtree of the newly linked edge, while other paths remain unchanged. As in the insertion case, we can correct the distribution by assigning weights, with only slight differences in implementation.

\begin{theorem} \label{theo:improved deletion estimation}
  \begin{equation}
    \small \mathop{\mathbb{E}}_{\tau \sim U(\Gamma)}[f(\tau)] = \frac{R(e)}{1-R(e)} \mathop{\mathbb{E}}_{\tau_e \sim U(\Gamma_{e}')}\left[\mathop{\mathbb{E}}_{\tau \sim U(\mathcal{N}_{B}(\tau_e))}\left[f(\tau) \frac{d(\tau_e)}{d(\tau)}\right]\right].
  \end{equation}
\end{theorem}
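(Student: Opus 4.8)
The plan is to establish this identity by the same double-counting argument used for the insertion case (Theorem~\ref{theo:improved insertion estimation}), but traversing the link-cut bipartite graph $\mathcal{B}$ from the $\Gamma_e'$ side rather than the $\Gamma$ side. The starting point is that the \emph{cut-link} and \emph{link-cut} operations induce the \emph{same} adjacency relation on $\mathcal{B}$: a tree $\tau \in \Gamma$ and a tree $\tau_e \in \Gamma_e'$ are neighbours precisely when $\tau_e = (\tau \cup \{e\}) \setminus \{e'\}$ for some edge $e'$ on the cycle that $e$ closes in $\tau$, which is equivalent to saying $\tau = (\tau_e \setminus \{e\}) \cup \{e'\}$ is obtainable from $\tau_e$ by cutting $e$ and relinking $e'$ across the resulting 2-forest. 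Hence $\tau_e \in \mathcal{N}_{\mathcal{B}}(\tau) \iff \tau \in \mathcal{N}_{\mathcal{B}}(\tau_e)$, and $d(\cdot)$ is well-defined as the degree in $\mathcal{B}$ viewed from either side.

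First I would expand the right-hand side into normalized sums, writing the outer expectation as $\frac{1}{|\Gamma_e'|}\sum_{\tau_e \in \Gamma_e'}$ and the inner uniform expectation over $\mathcal{N}_{\mathcal{B}}(\tau_e)$ as $\frac{1}{d(\tau_e)}\sum_{\tau \in \mathcal{N}_{\mathcal{B}}(\tau_e)}$. The weight $d(\tau_e)$ then cancels the $1/d(\tau_e)$ factor, leaving $\frac{R(e)}{1-R(e)}\cdot\frac{1}{|\Gamma_e'|}\sum_{\tau_e \in \Gamma_e'}\sum_{\tau \in \mathcal{N}_{\mathcal{B}}(\tau_e)} f(\tau)/d(\tau)$, which is a sum over the edges of $\mathcal{B}$. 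Using the symmetry of the adjacency relation I would reindex this as a sum grouped by $\tau \in \Gamma$, turning it into $\sum_{\tau \in \Gamma}\sum_{\tau_e \in \mathcal{N}_{\mathcal{B}}(\tau)} f(\tau)/d(\tau)$. Since $\tau$ has exactly $d(\tau)$ neighbours in $\mathcal{B}$, the inner sum contributes $d(\tau)\cdot f(\tau)/d(\tau)=f(\tau)$, collapsing everything to $\frac{R(e)}{1-R(e)}\cdot\frac{1}{|\Gamma_e'|}\sum_{\tau\in\Gamma}f(\tau)$.

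Finally I would fix the constant. The standard identity $|\Gamma_e'|/|\Gamma'| = R(e)$ together with $|\Gamma|/|\Gamma'| = 1-R(e)$ gives $|\Gamma|/|\Gamma_e'| = (1-R(e))/R(e)$, so $\frac{R(e)}{1-R(e)}\cdot\frac{|\Gamma|}{|\Gamma_e'|}=1$ and the expression reduces exactly to $\frac{1}{|\Gamma|}\sum_{\tau\in\Gamma}f(\tau)=\mathbb{E}_{\tau\sim U(\Gamma)}[f(\tau)]$, which is the claim.

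The only genuinely delicate step is the reindexing of the double sum, i.e.\ justifying that summing over $\{(\tau_e,\tau):\tau\in\mathcal{N}_{\mathcal{B}}(\tau_e)\}$ coincides with summing over $\{(\tau,\tau_e):\tau_e\in\mathcal{N}_{\mathcal{B}}(\tau)\}$. This rests on verifying that cut-link is exactly the inverse family of link-cut, and in particular that after cutting $e$ from $\tau_e$ the number of valid relinking edges equals the bipartite degree $d(\tau_e)$ --- that every relinked tree lands in $\Gamma$ and every such neighbour is produced exactly once. Once this symmetry is pinned down, the remaining manipulations are the same routine cancellations already carried out for the insertion case, so the result follows immediately.
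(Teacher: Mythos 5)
Your proposal is correct and follows essentially the same route as the paper: the paper only sketches this proof as ``similar to Theorem~\ref{theo:improved insertion estimation}'', and your argument is precisely that insertion-case computation run from the $\Gamma_e'$ side --- expand both expectations, cancel $d(\tau_e)$, reindex the double sum over edges of $\mathcal{B}$ by $\tau\in\Gamma$ so the inner sum contributes $d(\tau)\cdot f(\tau)/d(\tau)$, and absorb the constant via $|\Gamma|/|\Gamma_e'|=(1-R(e))/R(e)$. Your explicit attention to the inverse relationship between \emph{cut-link} and \emph{link-cut} (ensuring the bipartite degree counts valid relink edges, each producing a distinct tree in $\Gamma$) is exactly the symmetry the paper invokes implicitly as ``the properties of the bipartite graph $\mathcal{B}$''.
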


\begin{proof} [Proof Sketch]\color{RM}
  The proof is similar to that of Theorem~\ref{theo:improved insertion estimation}. By using the definition of expectation and the properties of the bipartite graph $\mathcal{B}$, we can derive the desired result.
\end{proof}

\iffullversion
    The implementation of improved sample maintenance (\Improved) algorithm is shown in Algorithm~\ref{alg:update_link_cut_resample}. The main distinction lies in the fact that the new spanning tree is obtained directly through the link-cut process, which ensures that only a portion of the nodes in the original tree are affected. Updates of $f(\tau')$ are performed exclusively for these affected nodes, and the corresponding weights are computed (Lines 6--8, 19--21). Additionally, after each update, the weights of all trees need to be normalized (Lines 10--13, 23--24). Initially, the weight of each tree is set to $1 / |\omega| $, so there is no need to divide by $|\omega|$ again when calculating \kc in the final step.
    
    \begin{algorithm}[t!] 
      \small
      \caption{\Improved}
      \label{alg:update_link_cut_resample}
      \LinesNumbered
      \KwIn{A graph $G=(V,E)$, root $r$, an updated edge $(u, v)$, original sample set $S$}
      \KwOut{updated estimation $\tilde{\kappa}'$, updated sample set $S'$}
      $\tilde{\kappa}' \gets 0$\;
      \If{Update is \Add$(u, v)$}{
        Compute $R(u, v)$ in $G \cup (u, v)$\;
        Randomly select $\lceil R(u, v) \cdot |S| \rceil$ trees in $S$\;
    
        \For{each selected tree $\tau$} {
          $\tau' \gets$ \LinkCut$(\tau, (u, v))$\;
          Update $f(\tau')$\;
          Calculate weight as $w(\tau')=\frac{d(\tau)}{d(\tau')} w(\tau)$\;    
        }
        \For{each tree $\tau \in S'$} {
          \If{$\tau$ is updated}{$w(\tau)\gets w(\tau)/w_{sum}(\tau_{selected}) \cdot R(u, v)$\;}
          \Else{$w(\tau)\gets w(\tau)/w_{sum}(\tau_{unselected}) \cdot (1-R(u, v))$\;}
          $\tilde{\kappa}'\gets \tilde{\kappa'} + f(\tau)\cdot w(\tau)$\;
        }
      }
      \If{Update is \Del$(u,v)$}{
        $w_{org} \gets 0$ \;
        \For{each tree $\tau \in T$ that contains edge $(u, v)$}{
          $w_{org} \gets w_{org} + w(\tau)$\;
          $\tau' \gets$ \CutLink$(\tau, (u, v))$\;
          Update $f(\tau')$\;
          Calculate weight as $w(\tau')=\frac{d(\tau')}{d(\tau)} w(\tau)$\;    
        }
        \For{each tree $\tau \in S'$} {
          \If{$\tau$ is updated}{$w(\tau)\gets w(\tau)/w_{sum}(T_{updated}) \cdot w_{org}$\;}
          $\tilde{\kappa}'\gets \tilde{\kappa}' + f(\tau)\cdot w(\tau)$\;
        }
      }
      \Return $\tilde{\kappa}', S'$;
    \end{algorithm}
\else 
    \Improved updates spanning tree samples more efficiently by leveraging a \textit{link-cut} and \textit{cut-link} operation. When an edge is inserted or deleted, only the affected portion of each tree is modified, and the corresponding contributions to the estimator are updated. After each update, tree weights are normalized to maintain correctness. The specific pseudo-code and implementation details are provided in the full version~\cite{full}.
\fi

\begin{theorem}[Correctness and Time Complexity of \Improved] \label{theo:correct complexity for improved update}
    \Improved returns an estimate $\tilde{\kappa}'$ that satisfies the relative-error guarantee. The overall time complexity \Improved is $$O\left(R(e)\omega\cdot (m+ n \min(\Delta, \log n))\right) ,$$ 
where $e$ denotes the inserted (or deleted) edge, and $R(e)$ represents the effective resistance of edge $e$ in the graph that includes $e$.
\end{theorem}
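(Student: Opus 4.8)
The plan is to split the statement into its two halves---unbiasedness (which underpins the relative-error guarantee) and the running-time bound---and to reduce each half to results already established for the basic scheme and for the \textit{link-cut}/\textit{cut-link} estimators, rather than re-deriving anything from scratch.

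For \textbf{correctness} I would handle insertion first. Theorem~\ref{theo:unbias estimation insert update} already writes $\mathbb{E}_{\tau\sim U(\Gamma')}[f(\tau)]$ as the convex combination $(1-R(e))\,\mathbb{E}_{\tau\sim U(\Gamma)}[f(\tau)] + R(e)\,\mathbb{E}_{\tau\sim U(\Gamma_e')}[f(\tau)]$. The only term \Improved evaluates differently from \Basic is the second one: instead of drawing fresh USTs from $U(\Gamma_e')$ via Wilson's algorithm, it transforms selected old samples by \textit{link-cut}. Theorem~\ref{theo:improved insertion estimation} shows exactly that the reweighted link-cut estimator, with weight $d(\tau)/d(\tau_e)$, is an unbiased estimator of $\mathbb{E}_{\tau\sim U(\Gamma_e')}[f(\tau)]$. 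Substituting this identity into Eq.~(\ref{eq:unbias_add}) therefore yields an unbiased estimator of $\kappa(G')$, once the post-update normalization---which rescales the selected and unselected groups to mass $R(e)$ and $1-R(e)$ respectively---is applied. The deletion case is symmetric: Theorem~\ref{theo:improved deletion estimation} supplies the \textit{cut-link} estimator with weight $d(\tau_e)/d(\tau)$ as an unbiased estimator of $\mathbb{E}_{\tau\sim U(\Gamma)}[f(\tau)]$, which is precisely the target identified by Theorem~\ref{theo:unbias estimation delete update}.

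To upgrade unbiasedness to the \textbf{relative-error guarantee}, I would reuse the Hoeffding argument of Theorem~\ref{theo:error_bound}, the one new ingredient being that each sample now contributes $f(\tau_e)\,d(\tau)/d(\tau_e)$ rather than $f(\tau)$, so I must bound the range of this reweighted quantity. Here $f(\cdot)\le 4m^2\Delta$ as before, $d(\tau)$ is a fundamental-cycle length (at most $n$), and $d(\tau_e)\ge 1$ is a fundamental-cut size, so the weighted contribution stays bounded by a structural quantity and the same concentration inequality applies at a sample size of the same order. This is exactly where the ``slight sacrifice in accuracy'' the paper promised enters, since the weight ratio inflates the effective range and hence the variance; I would argue that under the small-update assumption the sample size need not grow. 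I would also record the same negligible bias from not recomputing $f$ on the unchanged samples when the endpoint degrees shift, bounded by $\Delta\ll\kappa$ just as in Theorem~\ref{theo:correct complexity for simple update}.

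For the \textbf{time complexity}, the accounting mirrors \Basic: only an $R(e)$ fraction of the $\omega$ samples is touched, producing the outer factor $R(e)\omega$. Per updated sample the cost splits into (i) performing the \textit{link-cut}/\textit{cut-link} transformation together with evaluating the two bipartite degrees, and (ii) recomputing $f$ on the affected part of the tree. For (i), tracing the fundamental cycle (or cut) along the tree path costs $O(n)$, but determining $d(\tau_e)$---the number of graph edges crossing the fundamental cut---requires scanning incident edges at cost $O(m)$, so together this is $O(m)$; this is precisely the $m$ term that replaces the Wilson cost $\mathrm{Tr}(I-P_e)^{-1}$ of \Basic, and is the source of the improvement. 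For (ii), since the transformation alters only the root-paths of nodes in one subtree, the \dfs-plus-\bit recomputation of $f$ is confined to that subtree and is bounded by the static per-sample cost $O(n\min(\Delta,\log n))$. Summing over the updated samples gives $O\big(R(e)\omega(m+n\min(\Delta,\log n))\big)$. The main obstacle I anticipate is not the expectation identity---which follows cleanly from the cited theorems---but the error analysis: because the factor $d(\tau)/d(\tau_e)$ can be large when the fundamental cut is small, the delicate step is controlling the range (or variance) of the reweighted contribution tightly enough to keep a relative-error bound at an unchanged sample size, and, on the complexity side, confirming that the $O(m)$ cut-size evaluation truly dominates the $O(n)$ cycle traversal.
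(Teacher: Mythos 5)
Your proposal is correct and follows essentially the same route as the paper's proof: correctness by substituting Theorems~\ref{theo:improved insertion estimation} and~\ref{theo:improved deletion estimation} into the decompositions of Theorems~\ref{theo:unbias estimation insert update} and~\ref{theo:unbias estimation delete update} (with the same small-update assumption that the sample size need not grow), and complexity by charging each of the $R(e)\omega$ touched samples an $O(m)$ degree/cut computation plus an $O(n\min(\Delta,\log n))$ recomputation of $f$. If anything, your accounting is slightly tighter than the paper's, which asserts $O(n)$ for computing $d(\tau_e)$ yet still (consistently with your $O(m)$ cut-scan argument) carries the $m$ term into the final bound.
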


\iffullversion
\begin{proof}
By combining Theorem~\ref{theo:improved insertion estimation} and Theorem~\ref{theo:improved deletion estimation}, we substitute them into Theorem~\ref{theo:unbias estimation insert update} and Theorem~\ref{theo:unbias estimation delete update}, respectively. \Improved produces an estimator for \kc on the updated graph while maintaining a relative-error guarantee, under the assumption that the required sample size remains at the same level even if the maximum value of $f(\tau)$ changes.

For any $\tau \in \Gamma$, its degree $d(\tau)$ in $\mathcal{B}$ can be computed in $O(n)$ time by counting the length of the unique path between the endpoints of the inserted edge $e$. Similarly, for $\tau_e \in \Gamma_{e}'$, its degree $d(\tau_e)$ is determined by the number of trees that can be transformed into $\tau_e$ via a link-cut operation, which is equivalent to the number of trees that can be obtained from $\tau_0$ via a cut-link operation. Consequently, $d(\tau_e)$ equals the number of edges in $G$ that cross between $T_1$ and $T_2$, where $T_1 \cup T_2 = \tau_e \setminus e$. This computation has a time complexity of $O(n)$. Therefore, the overall complexity of updating for each tree is $O(n)$.

The time complexity for updating $f(\tau')$ remains $O(n \log n)$, as its worst case requires recomputing the function for the entire tree. Hence, the overall time complexity of \Improved is $O(R(e)T(m + n \log n))$. In the case of scale-free graphs, where $m = O(n \log n)$, this complexity simplifies to $O(R(e)Tn \log n)$, representing a significant improvement over its static counterpart.
\end{proof}
\else
\begin{proof} [Proof Sketch]\color{RM}
  The correctness of \Improved follows from Theorem~\ref{theo:improved insertion estimation} and Theorem~\ref{theo:improved deletion estimation}. For time complexity, the degree $d(\tau)$ and $d(\tau_e)$ can be computed in $O(m)$ time. The time complexity for updating $f(\tau')$ remains $O(n \log n)$ in the worst case. For a detailed proof, please refer to the full version of this paper~\cite{full}.
\end{proof}
\fi

\stitle{Discussion.} \Improved further accelerates the UST sampling process compared to \Basic. By leveraging \textit{link-cut} and \textit{cut-link} operations, it can construct the required spanning tree more efficiently than Wilson's algorithm. Although computing the weights introduces an $O(m)$ time complexity, in practice, it is significantly faster than sampling USTs. Furthermore, since the tree structure changes only slightly, previously computed results of $f(\tau)$ can be partially reused, which reduces the overall computational cost. 

\begin{table}[t!]
	\centering
	\caption{Detailed statistics of datasets ($\Delta$ denotes the graph diameter, $\phi$ is the expected time of sampling a UST, and $\bar{R}(e)$ represents the average effective resistance of each edge)} \label{tab:datasets}
	\vspace{-0.2cm}
        \scalebox{0.84}{
	\begin{tabular}{cccccc}
		\toprule
		\bf Dataset & $n$ & $m$ & $\Delta$ & $\phi$ & $\bar{R}(e)$ \cr\midrule
	    \powergrid & 4,941 & 6,594 & 46 & $4.4\times 10^4$ & 0.749 \cr
		\hepth & 8,638 & 24,806 & 17 & $1.7\times 10^4$ & 0.348 \cr
		\astroph & 17,903 & 196,972 & 14 & $2.3\times 10^4$ & 0.091 \cr
		\emailenron & 33,696 & 180,811 & 11 & $5.0\times 10^4$ & 0.186 \cr
		\amazon & 334,863 & 925,872 & 44 & $9.0\times 10^5$ & 0.362 \cr 
		\dblp & 317,080 & 1,049,866 & 21 & $5.9\times 10^5$ & 0.302 \cr
		\youtube & 1,134,890 & 2,987,624 & 20 & $1.8\times 10^6$ & 0.380 \cr
		\roadPA & 1,087,562 & 1,541,514 & 786 & $1.7\times 10^7$ & 0.706 \cr
		\roadCA & 1,957,027 & 2,760,388 & 849 & $3.5\times 10^7$ & 0.709 \cr
		\orkut & 3,072,441 & 117,185,083 & 9 & $3.1\times 10^6$ & 0.026 \cr
		\bottomrule
	\end{tabular} }
	\vspace{-0.4cm}
\end{table}

\section{Experiments} \label{sec:experiment}
\subsection{Experimental Settings}

\stitle{Datasets.} We employ 10 real-world datasets including various type of graphs, primarily focusing on collaboration networks, social networks and road networks. All datasets can be downloaded from \cite{snapnets,nr,socialpattern}. Since Kemeny constant is defined on connected graphs, we focus on the largest connected component for each graph in this study. The detailed statistics of each dataset are summarized in Table~\ref{tab:datasets}. Following previous studies \cite{li2021efficient, xia2024efficient}, we approximate \kc using \Approx~\cite{xu2020power} with $\epsilon=0.15$ as the ground truth for small graphs ($n < 10^{6}$). For larger graphs, \Approx fails to provide results within an acceptable time, so we employ our proposed method \StaticAlg with a large sample size $T$ ($T=10^{5}$) to obtain a sufficiently accurate result as the ground truth.

\stitle{Dynamic Updates.} To simulate dynamic graph updates, we consider both edge insertions and deletions. For the insertion scenario, we first select $90\%$ of the original graph as the base graph and sample 100 edges from the remaining $10\%$ as candidate insertions. For the deletion scenario, we sample 100 edges directly from the original graph as deletions. To better reflect realistic update behavior, we further adopt a power-law update model, where most updates occur on a small fraction of edges. Specifically, an edge $(u,v)$ is chosen for insertion or deletion with probability proportional to the product of the degrees of $u$ and $v$, ensuring that edges incident to high-degree nodes are more likely to be updated.

\stitle{Different Algorithms.} For static graphs, we compare our proposed method \StaticAlg with several representative approaches, including the LERW-based method \ForestMC~\cite{xia2024efficient}, the UST-based method \SpanTree~\cite{liao2023scalable}, the truncated random walks-based methods \DynamicMC~\cite{li2021efficient} and \RefinedMC~\cite{xia2024efficient}, and the matrix-related method \Approx~\cite{xu2020power}. For dynamic graphs, we select the three most competitive static methods, \StaticAlg, \ForestMC, and \SpanTree, and apply them by re-running after each update as baseline methods, comparing with our two proposed sample-maintenance approaches, \Basic and \Improved. 

\stitle{Experimental Environment.} All algorithms used in our experiment are implemented in C++ and compiled with g++ 11.2.0 using the -O3 optimization flag, except for \Approx~\cite{xu2020power}, which is implemented in Julia. We conduct all experiments on a Linux server with a 64-core 2.9GHz AMD Threadripper 3990X CPU and 128GB memory. Each experiment is repeated five times to avoid accidental anomalies.

\begin{figure*}[t!]
  \centering
  \includegraphics[width=\textwidth]{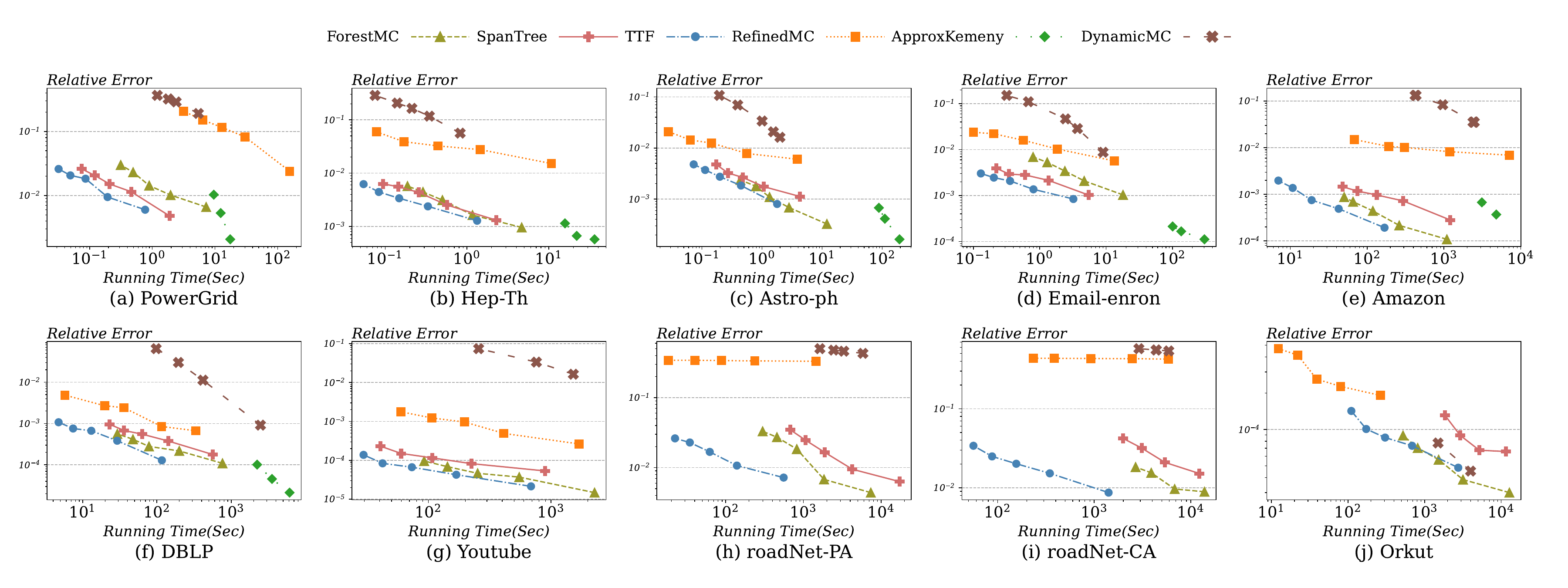}
    \vspace{-0.8cm}
  \caption{\color{RM}Relative error vs running time for different algorithms}
  \label{fig:exp_param_eps}
  \Description{Performance of different algorithms for varying $\epsilon$.}
    \vspace{-0.4cm}
\end{figure*}

\subsection{Experiment Results on Static Graphs}

\stitle{Exp-I: Error vs. Time on Static Graphs.} In this experiment, we evaluate the performance of static algorithms across all datasets by comparing running time against relative error. As shown in Fig.~\ref{fig:exp_param_eps}, \Approx can achieve very accurate results, but its runtime and memory usage are prohibitively high, making it infeasible for large graphs. Methods based on truncated random walks perform reasonably well on some datasets but struggle on road networks, where even extended runtimes fail to achieve high precision. \ForestMC exhibits good accuracy on social networks, but its performance degrades on other types of graphs. In contrast, \StaticAlg consistently achieves lower error in less time across nearly all datasets, demonstrating both high efficiency and accuracy.

\iffullversion
    \stitle{Exp-II: Effect of Path Selection.} In Algorithm~\ref{alg:static}, we use \bfs to construct a spanning tree that defines the set of paths used for path mapping. To investigate the impact of different path selection strategies, we also experiment with \dfs and Wilson’s algorithm. The results are presented in Fig.~\ref{fig:exp_param_path_time} and Fig.~\ref{fig:exp_param_path_error}. We observe that while the choice of path selection method has a negligible impact on running time, it significantly affects the accuracy of the estimation. As discussed in Theorem~\ref{theo:error_bound}, the estimation value is bounded by the maximum length of paths. When the spanning tree is generated via \bfs, the maximum path length is bounded by the diameter of the graph, while it can be substantially longer when using \dfs or randomly sampled trees. Consequently, shorter paths tend to reduce the variance of \StaticAlg, leading to more accurate estimations.

    \begin{figure}[ht!]
        \centering
        \includegraphics[width=\linewidth]{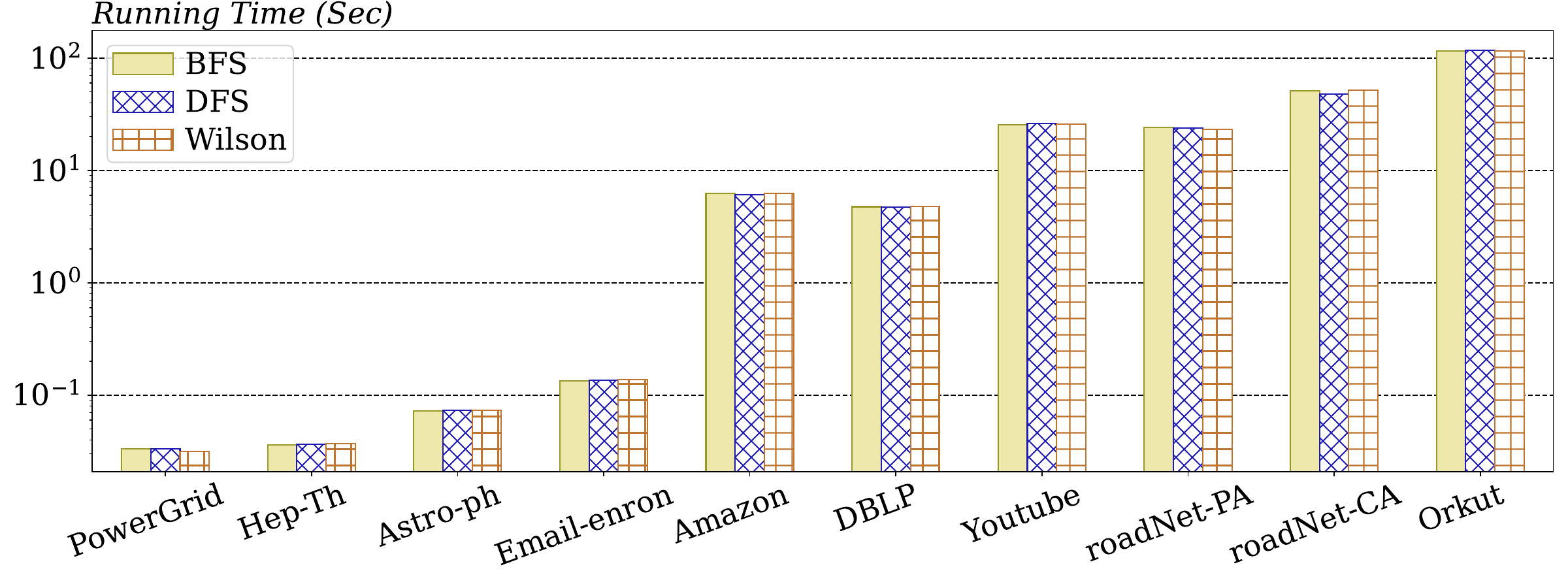}
        \vspace{-0.6cm}
        \caption{Running time of \StaticAlg on each datasets with three different path selection Methods}
        \label{fig:exp_param_path_time}
        \Description{Performance of \StaticAlg on each datasets with three different path selection Methods}
        \vspace{-0.2cm}
    \end{figure}
    
    \begin{figure}[ht!]
        \centering
        \includegraphics[width=\linewidth]{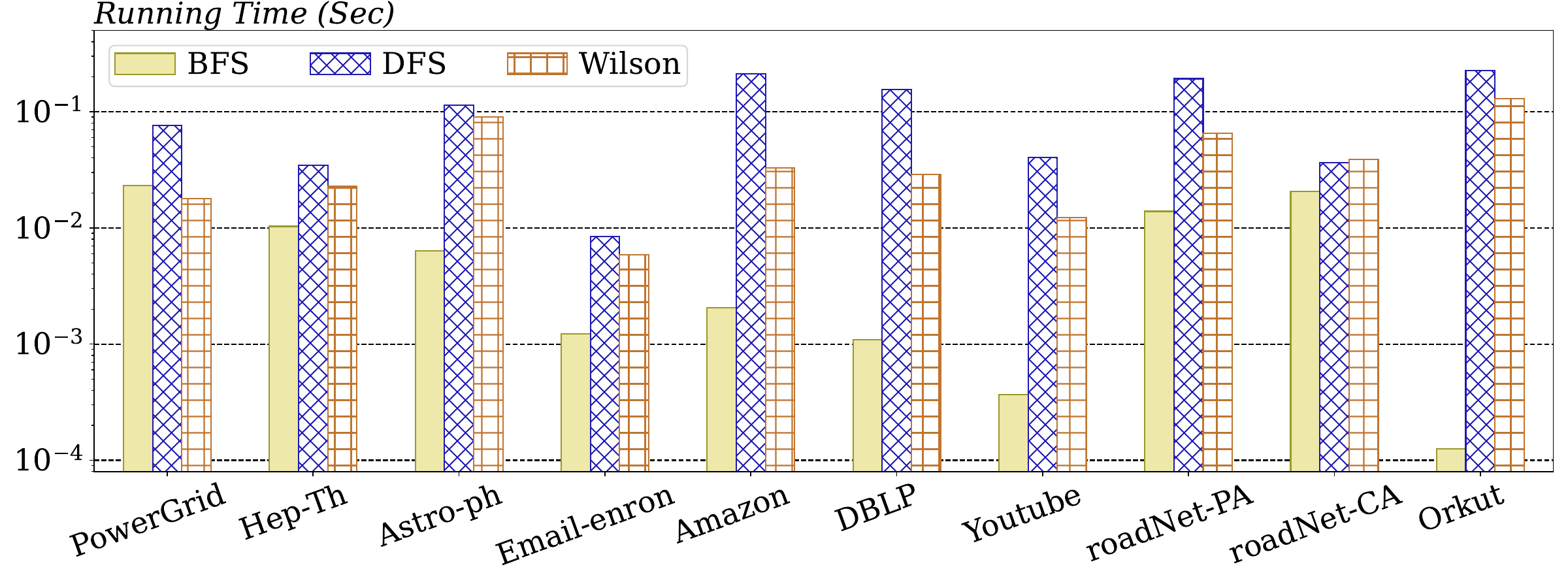}
        \vspace{-0.6cm}
        \caption{Relative error of \StaticAlg on each datasets with three different path selection Methods}
        \label{fig:exp_param_path_error}
        \Description{Performance of \StaticAlg on each datasets with three different path selection Methods}
        \vspace{-0.2cm}
    \end{figure}
\else
    \stitle{Exp-II: Effect of Path Selection.} In Algorithm~\ref{alg:static}, we use \bfs to construct a spanning tree that defines the set of paths used for path mapping. To investigate the impact of different path selection strategies, we also experiment with \dfs and Wilson’s algorithm. The results are presented in Fig.~\ref{fig:param-tree}. We observe that while the choice of path selection method has a negligible impact on running time, it significantly affects estimation accuracy. As discussed in Theorem~\ref{theo:error_bound}, the estimation value is bounded by the maximum length of paths. When the spanning tree is generated via \bfs, the maximum path length is bounded by the graph diameter, whereas it can be substantially longer when using \dfs or randomly sampled trees. Consequently, shorter paths tend to reduce the variance of \StaticAlg, leading to more accurate estimations.

    \begin{figure}[t!]
        \centering
        \begin{subfigure}{0.49\linewidth}
            \centering
            \includegraphics[width=\textwidth]{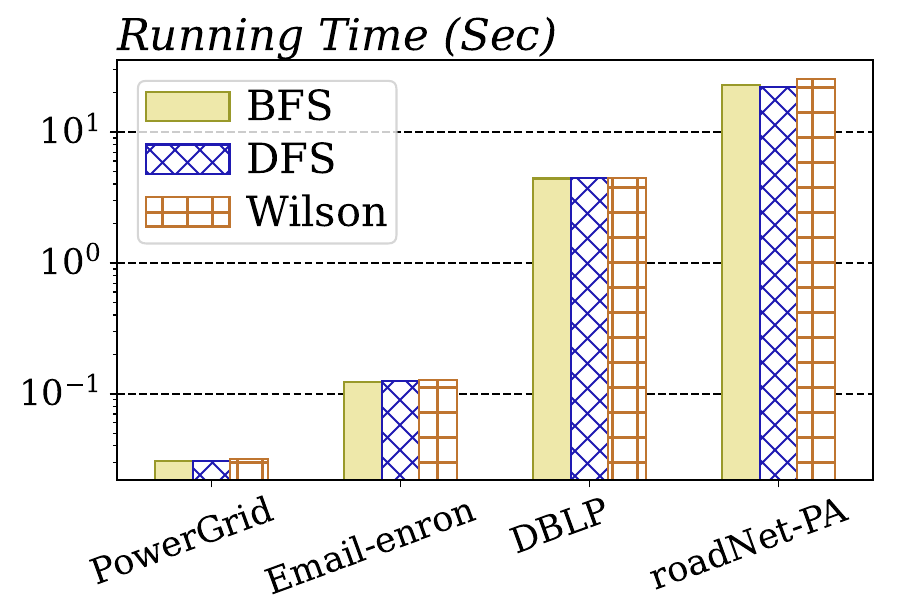}
            \label{fig:tree-time}
        \end{subfigure}
        \hfill
        \begin{subfigure}{0.49\linewidth}
            \centering
            \includegraphics[width=\textwidth]{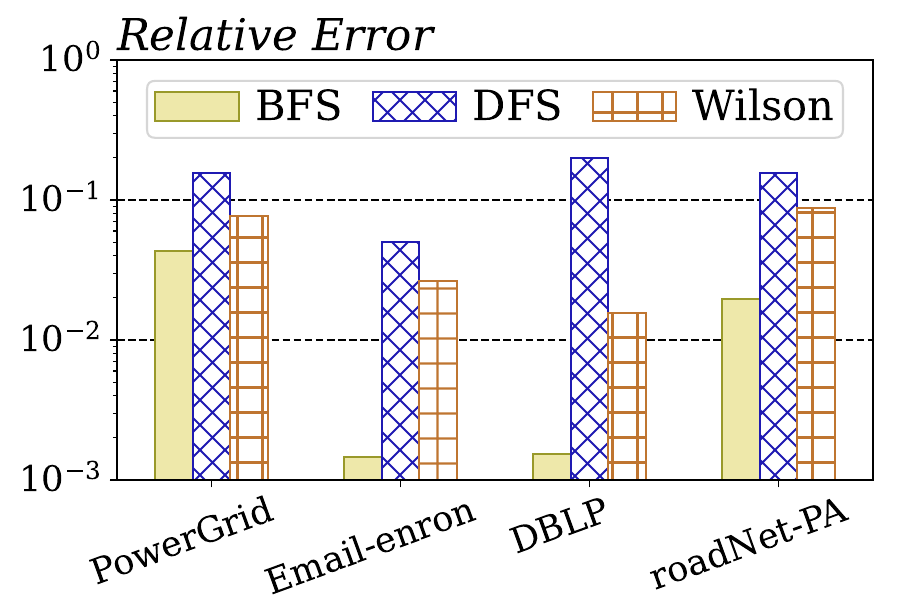}
            \label{fig:tree-error}
        \end{subfigure}
        \vspace{-0.8cm}
        \caption{Performance of \StaticAlg on 4 datasets with three different path selection methods}
        \label{fig:param-tree}
        \vspace{-0.4cm}
    \end{figure}
\fi

\iffullversion
    \stitle{Exp-III: Effect of Root Selection.} There three static methods (\StaticAlg, \ForestMC and \SpanTree) require selecting a root node when applying Wilson’s algorithm, and this choice has a clear impact on the running time. To systematically evaluate this effect, we randomly select root nodes with degrees close to $\tfrac{d_{\max}}{5}, \tfrac{2d_{\max}}{5}, \ldots, d_{\max}$, and measure the average running time. The results, illustrated in Fig.~\ref{fig:exp_param_root_degree}, show that in most cases choosing the node with the maximum degree yields the fastest performance. Therefore, we adopt the highest-degree node as the default root in our experiments.
    \begin{figure*}[tb!]
        \centering
        \includegraphics[width=\linewidth]{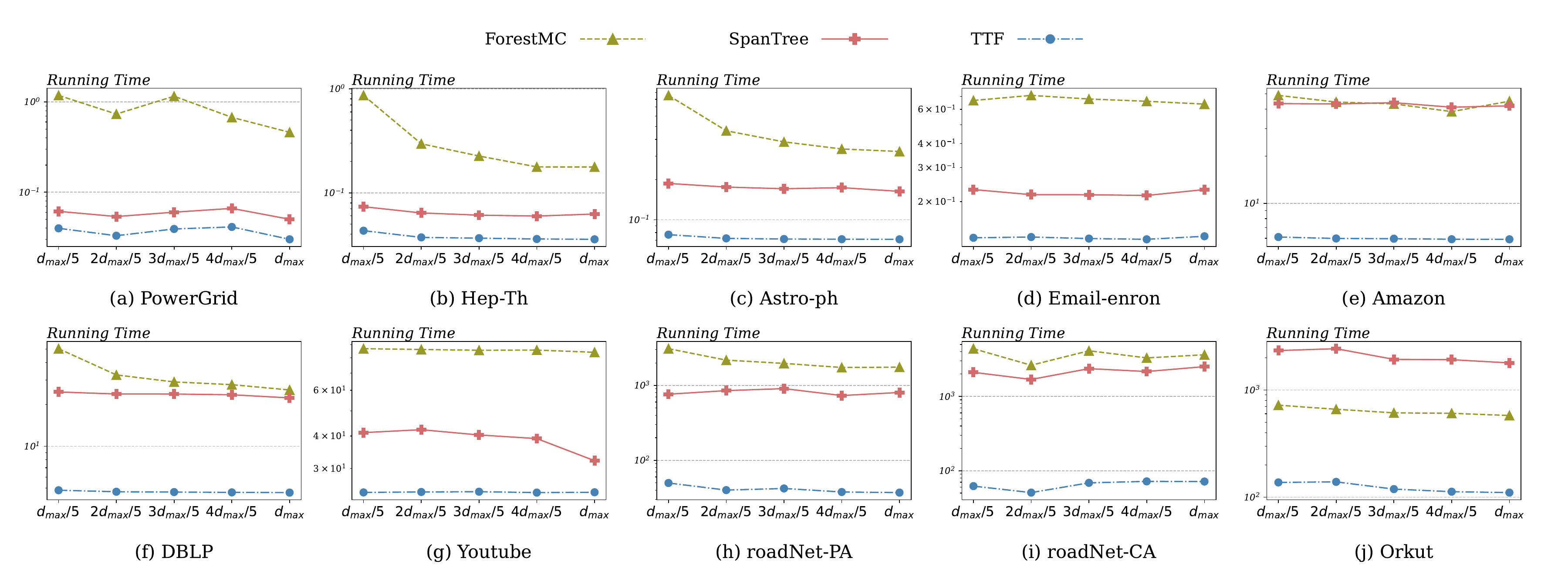}
        \vspace{-0.4cm}
        \caption{Impact of root node with different degrees on the running time of \StaticAlg}
        \label{fig:exp_param_root_degree}
        \Description{Impact of root node with different degrees on the running time of TTF}
        \vspace{-0.4cm}
    \end{figure*}
\else
    \stitle{Exp-III: Effect of Root Selection.} There three static methods (\StaticAlg, \ForestMC and \SpanTree) involve selecting a root node when applying Wilson's algorithm, and this choice has a clear impact on the running time. To systematically evaluate this effect, we randomly select root nodes with degrees close to $\tfrac{d_{\max}}{5}, \tfrac{2d_{\max}}{5}, \ldots, d_{\max}$, and measure the average running time. Results on \powergrid and \hepth are presented in Fig.~\ref{fig:param-degree}. We observe that selecting a root with a higher degree generally leads to lower running time. Therefore, we adopt the highest-degree node as the default root in our experiments.
    
    \begin{figure}[t!]
        \centering
        \begin{subfigure}{0.49\linewidth}
            \centering
            \includegraphics[width=\textwidth]{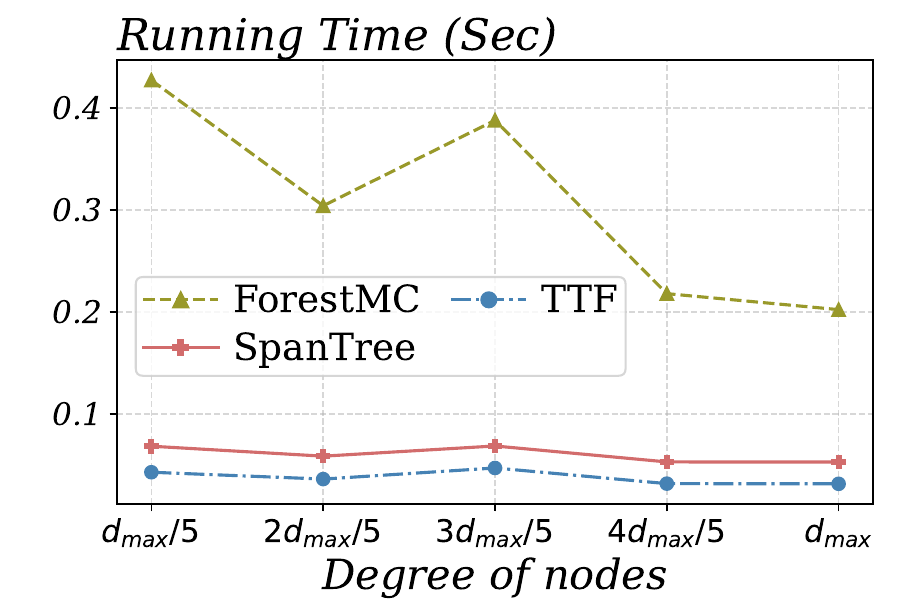}
            \vspace{-0.4cm}
            \caption{\powergrid}
            \label{fig:degree-powergrid}
        \end{subfigure}
        \hfill
        \begin{subfigure}{0.49\linewidth}
            \centering
            \includegraphics[width=\textwidth]{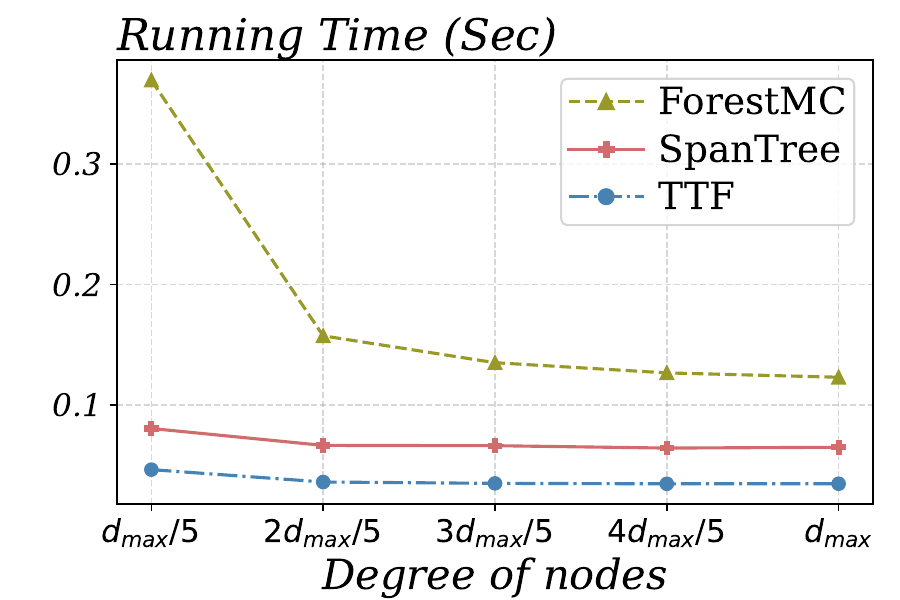}
            \vspace{-0.4cm}
            \caption{\hepth}
            \label{fig:degree-hepth}
        \end{subfigure}
        \vspace{-0.4cm}
        \caption{Impact of root node with different degree on running time of \StaticAlg}
        \label{fig:param-degree}
        \vspace{-0.4cm}
    \end{figure}
\fi

\stitle{Exp-IV: Scalability on Synthetically Generated Graphs.} In this experiment, we evaluate the scalability of different methods on randomly generated graphs. We employ two standard random graph models: (i) Erd\H{o}s--R\'{e}nyi graphs~\cite{bollobas1998random}, where each possible edge is independently included with probability $p = 5 \times 10^{-4}$, and (ii) Chung-Lu model~\cite{aiello2001random}, where the probability of creating an edge is proportional to the product of the endpoints' expected degrees, and the degree distribution follows a power law with exponent $\gamma=3.5$. The synthetic graphs are generated using the Networkit toolkit~\cite{networkit}, with their number of nodes in the largest connected component varying from $10^3$ to $10^6$. Fig.~\ref{fig:scale-er} and Fig.~\ref{fig:scale-cl} show that \StaticAlg and \ForestMC scale better, as the running time of other methods increases more rapidly.

\begin{figure}[t!]
    \centering
    \begin{subfigure}{0.49\linewidth}
        \centering
        \includegraphics[width=\textwidth]{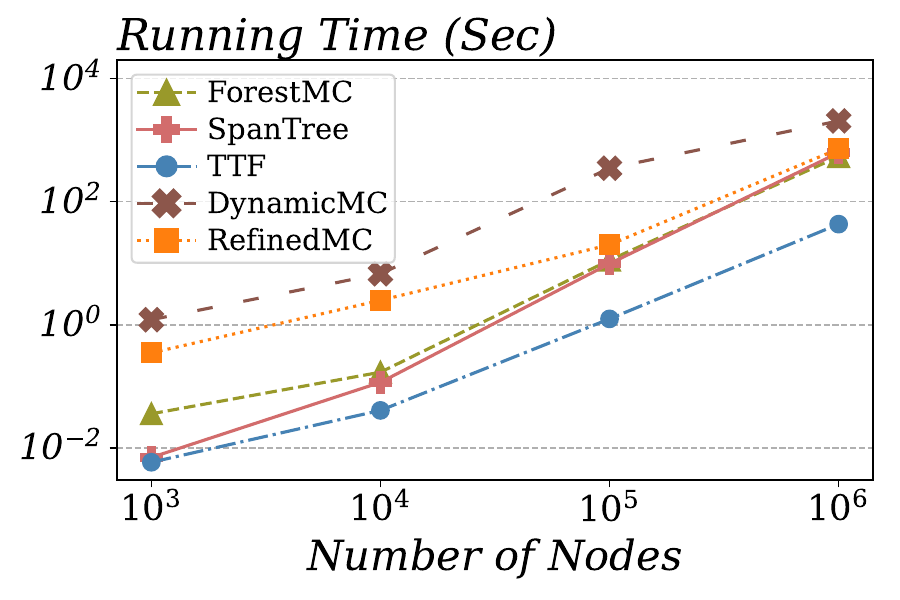}
        \caption{Erd\H{o}s--R\'{e}nyi model}
        \label{fig:scale-er}
    \end{subfigure}
    \hfill
    \begin{subfigure}{0.49\linewidth}
        \centering
        \includegraphics[width=\textwidth]{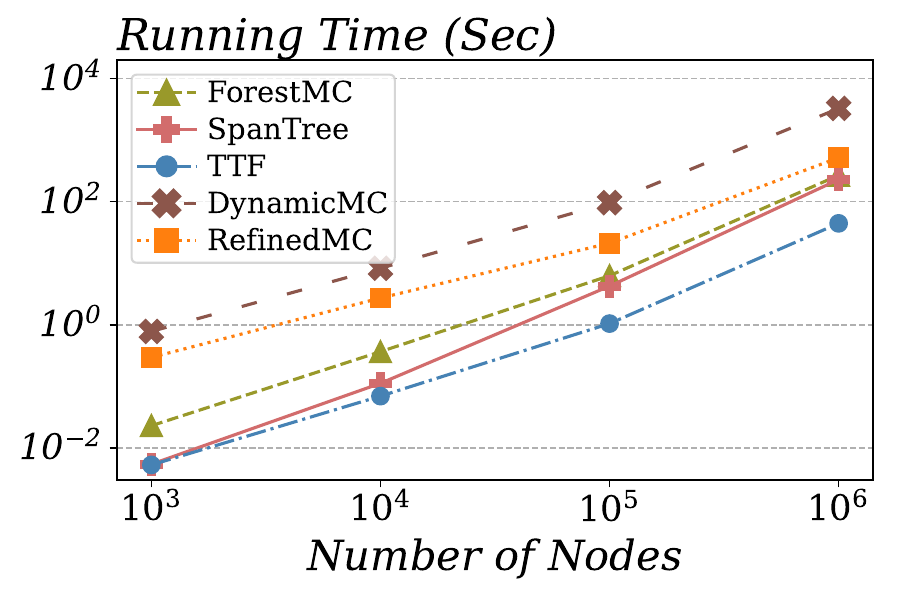}
        \caption{Chung-Lu model}
        \label{fig:scale-cl}
    \end{subfigure}
    \vspace{-0.4cm}
    \caption{\color{RM}Scalability performance of different algorithms on random generated graphs}
    \label{fig:scalability}
    \vspace{-0.6cm}
\end{figure}

\stitle{Exp-V: Scalability on Huge Real-World Graphs.} 
In this experiment, we evaluate the scalability of different methods on two huge real-world datasets: \friendster\ (65,608,366 nodes and 1,806,067,135 edges) and \twitter\ (21,297,772 nodes and 265,025,809 edges). We run \StaticAlg with sufficient sample size $T=1000$ to approximate the ground truth. For all methods, we set the same and smaller sample size, and gradually increase it to compare their relative error and running time, omitting results that exceed 24 hours. Fig.~\ref{fig:exp-scale-friendster} show that \StaticAlg and \ForestMC remain feasible even on such extremely large graphs, whereas other methods either become prohibitively slow or fail to produce accurate results. Overall, \StaticAlg achieves the best trade-off between accuracy and efficiency, demonstrating strong scalability.

\begin{figure}[t!]
    \centering
    \begin{subfigure}{0.49\linewidth}
    \centering
        \includegraphics[width=\linewidth]{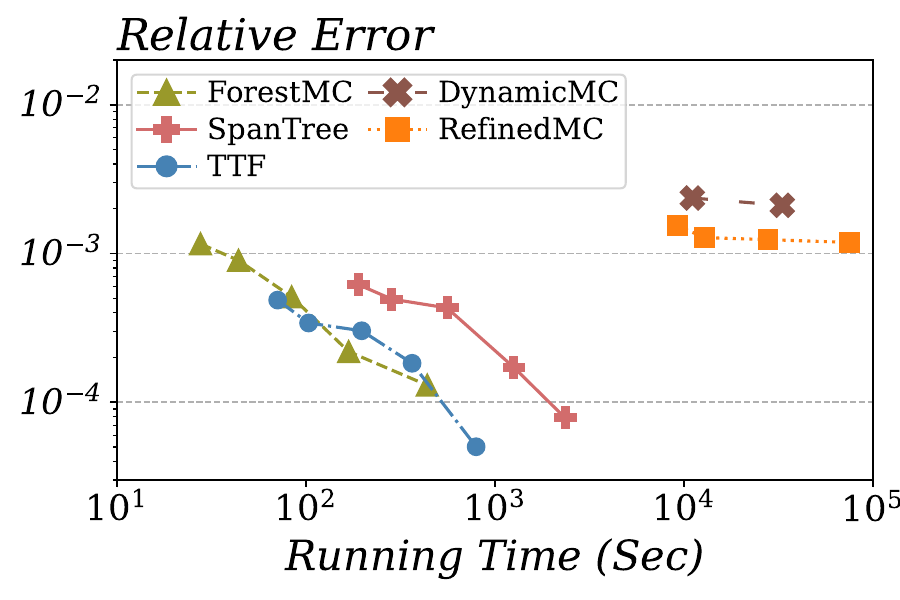}
        \vspace{-0.4cm}
        \caption{\twitter}
    \end{subfigure}
    \hfill
    \begin{subfigure}{0.49\linewidth}
        \centering
        \includegraphics[width=\linewidth]{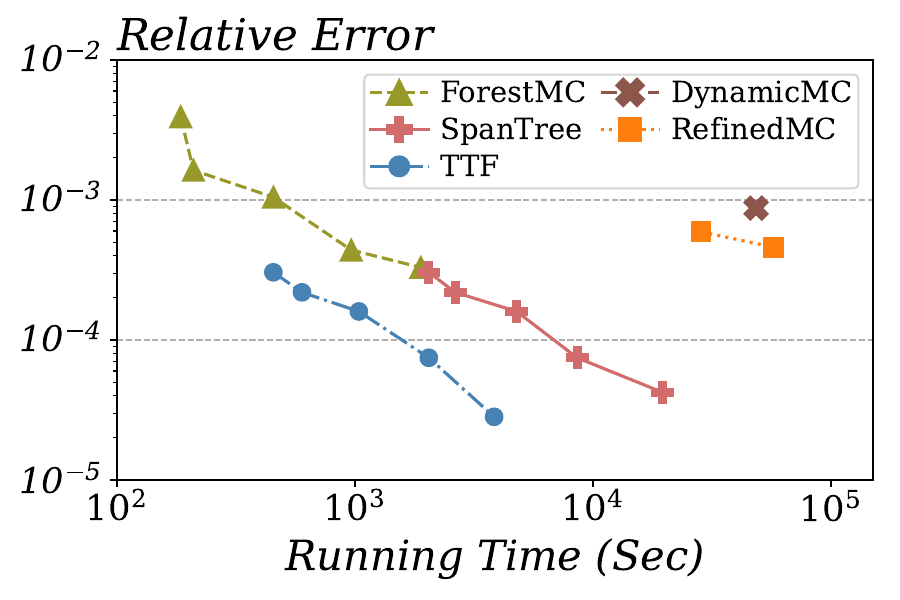}
        \vspace{-0.4cm}
        \caption{\friendster}
    \end{subfigure}
    \vspace{-0.2cm}
    \caption{\color{RM}Relative error vs running time on super-large real-world graphs}
    \Description{Relative error vs running time on \friendster}
    \label{fig:exp-scale-friendster}
    \vspace{-0.4cm}
\end{figure}

\stitle{Exp-VI: Parallelization.} We evaluate the parallel performance of different algorithms using 64 threads against their single-threaded performance. All methods except \Approx naturally parallelize across samples. To ensure fairness, we control the number of samples to make the error levels of all algorithms comparable. As shown in Fig.~\ref{fig:parallel}, all algorithms benefit significantly from parallel execution. Although \StaticAlg does not achieve the highest speedup, it remains among the most efficient methods.

\begin{figure}
    \includegraphics[width=\linewidth]{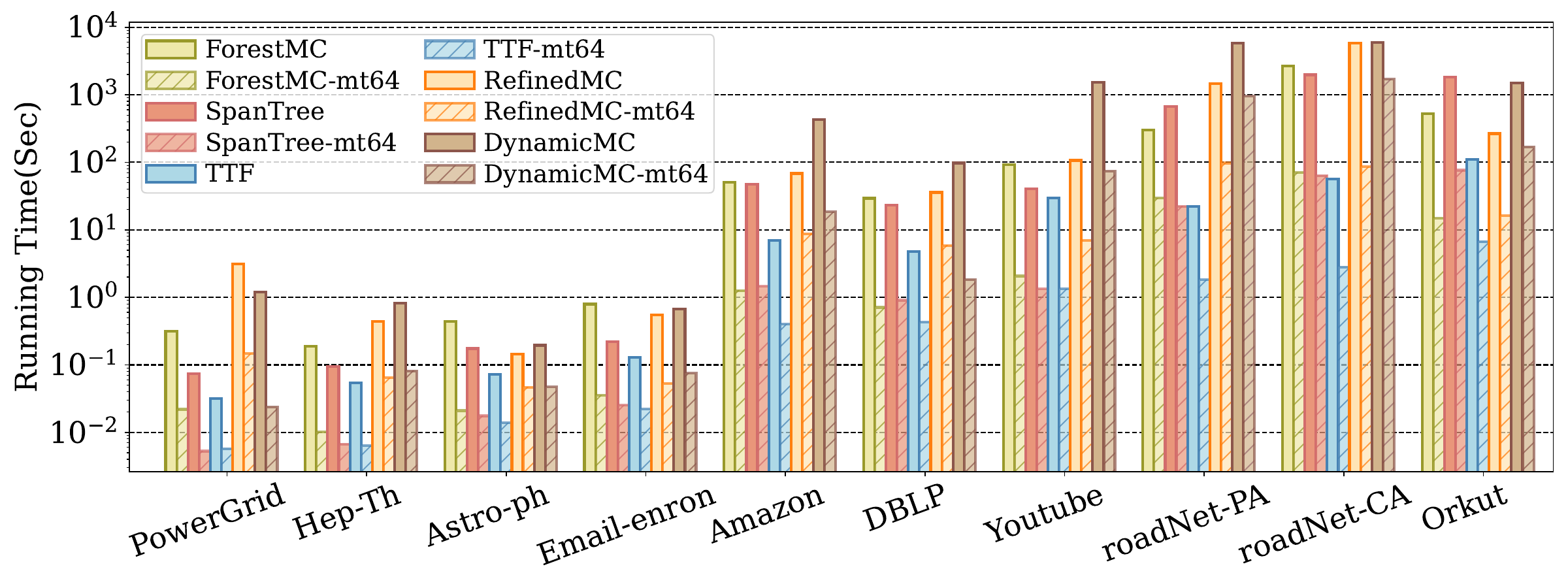}
    \vspace{-0.8cm}
    \caption{\color{RM}Parallel speedup of different algorithms}
    \Description{Parallel speedup of different algorithms on each datasets}
    \label{fig:parallel}
    \vspace{-0.4cm}
\end{figure}

\subsection{Experiment Results on Dynamic Graphs}

\stitle{Exp-VII: Index Size.} This experiment evaluates the index size of both \Basic and \Improved, compared to the size of graphs. The experimental results are shown in Fig.~\ref{fig:exp-index}. We observe that the index sizes of both methods grow proportionally with the graph size, and the growth rate remains stable across all datasets, which empirically validates the $O(n)$ space complexity discussed in Section~4. In all tested datasets, the maximum index size is 4.9GB, demonstrating that our proposed dynamic algorithms are highly space-efficient while supporting real-time sample maintenance.

\begin{figure}
    \centering
    \includegraphics[width=\linewidth]{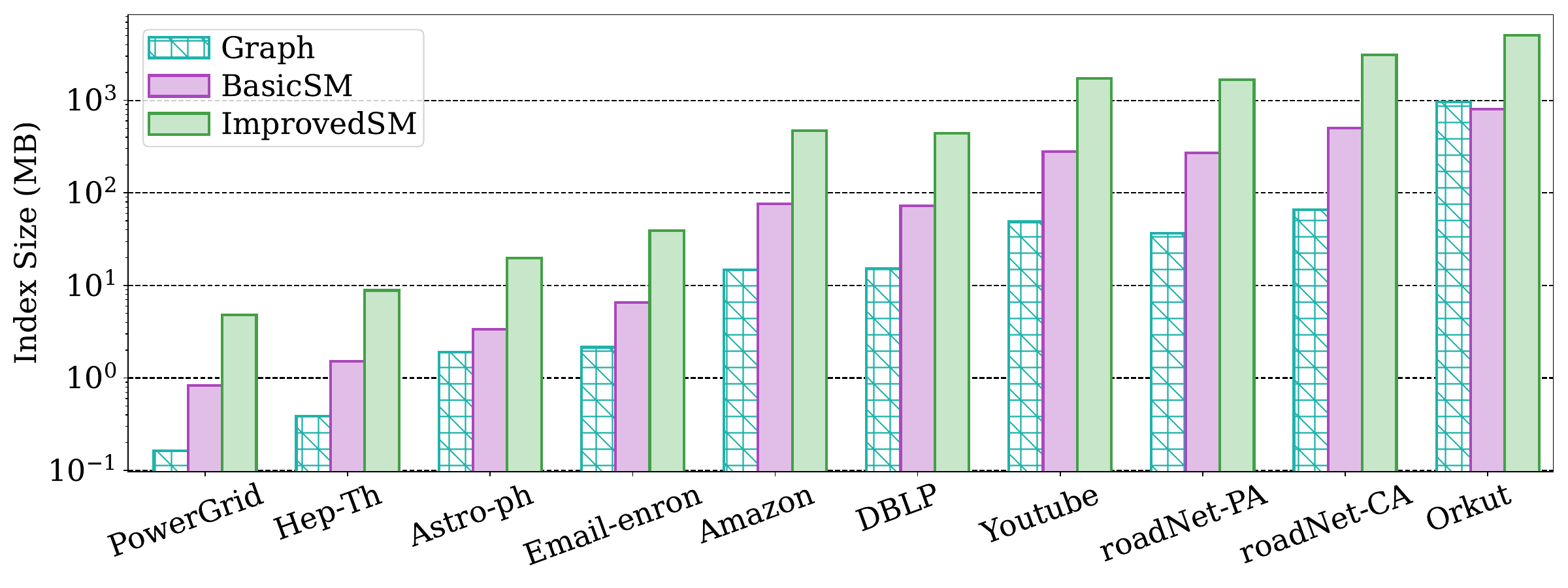}
    \vspace{-0.8cm}
    \caption{\color{RM}Index size of sample maintenance algorithms}
    \label{fig:exp-index}
    \vspace{-0.4cm}
\end{figure}

\stitle{Exp-VIII: Performance for Edge Insertion on Dynamic Graphs.}  
We evaluate the performance of the basic and improved sample maintenance methods, \Basic and \Improved, for edge insertions, comparing their accuracy and running time against three static methods that recompute \kc after each update. The results are shown in Fig.~\ref{fig:exp_ins_time} and Fig.~\ref{fig:exp_ins_error}. As expected, both \Basic and \Improved significantly outperform the static methods in terms of time efficiency. On social networks like \orkut, \Basic and \Improved are an order of magnitude faster than \StaticAlg. However, on road networks such as \roadPA and \roadCA, the speed-up of \Basic is limited, as the average effective resistance of edges is close to one, which means nearly all samples are required to be resampled. Additionally, sampling a UST that includes a specific edge might be slower regular Wilson algorithm. In contrast, \Improved still achieves notable speed-ups even on road networks, since link-cut operations are significantly faster than tree sampling, though it produces slightly higher error.

\begin{figure}[t!]
    \centering
    \includegraphics[width=\linewidth]{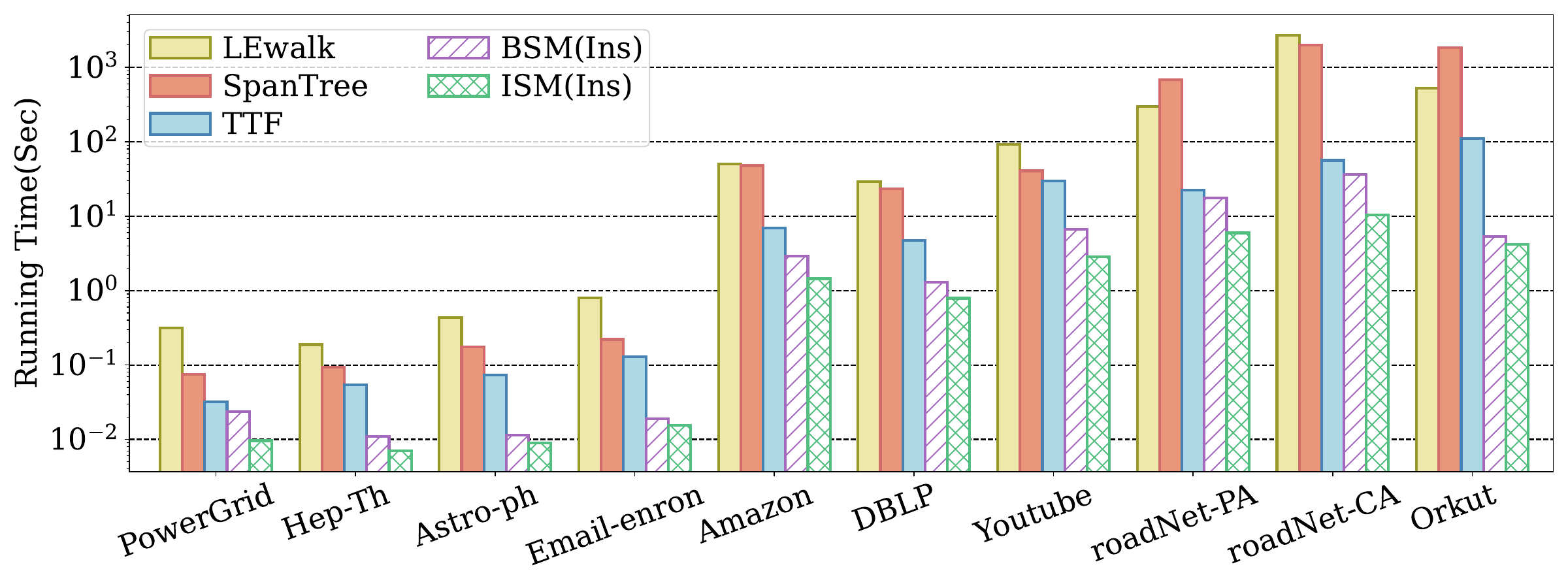}
    \vspace{-0.8cm}
    \caption{Running time of different algorithms for edges insertions on dynamic graphs}
    \label{fig:exp_ins_time}
    \vspace{-0.4cm}
\end{figure}

\begin{figure}[t!]
    \centering
    \includegraphics[width=\linewidth]{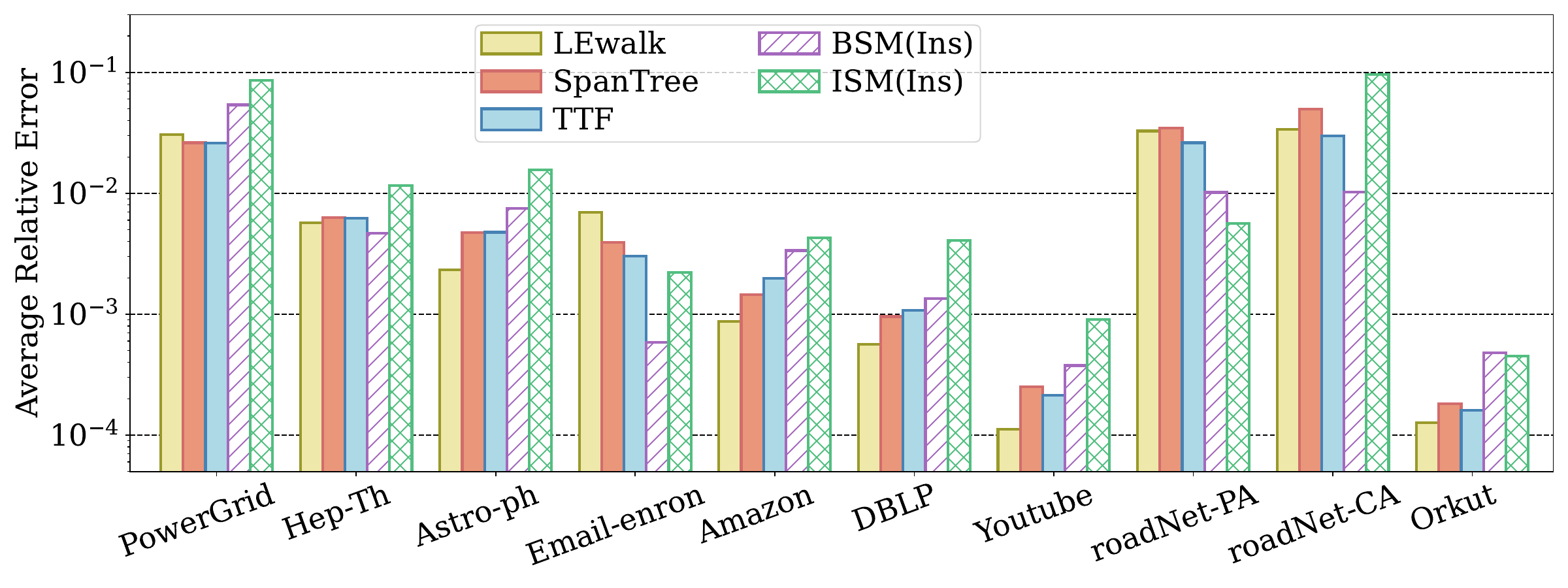}
    \vspace{-0.8cm}
    \caption{Accuracy performance of different algorithms for edges insertions on dynamic graphs}
    \label{fig:exp_ins_error}
    \vspace{-0.4cm}
\end{figure}

\iffullversion
    \stitle{Exp-VIII: Performance for Edge Deletion on Dynamic Graphs.} We also evaluate the performance of \Basic and \Improved for edge deletions. The results, shown in Fig.~\ref{fig:exp_del_time} and Fig.~\ref{fig:exp_del_error}, are generally consistent with those from the insertion scenario. Notably, \Basic shows better performance improvements on road networks during deletions than insertions, since it can employ the standard Wilson algorithm for tree sampling. Moreover, both \Basic and \Improved are slightly faster for deletions than insertions. This is likely because edge insertions require computing effective resistance, which introduces additional computational overhead, whereas deletions do not.
    
    \begin{figure}[t!]
        \centering
        \includegraphics[width=\linewidth]{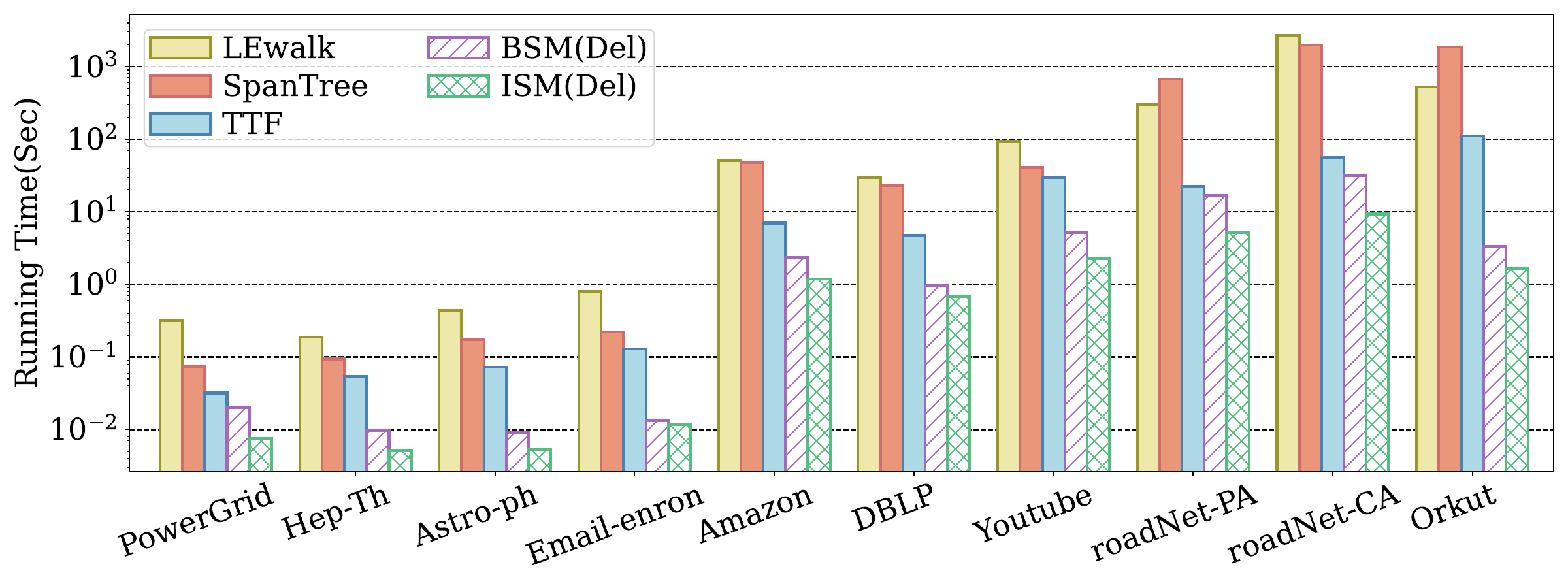}
        \vspace{-0.8cm}
        \caption{Running time of different algorithms for edges deletions on dynamic graphs}
        \vspace{-0.4cm}
        \label{fig:exp_del_time}
    \end{figure}
    
    \begin{figure}[t!]
        \centering
        \includegraphics[width=\linewidth]{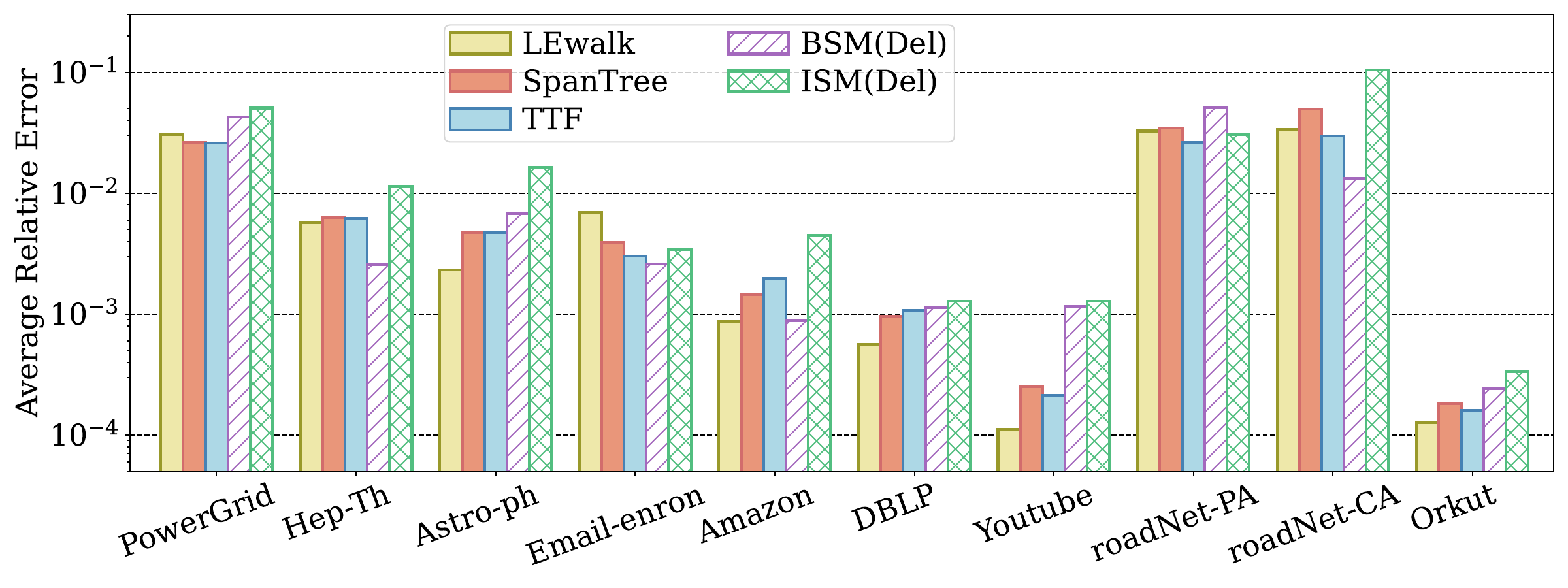}
        \vspace{-0.8cm}
        \caption{Accuracy performance of different algorithms for edges deletions on dynamic graphs}
        \label{fig:exp_del_error}
        \vspace{-0.4cm}
    \end{figure}
\else
    \stitle{Exp-IX: Performance for Edge Deletion on Dynamic Graphs.}  We evaluate the performance of \Basic and \Improved for edge deletions. Overall, the results are similar to those observed in the edge insertion scenario, with both methods maintaining low relative error and efficient running time across all datasets. Specifically, \Basic shows slightly better performance improvements on road networks for deletions compared to insertions, as the standard Wilson algorithm can be directly applied for tree sampling. In general, both \Basic and \Improved are marginally faster for deletions than insertions, likely because deletions do not require computing effective resistance, which adds extra overhead for insertions. Detailed experimental results are provided in the full version~\cite{full}.
\fi

\subsection{Case Studies}
\stitle{Case Study: Graph Classification.} We conducted a case study to investigate the relationship between the Kemeny constant and different types of graphs. For each dataset, we normalized the Kemeny constant as $\kappa(G)/n$ and plotted the results. As shown in Fig~.\ref{fig:case_type}, different graph types exhibit distinct patterns in their normalized Kemeny constants. Social networks and collaboration networks typically have $\kappa(G)/n \approx 1$, whereas road networks generally show larger values. Intuitively, a smaller Kemeny constant indicates that nodes are more tightly connected, which facilitates reaching other nodes from any starting point. This demonstrates that the Kemeny constant is informative for understanding graph connectivity and can be leveraged as a feature for graph classification.

\begin{figure}
    \centering
    \includegraphics[width=0.99\linewidth]{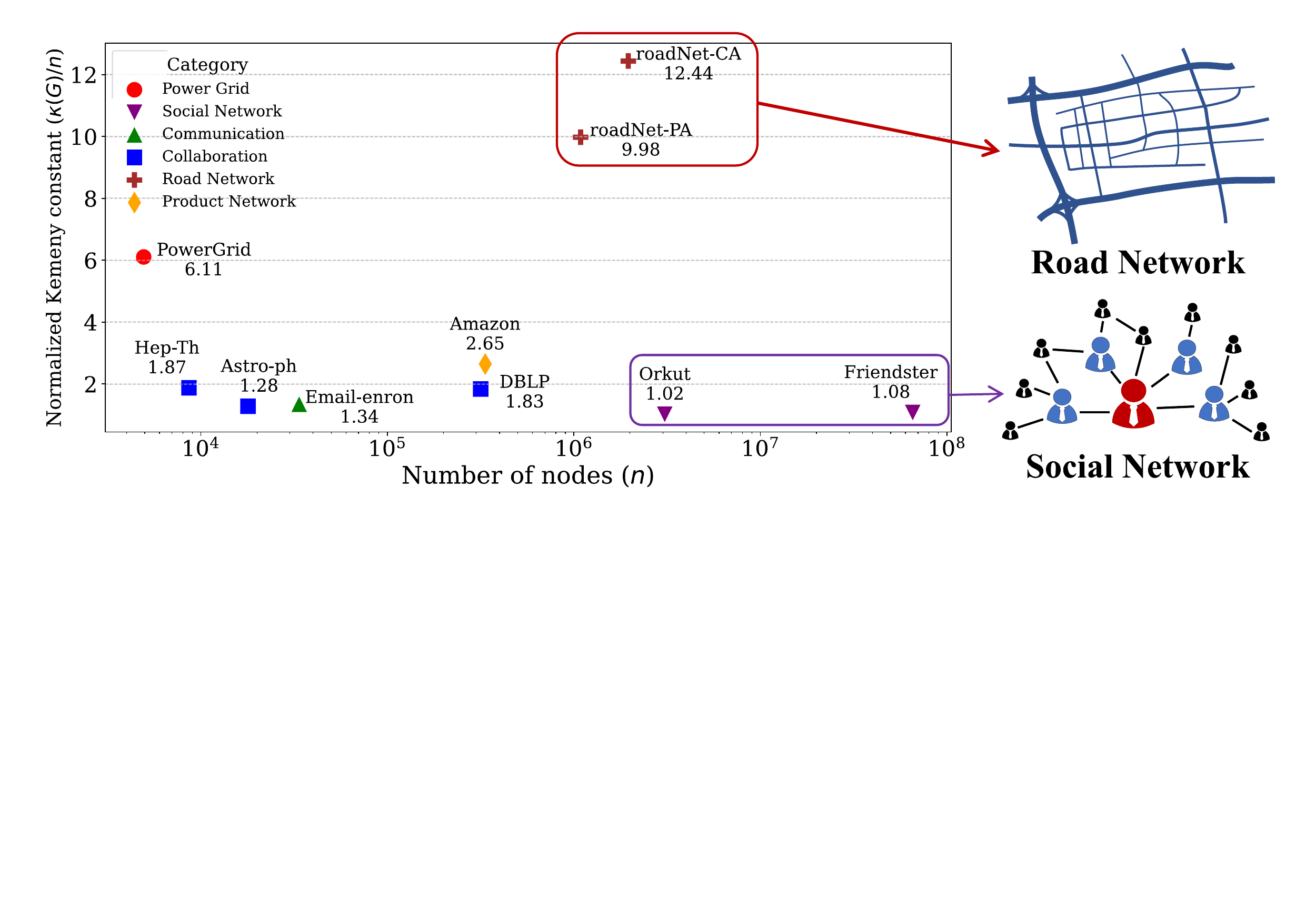}
    \vspace{-0.4cm}
    \caption{\color{RM}Normalized KC across real-world graphs}
    \Description{Normalized Kemeny constant Across Real-World Graphs}
    \label{fig:case_type}
    \vspace{-0.4cm}
\end{figure}

\stitle{Case Study: Kemeny Constant in Dynamic Graphs.}  
We further investigate the evolution of the Kemeny constant in dynamic graphs using the \highschool dataset~\cite{socialpattern}, which captures contacts among students over a single day. Each student is represented as a node, and an edge is created when two students are in close and face-to-face proximity during a 20-second time window. We divided the data into 5-minute windows and computed the Kemeny constant for each window, updating every minute. Figure~\ref{fig:case_highschool} shows the number of edges and the corresponding Kemeny constant over time. Although the number of edges fluctuates considerably, the Kemeny constant remains largely stable, with notable changes only during periods such as class breaks or lunch, when the network structure undergoes significant alterations. This demonstrates that the Kemeny constant can effectively capture structural changes in dynamic networks.

\begin{figure}[t!]
    \begin{subfigure}[c]{0.29\linewidth}
        \centering
        \includegraphics[width=\linewidth]{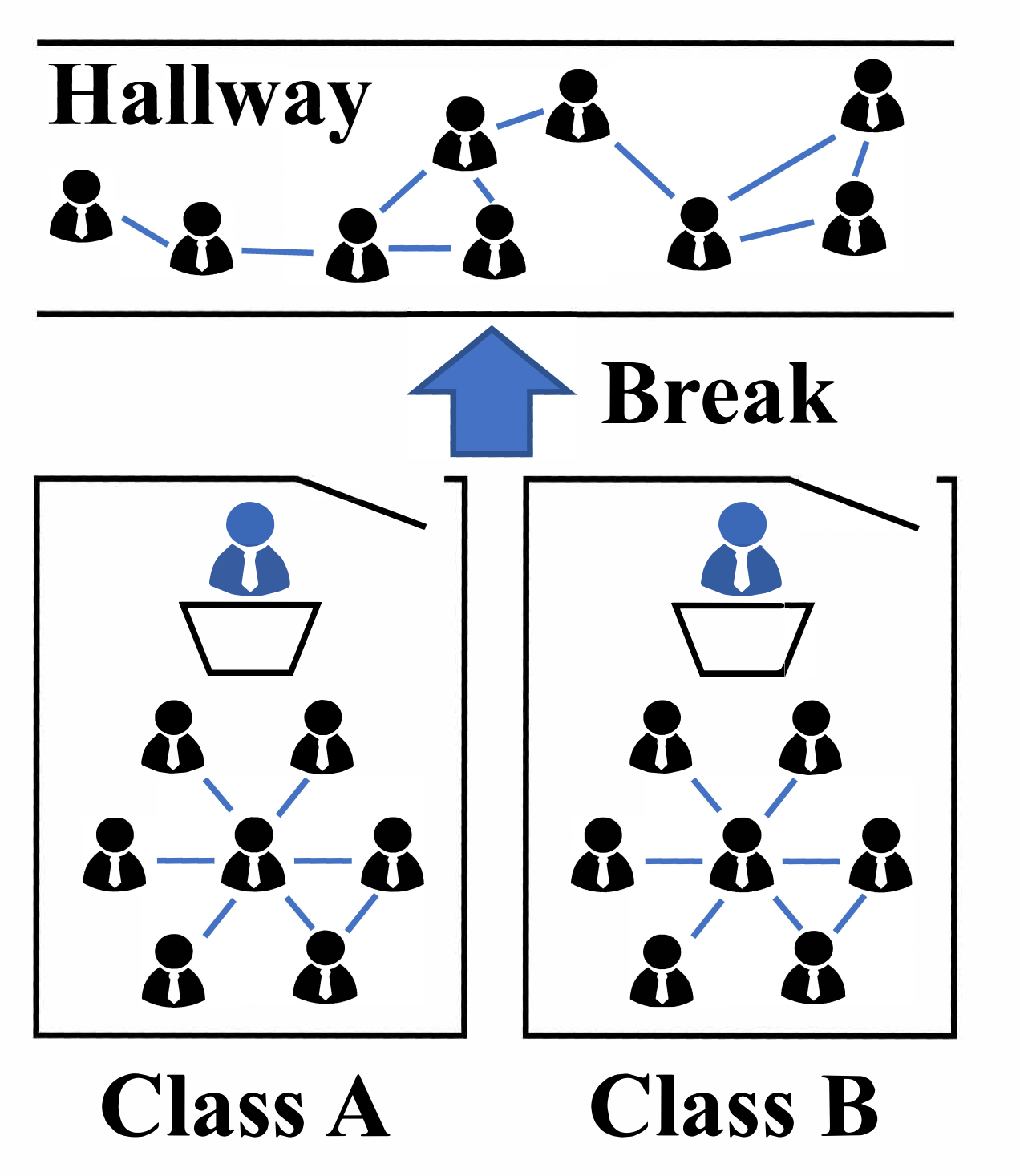}
    \end{subfigure}
    \hfill
    \begin{subfigure}[c]{0.69\linewidth}
        \centering
        \includegraphics[width=\linewidth]{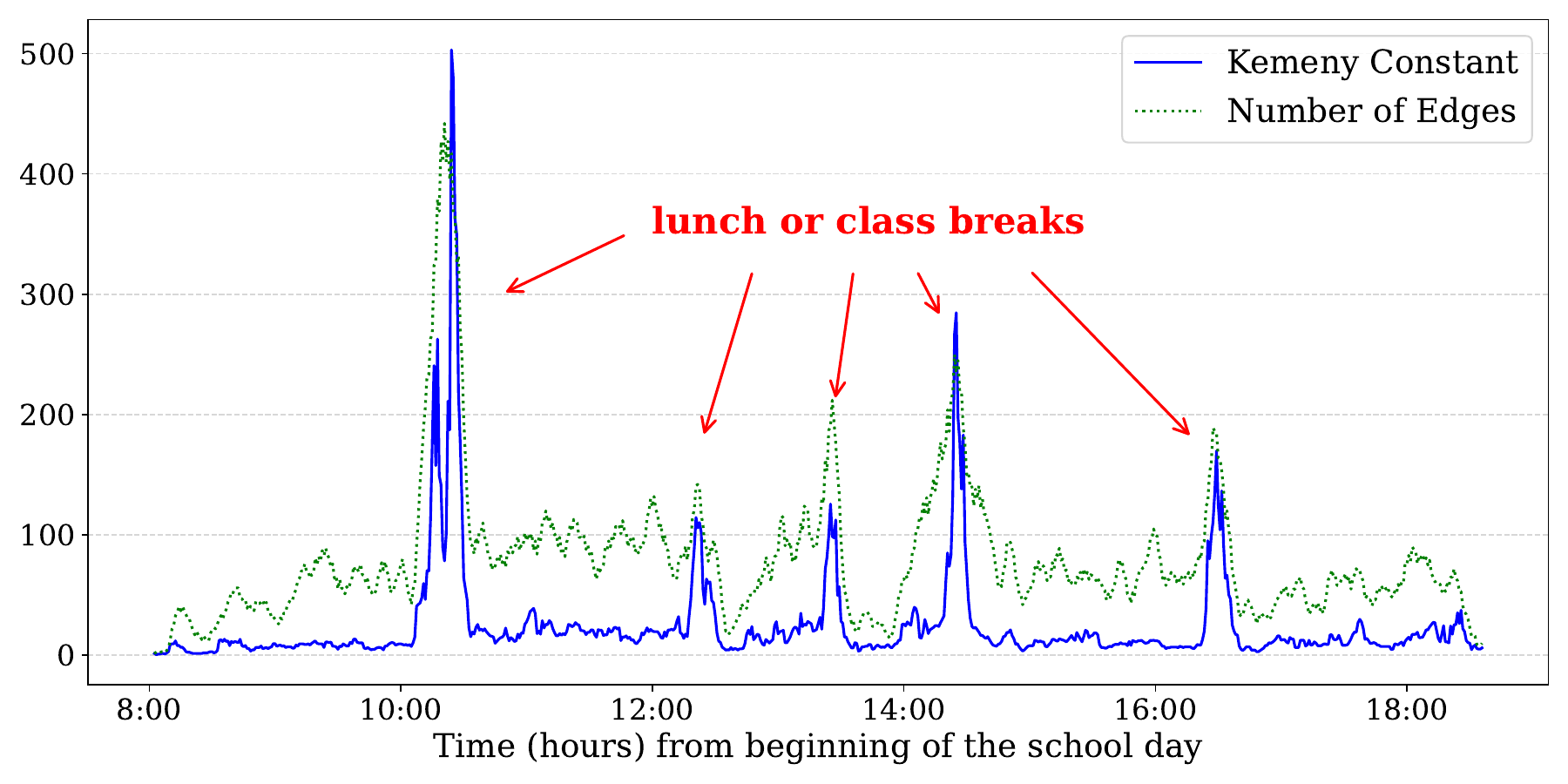}
    \end{subfigure}
    \vspace{-0.2cm}
    \caption{\color{RM}Evolution of Kemeny constant over time}
    \Description{Evolution of Kemeny Constant over Time in the \textit{socialpattern-highschool} Dataset}
    \label{fig:case_highschool}
    \vspace{-0.5cm}
\end{figure}

\section{Related Work}

\stitle{Random Walk Computation.} 
Random walks have long served as a fundamental tool in graph analysis, supporting a wide range of applications such as recommendation~\cite{rw_recommendation_paudel2021random, rw_recommendation2_xia2016scientific, vldb_app_ht_reccomandation_yin2012challenging}, network embedding~\cite{rw_embedding1_perozzi2014deepwalk, rw_embedding2_grover2016node2vec, vldb_rw_embedding_zhao2013embeddability}, and complex network analysis~\cite{rw_analysis_sarkar2011random, rw_analysis2_xia2019random}. Various random walk-based metrics, personalized PageRank (PPR)~\cite{vldb_ppr_wang2022edge, vldb_ppr_embedding_yang13homogeneous, vldb_ppr_liu2024bird}  and effective resistance (ER)~\cite{er1_bollobas2013modern, res1_tetali1991random, er2_peng2021local, er3_yang2023efficient} have received a lot of attention. For both PPR and ER, local algorithms such as \push are commonly used in recent studies~\cite{pr_push_andersen2006local, vldb_ppr_wang2022edge, res4_liao2023efficient}, as they operate efficiently on a small portion of the graph without requiring full access to its structure. Although \kc is also inherently related to random walks, it represents a fundamentally global measure, distinct from these problems, which makes existing techniques developed for PPR and ER unsuitable for direct application to \kc computation.

\stitle{Algorithms on Dynamic Graphs.} Dynamic graphs, also known as evolving graphs or graphs in an incremental setting, naturally arise in real-world applications where graph data is continuously updated. The design of algorithms for various graph problems in dynamic graphs has been an active research area for decades. In particular, dynamic algorithms for random walk-based metrics, such as Personalized PageRank, have been extensively studied~\cite{pr_push2_zhang2016approximate, pr_push1_zheng2022instant, pr_rw1_hou2023personalized, pr_pi1_li2023everything, ppr_dy_yang2024efficient}. Most dynamic algorithms build upon their corresponding static versions. For example, \LazyForward~\cite{pr_push2_zhang2016approximate} and \TrackingPPR~\cite{pr_push3_ohsaka2015efficient} were both developed as extensions of the push algorithm~\cite{pr_push_andersen2006local}. Similarly, various index update methods for random walks~\cite{pr_rw1_hou2023personalized, pr_rw3_mo2021agenda, pr_rw2_bahmani2010fast} and power iteration-based dynamic algorithms~\cite{pr_pi1_li2023everything, pr_pi2_yoon2018fast} have been explored. Recently, Liao et al. investigated the connection between USTs and both Personalized PageRank~\cite{ppr_tree1_liao2022efficient, ppr_tree2_liao2023efficient} and effective resistance~\cite{res4_liao2023efficient} successively. These methods provide a foundation for extending our index maintenance strategy, as shown in Section~\ref{sec:dynamic algorithm}, to these problems, facilitating the development of efficient dynamic algorithms.  

\section{Conclusion}
In this work, we study the problem of approximating Kemeny constant problem for both static and dynamic graphs. We propose two novel formulas of Kemeny constant and design an biased estimator based on spanning trees and 2-forests. For static graphs, we develop a sampling-based algorithm with stronger theoretical guarantees. The proposed method outperforms \sota approaches in both efficiency and accuracy on real-world datasets. For dynamic graphs, we introduce two sample maintenance strategies that efficiently preserve the correctness of samples instead of recomputing from scratch for each update. Both two methods are faster than static algorithms, with only a slight loss in precision. Extensive experiments on real-world graphs demonstrate the effectiveness and efficiency of our solutions.

\balance


\end{document}
\endinput